\let\originalleft\left
\let\originalright\right
\renewcommand{\left}{\mathopen{}\mathclose\bgroup\originalleft}
\renewcommand{\right}{\aftergroup\egroup\originalright}
\newcommand{\ceil}[1] {\left\lceil{#1}\right\rceil}
\newcommand{\floor}[1] {\left\lfloor{#1}\right\rfloor}
\newcommand{\size}[1] {\left\lvert{#1}\right\rvert}
\newcommand{\set}[1]{\{#1\}}
\newcommand{\half}[1]{\frac{#1}{2}}
\newcommand{\prob}[1]{\mbox{\rm Prob}\left[\, #1\,\right]}
\DeclareMathOperator*{\argmax}{\arg\!\max}
\DeclareDocumentCommand{\Exp}{ m g }{
    {%
    \mathrm{E}
        \IfNoValueF {#2} {_{#2}}
    \left[#1\right]
    }%
}
\newcommand{\theoremC}{3}
\newcommand{\gammaUP}{\gamma^*}
\newcommand{\gammaUPVal}{0.003176}
\newcommand{\theoremCC}{E_1}
\newcommand{\theoremCCVal}{7.4}
\newcommand{\ETwoVal}{5}
\newcommand{\ErrorConstSum}{12.4}
\begin{document}

\title{The Temp Secretary Problem}

\author{
    Amos Fiat\inst{1}, Ilia Gorelik\inst{1}, Haim Kaplan\inst{1}, Slava Novgorodov\inst{1}
}
\institute{Tel Aviv University, Israel,\\
\email{fiat@tau.ac.il, iliagore@post.tau.ac.il, haimk@tau.ac.il, slavanov@post.tau.ac.il}\\
Research supported by The Israeli Centers of Research Excellence (I-CORE) program (Center No. 4/11), and by ISF grant no.\ 822/10.
}

%\begin{bottomstuff}
%Research supported by The Israeli Centers of Research Excellence (I-CORE) program (Center No. 4/11), and by ISF grant no.\ 822/10.
%Author's Emails: fiat@tau.ac.il, iliagore@post.tau.ac.il, haimk@tau.ac.il, slavanov@post.tau.ac.il.
%\end{bottomstuff}

\maketitle

\begin{abstract}
%\epigraph{{\it What I need's a temporary, temporary secretary}}{---  Paul McCartney, 1980 (Album McCartney II)}
We consider a generalization of the secretary problem where
contracts are temporary, and for a fixed duration $\gamma$. This models online hiring of temporary  employees, or online auctions for re-usable resources. The problem is related to the question of finding a large independent set in a random unit interval graph.
%Consider a firm employing temporary employees, where candidates interview at random times drawn from a known prior. Hiring may not be preemptive, and is subject to capacity and budget constraints:  \begin{enumerate} \item Employees cannot be fired, they leave when their contract runs out (no preemption). \item At no point in time can there be more than $d$ employees employed (capacity). \item No more than $k$ employees can be hired overall (budget). \end{enumerate}
\end{abstract}

%\end{titlepage}

\section{Introduction}
This paper deals with a variant of the secretary model, where contracts are temporary.
{\sl E.g.}, employees are hired for short-term contracts, or re-usable resources are rented out repeatedly, etc. If an item is chosen, it ``exists" for a fixed length of time and then disappears.

Motivation for this problem are web sites such as Airbnb and oDesk. Airbnb offers short term rentals in competition with classic hotels. A homeowner posts a rental price and customers either accept it or not. oDesk is a venture capitalizing on freelance employees. A firm seeking short term freelance employees offers a salary and performs interviews of such employees before choosing one of them.

We consider an online setting where items have values determined by an adversary, (``no information" as in the standard model \cite{Freeman:1983}), combined with stochastic arrival times that come from a prior known distribution (in contrast to the random permutation assumption and as done in \cite{Karlin:1962,bruss1984,Gallego:1994}). Unlike much of the previous work on online auctions with stochastic arrival/departure timing (\cite{Hajiaghayi:2004}), we do not consider the issue of incentive compatibility with respect to timing, and assume that arrival time cannot be misrepresented.

%The temporal nature of the problem only makes sense in the stochastic arrival model, and not in the (more widely used) random permutation model where there is no concept of temporality. Events occur at some random time in the time interval $[0,1)$, chosen from some known distribution on arrival times.

The temp secretary problem can be  viewed
\begin{enumerate} \item As a problem related to hiring temporary workers of varying quality subject to workplace capacity constraints. There is some known prior $F(x)=\int_0^x f(z) dz$ on the arrival times of job seekers, some maximal capacity, $d$, on the number of such workers that can be employed simultaneously, and a bound $k$ on the total number than can be hired over time. If hired, workers cannot be fired before their contract is up.
\item Alternately, one can view the temp secretary problem as dealing with social welfare maximization in the context of rentals. Customers arrive according to some distribution. A firm with capacity $d$ can rent out up to $d$ boats simultaneously, possibly constrained to no more than $k$ rentals overall. The firm publishes a rental price, which may change over time {\sl after} a customer is serviced.  A customer will choose to rent if her value for the service is at least the current posted price. Such a mechanism is inherently dominant strategy truthful, with the caveat that we make the common assumption that customers reveal their true values in any case.
\end{enumerate}

%Using the terminology of hiring temp employees:
%\begin{itemize}
%    \item The employment period\footnote{It is trivial to generalize our results to arbitrary employment periods using the now cliche ``classify and randomly select" paradigm, at the cost of a $\log(r)$ factor in the competitive ratio, where $r$ is the ratio between longest and shortest times, see \cite{Awerbuch:1994}.} is some fixed $\gamma< 1$.
%    \item Candidates appear at IID times $t\in [0,1)$ from some known prior with density function $f$.
%    \item A hiring decision with respect to a temp employee is irreversible and must be made immediately at the end of the interview. Interviews take negligible time.
%    \item When the workplace capacity has been exceeded, no new temps can be hired.
%    \item The set of values to appear is determined by an adversary (but appear at a random timing).
%    \item We also consider an additional ``budget" constraint where the total number of temps hired, over time, is restricted to be no more than $k$. As a special case, at the limit of employment periods $\gamma\rightarrow 0$, the $k$-secretary problem of \cite{Kleinberg} arises as a special case, and we obtain the same competitive ratio\footnote{In particular, we supply a full proof of the algorithm of \cite{Kleinberg}, which only appears as a short abstract, including all details about extremal cases not addressed in the original formulation of the algorithm.}.
%\end{itemize}

We give two algorithms, both of which are quite simple and offer
posted prices for rental that vary over time. Assuming that the time
of arrival cannot be manipulated, this means that our
algorithms are dominant strategy incentive compatible.

%For a non-uniform prior on arrival times, one expects that the
%posted rental price should increase when the expected number of
%customers is large and that the posted rental price should decrease
%when the expected number of customers is low. One may consider this
%as a study in market segmentation in the time domain, e.g., high
%airfares in peak travel periods.

For rental duration $\gamma$, capacity $d=1$, no budget restrictions, and arrival times from an arbitrary prior, the {\sl time-slice algorithm} gives a
$\frac{1}{2e}$ competitive ratio. For arbitrary  $d$ the competitive ratio of the time-slice algorithm is at least $(1/2) \cdot (1-5/\sqrt{d})$. This can be generalized to more complex settings, see Table \ref{tab:results3}.
The time slice algorithm  divides time into slices of length
$\gamma$. It randomly decides if to work on even or odd slices.
Within each slice it uses a variant of some other secretary problem
({\sl E.g.}, \cite{Lachish14}, \cite{Babaioff07-knapsack}, \cite{Kleinberg}) except that it keeps track of the
cumulative distribution function rather than the number of
secretaries.

The more technically challenging {\sl Charter algorithm} is strongly motivated by the $k$-secretary algorithm of \cite{Kleinberg}. For capacity $d$,  employment period $\gamma$, and budget $d\leq k\leq d/\gamma$ (the only relevant values), the Charter algorithm does the following:
\begin{itemize}
\item Recursively run the algorithm with parameters $\gamma, \lfloor k/2 \rfloor$  on all bids that arrive during the period $[0,1/2)$.
\item
Take the bid of rank $\ceil{k/2}$ that appeared during the period
$[0,1/2)$, if such rank exists and set a threshold $T$ to be it's
value. If no such rank exists set the threshold $T$ to be zero.
    \item
    Greedily accept all items that appear during the period $[1/2,1)$
that have value at least $T$ ---
  subject to not exceeding capacity ($d$) or budget ($k$) constraints.
  \end{itemize}
  For $d=1$ the competitive ratio of the Charter algorithm is at least $$\frac{1}{1+k\gamma}\left(1-\frac{5}{\sqrt{k}}-\theoremCCVal\sqrt{\gamma \ln(1/\gamma)}\right).$$ Two special cases of interest are $k=1/\gamma$ (no budget restriction), in which case the expression above is at least $\frac{1}{2}\left( 1-\ErrorConstSum \sqrt{\gamma \ln(1/\gamma)}\right)$. We also show an upper bound of $1/2 +\gamma/2$ for $\gamma>0$. As $\gamma$ approaches zero the two bounds converge to $1/2$. Another case of interest is when $k$ is fixed and $\gamma$ approaches zero in which this becomes the guarantee given by Kleinberg's $k$-secretary algorithm.

  For arbitrary $d$ the competitive ratio of the Charter algorithm is at least $$1-\Theta\left(\frac{\sqrt{\ln d}}{\sqrt{d}}\right) - \Theta\left(\gamma\log{(1/\gamma)}\right).$$

  We remark that neither the time slice algorithm nor the Charter algorithm requires prior knowledge of $n$, the number of items due to arrive.

At the core of the analysis of the Charter algorithm we prove a
bound on the expected size of the maximum independent set of a
random unit interval graph. (See Table \ref{tab:results1}). In this random graph model we draw $n$
intervals, each of length $\gamma$, by drawing their left endpoints
uniformly in the interval $[0,1)$. We prove that the expected size
of a maximum independent set in such a graph is about
$n/(1+n\gamma)$. We say that a set of length $\gamma$ segments that do not overlap is $\gamma$-independent. Similarly, a capacity $d$ $\gamma$-independent set allows no more than $d$ segments overlapping at any point.

Note that if $\gamma = 1/n$ then this expected
size is about $1/2$. This is intuitively the right bound as  each
interval in the maximum independent set rules out on average one
other interval from being in the maximum independent set.

We show that a random unit interval graph with $n$ vertices has a capacity $d$ $\gamma$-independent subset of expected size at least $\min(n,d/\gamma)(1-\Theta({\sqrt{\ln{d}}}/{\sqrt{d}}))$. We also show that when $n=d/\gamma$ the expected size of the maximum capacity $d$ $\gamma$-independent subset is no more than $n(1-\Theta({1}/{\sqrt{d}}))$. These
results may be of independent interest.

% Tables \ref{tab:results1}, \ref{tab:results2} and \ref{tab:results3} summarize our result in various sceneries.

%\paragraph{Outline of the paper:}
%\begin{itemize}
%\item Section \ref{sec:def} contains basic definitions.
%\item Section \ref{sec:simple} presents the time slicing technique.
%\item Section \ref{sec:charterUniform} presents the Charter algorithm and gives an outline of its analysis.
%\item Section \ref{sec:upperTempSec} gives an upper bound for the temp secretary problem with uniform arrival times and with no budget restriction.
%\item Section \ref{sec:independent} gives an outline of our analysis of the expected size of a maximum $\gamma$-independent set.
%\item Appendix \ref{sec:technical} gives the details of the analysis of the competitive ratio of the Charter algorithm.
%\item Appendix \ref{sec:half_details} contains a detailed analysis of of the expected size of a maximum $\gamma$-independent set.
%\item Appendix \ref{sec:d_width_uppper} proves the bounds on the expected maximum size of a capacity $d$ $\gamma$-independent set.
%\end{itemize}

\smallskip

\noindent
{\bf Related work:}
Worst case competitive analysis of interval scheduling has a long history, e.g., \cite{Woeginger19945,Lipton:1994:OIS}. This is the problem of choosing a set of non-overlapping intervals with various target functions, typically, the sum of values.

\cite{Hajiaghayi:reuse05} introduce the question of auctions for reusable goods. They consider a worst case mechanism design setting. Their main goal is addressing the issue of time incentive compatibility, for some restricted set of misrepresentations.
%In their terminology, our setting is ``asynchronous"\footnote{For $O(1)$ competitive ratios, service takes one time unit, in the ``synchronous" setting agents arrive (and depart) at integral times.} which means that agents may arrive at arbitrary times. In the asynchronous setting, the positive results in \cite{Hajiaghayi:reuse05} assume that one can revoke an allocation which is debateable ({\sl i.e.}, unilaterally break a  contract, and with no associated penalty).
%
%We do not address incentive compatibility with respect to misreporting arrival/departure. In fact, we assume that decisions have to be made immediately upon arrival. In our setting, arrivals are not determined by an adversary but are generated stochastically. We do not allow preemption.

The secretary problem is arguably due to Johannes Kepler (1571-1630),
% who
%%was a man of prodigious achievements, he
%%is credited with seminal contributions in physics and mathematics:
%%understanding planetary motion, modern optics, integral calculus,
%%tides and the moon (disputed by Galileo Galilei), rotation of the
%%sun about it's axis, the mathematics of logarithms, and he produced
%%the world's first eyeglasses.
%invented the secretary algorithm so as to choose his bride, interviewing eleven
%candidates over a period of two years \cite{ferguson98}. In 1875 Arthur Cayley considered the problem of choosing the best item amongst $n$ items, each of which is drawn from a known distribution and appears in random order \cite{Freeman:1983}. This is called the ``full information" version.
and has a great many variants, a survey by \cite{Freeman:1983} contains some 70 references. The ``permutation" model is that items arrive in some random order, all $n!$ permutations equally likely.
Maximizing the probability that the best item is chosen, when the items appear in random order, only comparisons can be made, and the number of items is known in advance, was solved by \cite{Lindley:1961} and by \cite{Dynkin:1963}.
A great many other variants are described in (\cite{Freeman:1983,Dinitz:2013}), differing in the number of items to be chosen, the target function to be maximized, taking discounting into account, etc.

%The standard secretary problem is called a ``no information" model because the values of the items come from an arbitrary distribution, the ``full information" model is when the values come from a known distribution.

An alternative to the random permutation model is the stochastic arrival model, introduced by Karlin \cite{Karlin:1962}
 in a ``full information" (known distribution on values) setting.
 %, where customers arrive at the point of sale according to a given stochastic process and make offers to purchase items. The goal here is to maximize revenue.
Bruss \cite{bruss1984} subsequently studied the stochastic arrival model in a no-information model (nothing is known about the distribution of values). Recently, \cite{FNS11} made use of the stochastic arrival model as a tool for the analysis of algorithms in the permutation model.

Much of the recent interest in the secretary problem is due to it's connection to incentive compatible auctions and posted prices \cite{Hajiaghayi:2004,Kleinberg,Babaioff07-knapsack,Babaioff08,Babaioff:2009:SPW,devanur2009adwords}.

Most directly relevant to this paper is the $k$-secretary algorithm by R. Kleinberg
\cite{Kleinberg}. Constrained to picking no more than $k$ secretaries, the total value of the secretaries picked by this algorithm is at least a $(1-\frac{5}{\sqrt{k}})$ of the value of the best  $k$ secretaries.

 Babaioff {\sl et al.} \cite{Babaioff07-knapsack} introduced the {\em knapsack secretary problem} in which every secretary has some weight and a
value, and one seeks to maximize the sum of values subject to a
upper bound on the total weight. They give a $1/(10e)$ competitive algorithm for this problem. (Note that if weights are one then this becomes the $k$-secretary problem).
 The Matroid
secretary problem, introduced by Babaioff et al. \cite{DBLP:conf/soda/BabaioffIK07}, constrains the set of secretaries picked to be
an independent set in some underlying Matroid.
Subsequent results for arbitrary Matriods are given in \cite{Chakraborty:2012,Lachish14,FeldmanSZ15}.

 %and special cases in \cite{DBLP:conf/soda/BabaioffIK07,DimitrovP12}.

%They gave a $1/O(\log
%k )$  competitive algorithm against the max-weight independent set
%of elements, where $k$ is the rank of the underlying matriod. This
%was improved in \cite{Chakraborty:2012}
%and more recently in \cite{Lachish14,FeldmanSZ15}. Improved competitive
%algorithms for special matriods have also been obtained
%\cite{DBLP:conf/soda/BabaioffIK07,DimitrovP12}.

Another generalization of the secretary problem is the online
maximum bipartite matching problem. See \cite{KorulaP09,KesselheimRTV13}.
Secretary models with full information or partial information (priors on values) appear in \cite{Badanidiyuru:2012:LBP} and \cite{Singer:2013:PMC}. This was in the context of submodular procurement auctions (\cite{Badanidiyuru:2012:LBP}) and budget feasible procurement (\cite{Singer:2013:PMC}).
Other papers considering a stochastic setting include \cite{KesselheimTRV14,HKSV14}.

In our analysis, we give a detailed and quite technical lower bound on the
size of the maximum independent set in a random unit interval graph
(produced by the greedy algorithm). Independent sets in other random interval graph models were previously studied in \cite{winkler1990,winkler95,CPC:91295}.
%
%shows that the size of a maximum independent set in a random interval
%graphs where the  intervals are obtained by randomly pairing the
%integers $1,\ldots,2n$ is $\Theta(\sqrt{n})$. This model has been
%further studied in \cite{winkler95,CPC:91295}. The seems to be no
%previous work on the size of the maximum independent set in a unit
%interval graph.

\section{Formal Statement of Problems Considered} \label{sec:def}
\newcommand{\opt}{\mbox{\rm Opt}}

Each item $x$ has a value $v(x)$, we assume that for all $x\neq y$, $v(x)\neq v(y)$ by consistent tie breaking, and we say that $x>y$ iff $v(x)>v(y)$. Given a set of items $X$, define $v(X)=\sum_{x\in X} v(x)$ and $T_k(X) = \max_{T\subseteq X, |T|\leq k} v(T)$.

Given a set $X$ and a density distribution function $f$ defined on $[0,1)$, let $\theta_f: X \mapsto [0,1)$
be a random mapping where $\theta_f(x)$ is drawn independently from the distribution $f$. The function $\theta_f$ is called a {\sl stochastic arrival} function, and we interpret $\theta_f(x)$, $x\in X$, to be the time at which item $x$ arrives.
For the special case in which $f$ is uniform we refer to $\theta_f$ as $\theta$.

In the problems we consider, the items arrive in increasing order of $\theta_f$. If $\theta_f(x)=\theta_f(y)$ the relative
order of arrival of $x$ and $y$ is arbitrary.
An online algorithm may select an item only upon arrival. If an item $x$ was selected, we say that the online algorithm
{\em holds} $x$ for $\gamma$ time following $\theta_f(x)$.

 An online algorithm $A$ for the temp secretary problem may
hold at most one item at any time and may select at most $k$ items in total. We refer to  $k$ as the {\em budget} of $A$.
 The goal of the algorithm is to maximize the expected total value of the items that it selects. We denote by $A(X,\theta_f)$  the set
 of items chosen by algorithm $A$ on items in $X$ appearing according to stochastic arrival function $\theta_f$.

The set of the arrival times of the items selected by an algorithm for the temp secretary problem is said to be \emph{$\gamma $-independent}.
Formally, a set $S \subset [0,1)$ is said to be \emph{$\gamma $-independent}
if for all $t_1,t_2\in S$, $t_1\not=t_2$ we have that $|t_1-t_2|\geq \gamma$.

Given $\gamma>0$, a budget $k$, a set $X$ of items, and a mapping $\theta_f : X\mapsto [0,1)$ we define $\opt(X, \theta_f)$
to be a $\gamma$-independent set $S$, $|S|\leq k$, that maximizes the sum of values.

Given rental period $\gamma>0$, distribution $f$, and budget $k$, the competitive ratio of an online algorithm $A$ is defined to be
\begin{equation}\label{def:comp_ration}\inf_X{\frac{\Exp{v(A(X, \theta_f))}{\theta_f : X\mapsto [0,1)}}{\Exp{v(\opt(X, \theta_f))}{\theta_f : X\mapsto [0,1)}}}.\end{equation}
The competitive ratio of the temp secretary problem is the supremum over all algorithms $A$ of the competitive ratio of $A$.

Note that when $\gamma \rightarrow 0$, the
the temp secretary problem reduces to Kleinberg's $k$-secretary problem.

We extend the $\gamma$-temp secretary problem by allowing the algorithm to hold at most $d$ items at any time.
Another extension we consider is the {\em knapsack temp secretary problem} where each  item has a weight and
we require the set held by the algorithm at any time to be of total weight at most $W$.
Also, we define the {\em Matroid  temp secretary problem} where one restricts the set of items held by the algorithm at any time
to be an independent set in some Matroid $M$.

More generally, one can define a temp secretary problem with respect to some arbitrary predicate $P$ that holds on the set of items held by an online algorithm at all times $t$.
This framework includes all of the variants above.  The optimal solution with respect to $P$ is also well defined.

\section{The time-slice Algorithm.}\label{sec:simple}

In this section we describe a simple time slicing technique.
This gives a reduction from  temp secretary problems, with arbitrary known prior distribution on arrival times, to the ``usual" continuous setting where
secretaries arrive over time, do not depart if hired, and the distribution on arrival times is uniform. The reduction is valid for many variants of the temp secretary problem, including the Matroid secretary problem, and the knapsack secretary
problem. We remark that although the Matriod and Knapsack algorithms are stated in the random permutation model, they can be replaced with analogous  algorithms in the continuous time model and can therefore be used in our context.

We demonstrate this technique by applying it to the classical secretary problem (hire the best secretary). We
obtain an algorithm which we call
$Slice_\gamma$ for the temp secretary problem with arbitrary prior distribution on arrival times that is
 $O(1)$
competitive.

%The time-slice algorithm is initiated with parameters $\gamma$ which is the hiring period (we assume for simplicity that $1/2\gamma$ is an integer),  and a density function $f$. For a set $S$ of items (each item $x\in S$ has a value $v(x)$ and consistent tie breaking between any two items), let $\theta: S \mapsto [0,1)$ be a mapping such that for any $x\in S$, $\theta(x)$ is sampled independently from $f$.
%The algorithm receives each item $x$ at time $\theta(x)$.

Consider the $1/2\gamma$ time intervals ({\sl i.e.} slices) $I_j=[2\gamma j ,~ 2\gamma (j+1))$,  $0 \leq j \leq 1/(2\gamma)-1$. We  split every such
interval into two, $I_j^\ell=[2\gamma j ,~ 2\gamma j + \gamma)$,
$I_j^r = [2 \gamma j +\gamma,~ 2 \gamma (j+1))$.\footnote{For simplicity we assume that $1/(2\gamma)$ is an integer.}

Initially, we flip a fair coin and with probability $1/2$ decide to
pick points only from the left halves ($I_j^\ell$'s)
or only from the right halves ($I_j^r$'s). In each such interval we pick at most
one item by running the following modification of the continuous time secretary
algorithm.

The continuous time secretary algorithm \cite{FNS11} observes the items arriving before time $1/e$, sets the largest value of an observed item as
 a threshold, and then chooses
the first item (that arrives following time $1/e$) of value greater than the threshold.
 The modified continuous time secretary algorithm  observes items  as long as the cumulative distribution function of the current time is less than $1/e$, then it sets the largest value of an observed item as
 a threshold
compute a threshold, and  choose the next
item of value larger than the threshold.
% A full description of the algorithm is presented in Algorithm~\ref{algo:slice}.

It is clear that any two points picked by this algorithm have
arrival times separated by at least $\gamma$.

\begin{theorem}\label{THEOREM:SIMPLE}
The algorithm $Slice_\gamma$ is $1/(2e)$ competitive.
\end{theorem}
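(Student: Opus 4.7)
The plan is to lower bound the expected value of $Slice_\gamma$ by a slice-by-slice charging argument against $\opt$. Two observations drive the analysis: (i) since $\opt$ is $\gamma$-independent and each half-slice $I_j^\ell$ or $I_j^r$ has length exactly $\gamma$, the set $\opt$ contains at most one item from each half-slice; and (ii) running the modified continuous-time $1/e$-algorithm inside a single half-slice, with the arrival density $f$ renormalized to that slice, picks an item whose expected value is at least $1/e$ times the maximum value of any item whose arrival time lands in the slice --- this is the standard continuous-time secretary guarantee of \cite{FNS11} after linearly rescaling the slice to $[0,1)$ via its conditional CDF.

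Partition $\opt$ into $\opt^\ell$ and $\opt^r$ according to whether each item arrives in some left half $I_j^\ell$ or some right half $I_j^r$, so $v(\opt) = v(\opt^\ell) + v(\opt^r)$. Condition on the initial coin flip selecting the left halves. The algorithm then runs an independent copy of the modified $1/e$-algorithm inside each $I_j^\ell$, and by~(ii),
$$\Exp{v\bigl(Slice_\gamma(X,\theta_f)\cap I_j^\ell\bigr)} \ge \tfrac{1}{e}\,\Exp{\max_{x:\,\theta_f(x)\in I_j^\ell} v(x)} \ge \tfrac{1}{e}\,\Exp{v(\opt\cap I_j^\ell)},$$
where the second inequality uses (i) pointwise: any item in $\opt \cap I_j^\ell$ is itself one of the items with arrival time in $I_j^\ell$. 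Items picked in distinct left half-slices are automatically separated by at least $\gamma$, so the union over $j$ is $\gamma$-independent. Summing over $j$ yields an expected algorithm value of at least $v(\opt^\ell)/e$ conditional on choosing the left halves, and symmetrically at least $v(\opt^r)/e$ conditional on choosing the right halves.

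Combining with the fair coin flip,
$$\Exp{v(Slice_\gamma(X,\theta_f))} \;\ge\; \tfrac{1}{2}\cdot\tfrac{v(\opt^\ell)}{e}+\tfrac{1}{2}\cdot\tfrac{v(\opt^r)}{e}\;=\;\frac{v(\opt)}{2e},$$
which is the claimed competitive ratio. The only step requiring real care is justifying (ii): the number of arrivals inside a given half-slice is itself random and their within-slice density is $f/\int_{I_j^\ell} f$, but the FNS11 argument proceeds item-by-item using only the independence of arrival times from a known density, so after linearly rescaling the half-slice to $[0,1)$ we are exactly in their setting and the $1/e$ bound transfers verbatim.
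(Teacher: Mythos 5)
Your proposal is correct and follows essentially the same route as the paper's proof: partition $\opt$ by left/right half-slices, note that $\gamma$-independence forces at most one $\opt$ item per half-slice of length $\gamma$, apply the continuous-time $1/e$ secretary guarantee within each active half-slice, and average over the fair coin flip to get $\frac{1}{2e}$. The only difference is that you spell out the rescaling/renormalization justification for the within-slice $1/e$ bound, which the paper takes for granted.
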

\begin{proof}
The analysis is as follows. Fix the mapping of items to each of the
left intervals $I_j^\ell$'s and to each of the right intervals
$I_j^r$'s (leaving free the assignment of items to specific arrival
times within their the intervals they are assigned to). Let
$OPT^\ell$ ($OPT^r$) be the sum of the items of maximum value over
all intervals $I_j^\ell$ ($I_j^r$). Let $OPT$ be the average optimal
value conditioned on this mapping of items to intervals. Clearly,
\begin{equation}\label{eq:sumOps}
OPT^\ell + OPT^r \geq OPT.
\end{equation}

For any interval $I_j$'s ($I_j^\ell$'s) $Slice_\gamma$ gain at least $1/e$ over the top value in the interval
conditioned on the event that $Slice_\gamma$ doesn't ignore this interval, this happens with probability $1/2$.
Therefore the expected sum of values achieved by $Slice_\gamma$ is at least
\begin{equation}\label{eq:slicePreLast}
\frac{1}{2} \cdot \frac{1}{e}OPT^\ell  + \frac{1}{2}\cdot \frac{1}{e} OPT^r\ .
\end{equation}
Substitution (\ref{eq:sumOps}) into (\ref{eq:slicePreLast}) we get the lemma.
\qed\end{proof}

Appropriately choosing times (rather than number of elements) as a
function of the prior distribution allows us to do the same for
other variants of the secretary problem, the Knapsack (achieving a competitive ratio of $\frac{1}{2}\cdot \frac{1}{10e}$, see \cite{Babaioff07-knapsack}) and Matriod ($O(\ln\ln\rho)$ when $\rho$ is the rank of the Matroid, see \cite{Lachish14,FeldmanSZ15}).

\section{Improved results for the temp secretary problem for the uniform arrival distribution}\label{sec:charterUniform}

In this section we give an improved algorithm, referred as the charter algorithm $C_{k,\gamma}$, for the temp secretary problem with uniform arrival times and capacity $1$ (at most one secretary can be hired at any time).

As it is never the case that more than $1/\gamma$ items can be selected, setting $k=\ceil{1/\gamma}$ effectively removes the budget constraint.
Note that $C_{k,0}$ is Kleinberg's algorithm for the $k$-secretary
problem, with some missing details added to the description.

To analyze the charter algorithm we establish
a lower bound on the
expected size of the maximum $\gamma$-independent subset of a set of
uniformly random points in $[0,1)$.
We apply this lower bound to  the subset of
the items that Kleinberg's algorithm selects.

\subsection{The temp secretary algorithm, $C_{k,\gamma}$: a competitive ratio of $~1/(1+k\gamma)$.}
\label{sec:charter}

This charter algorithm, $C_{k,\gamma}$ gets parameters $k$ (the maximal
number of rentals allowed) and $\gamma$ (the rental period) as is described in detail in Algorithm \ref{algo:tempalg}. As the
entire period is normalized to $[0,1)$, having  $k>\ceil{1/\gamma}$ is irrelevant. Thus, we assume that $k\leq
\ceil{1/\gamma}$.\footnote{To simplify the presentation we shall assume the in sequel that $k\leq
1/\gamma$.}

We show that $C_{k,\gamma}(X)$ gains in expectation about $1/(1+k\gamma)$ of the
top $k$ values of $X$, which implies that the competitive ratio (see definition (\ref{def:comp_ration})) of
$C_{k, \gamma}$ is at least about $1/(1+k\gamma)$.

Note that for $k=\ceil{1/\gamma}$,  $C_{k, \gamma}$ has a competitive ratio close to $1/2$,
while for $\gamma = 0$, $C_{k, \gamma}$ has a competitive ratio close to $1$.

%
% The online algorithm is recursive. It deals with the set of items, where item $x$ have value $v(x)$ and arrives at time $\theta(x)$ and $\theta$ maps items to uniform random arrival times over $[0,1)$. We say $x>y$ if $v(x)\geq v(y)$ (and some consistent tie based upon the identity of the items).
\begin{algorithm}[h]\label{algo:tempalg}
    \caption{The Charter Algorithm $C_{k,\gamma}$.}
    \DontPrintSemicolon
        \uIf{$k=1$}{
            \tcc{Use the ``continuous secretary'' algorithm \cite{FNS11}:}
            Let $x$ be the largest item to arrive by time $1/e$ (if no item arrives by time $1/e$ --- let $x$ be the absolute zero, an item smaller than all other items).\;
            $C_{k,\gamma}$ accepts the first item $y$, $y>x$, that arrives after time $1/e$ (if any) \;
        }
        \Else{
            \tcc{Process the items scheduled during the time interval $[0,\,1/2)$}
            Initiate a recursive copy of the algorithm, $C'=C_{\floor{k/2},2\gamma}$.\;
            $x \gets $ next element \tcp*{If no further items arrive $x\gets \emptyset$}
            \While{$x \neq  \emptyset$ ~AND~ $\theta(x) < 1/2$} {
                Simulate $C'$ with input $x$ and modified schedule $\theta'(x) = 2 \theta(x)$.\;
                \If{$C'$ accepts $x$}{
                    $C_{k,\gamma}$ accepts $x$\;
                }
                $x \gets $ next element \tcp*{If no further items arrive, $x\gets \emptyset$}
            }
            \tcc{Determine threshold $T$ }
            Sort the items that arrived during the time interval $[0,\,1/2)$: $y_1> y_2> \cdots> y_{m}$ (with consistent tie breaking). \;
            Let $\tau = \ceil{k/2}$.\;
            \If{$m < \tau$} {
                set $T$ to be the absolute zero
            }
            \Else {
                set $T\gets y_{\tau}.$ \;
            }
            \tcc{Process the items scheduled during the time interval $[1/2,\,1)$}
            \Do{$x = \emptyset$ ~OR~ $k$ items have already been accepted}{

                \If{$x > T$ ~AND~ ($\theta(x)\geq \theta(x')+\gamma$ where $x'$ is the last item accepted by $C_{k,\gamma}$
                    \\\qquad\qquad ~OR~ no items have been previously accepted)}{
                   $C_{k,\gamma}$ accepts $x$\;
                }
                $x \gets $ next element \tcp*{If no further items arrive, $x\gets \emptyset$}
            }
        }
\end{algorithm}

It is easy to see that  $C_{k,\gamma}$, chooses a $\gamma$-independent set of size at most $k$.

The main theorem of this paper is the following generalization of Kleinberg's $k$-secretary problem:

\begin{restatable}{theorem}{mainThm}
\label{THEOREM:MAIN}
For any set of items $S = \set{x_i}_{i=1}^n$, $0 < \gamma \le \gammaUP = \gammaUPVal$ and any positive integer $k \leq 1/\gamma$:
\begin{equation}\label{eq:mainTheorem}
 \Exp{v(C_{k,\gamma}(S, \theta))}{\theta:S \mapsto [0,1]}
\geq \frac{1}{1+\gamma k} \left(1 - \beta(\gamma,k)\right)T_k(S),
\end{equation}
where
$\beta(\gamma,\textbf{\textbf{\textbf{}}}k) = \theoremCCVal\sqrt{\gamma\ln(1/\gamma)} + \frac{\ETwoVal}{\sqrt{k}},$
and the expectation is taken oven all uniform mappings of $S$ to the interval $[0,1)$.
(Note that  the right hand side of Equation (\ref{eq:mainTheorem}) is negative for $\gammaUP < \gamma \leq 0.5$.)
\end{restatable}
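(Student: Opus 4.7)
The plan is to prove the theorem by induction on $k$, exploiting the recursive structure of $C_{k,\gamma}$ which invokes $C_{\lfloor k/2 \rfloor,\, 2\gamma}$ on the first half of $[0,1)$. For small $k$ (where $\beta(\gamma,k) \geq 1$) the right-hand side of (\ref{eq:mainTheorem}) is nonpositive and the claim is vacuous; this serves as the base. For the inductive step I would fix the (random) partition of $S$ into $H_1 = \{x : \theta(x) < 1/2\}$ and $H_2 = \{x : \theta(x) \geq 1/2\}$ and bound the expected value gained in each half separately.

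For $H_1$, since the top $k$ elements of $S$ independently land in $H_1$ with probability $1/2$, a Chernoff bound gives $T_{\lfloor k/2\rfloor}(H_1) \geq \tfrac{1}{2}(1 - O(1/\sqrt{k}))T_k(S)$ with high probability. The recursive call $C'$ sees $H_1$ with rescaled arrival times $\theta' = 2\theta$ uniform on $[0,1)$, so the induction hypothesis applied to $(2\gamma,\lfloor k/2\rfloor)$ lower bounds its gain by $\frac{1}{1+2\gamma\lfloor k/2\rfloor}(1 - \beta(2\gamma,\lfloor k/2\rfloor))\,T_{\lfloor k/2\rfloor}(H_1)$. Since $2\gamma\lfloor k/2\rfloor \leq k\gamma$, combining this with the concentration bound gives a first-half contribution of at least $\frac{1}{2(1+k\gamma)}(1-\beta_1)\,T_k(S)$ for a controlled error $\beta_1$.

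For $H_2$, I would first argue that the threshold $T = y_{\lceil k/2\rceil}$ is, with high probability, no larger than the $k$-th largest value of $S$: this follows from concentration of the number of top-$k$ items falling into $H_1$. Consequently the set $H_2^T = \{x \in H_2 : v(x) > T\}$ contains essentially all top-$k$ items that landed in $H_2$, and $|H_2^T|$ concentrates around $k/2$. Conditional on membership in $H_2^T$, arrival times are i.i.d.\ uniform on $[1/2,1)$, so the paper's bound on the expected maximum $\gamma$-independent subset of $n$ uniform points (advertised in the introduction) applies with $n \approx k/2$ and gives maximum independent set of size $\approx \frac{k/2}{1+k\gamma}(1 - O(\sqrt{\gamma\ln(1/\gamma)}))$. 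Since earliest-arrival greedy is optimal for unit intervals, the online procedure in $[1/2,1)$ attains this cardinality; to lift cardinality to \emph{value}, I would use the exchangeability of arrival times inside $H_2^T$ (arrival time is independent of value conditional on being in $H_2^T$), so every element of $H_2^T$ is included in the greedy's output with the same probability $\approx \frac{1}{1+k\gamma}$, yielding an expected value of $\approx \frac{1}{1+k\gamma}\,v(H_2^T) \geq \frac{1}{2(1+k\gamma)}(1-\beta_2)\,T_k(S)$.

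Summing the two halves and collecting error terms should match the claimed $\beta(\gamma,k) = \theoremCCVal\sqrt{\gamma\ln(1/\gamma)} + \ETwoVal/\sqrt{k}$, with care needed because $\gamma$ doubles at each recursion level (so $\sqrt{\gamma\ln(1/\gamma)}$ grows by roughly $\sqrt{2}$) while $k$ halves (so $1/\sqrt{k}$ grows by $\sqrt{2}$); the constants $7.4$ and $5$ should be tuned so that the geometric series of induced errors sums to the stated bound. I expect the principal obstacle to be the symmetry-based value-vs-cardinality argument for the online greedy on $H_2^T$ (including a careful handling of the boundary interaction, since items accepted near $1/2$ must also be $\gamma$-independent from the last first-half acceptance), together with the bookkeeping needed to absorb all the concentration slack — on $|H_1|$ vs $|H_2|$, on the rank of $T$, on $|H_2^T|$, and on the unit-interval-graph deviation — into a single clean $\beta(\gamma,k)$.
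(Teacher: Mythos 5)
Your overall skeleton --- induction on $k$ with a vacuous base case, splitting $[0,1)$ at $1/2$, applying the inductive hypothesis with parameters $(2\gamma,\floor{k/2})$ to the rescaled first half, setting the threshold $y_{\ceil{k/2}}$, and analyzing the greedy selection in the second half via the random-interval bound of Theorem \ref{THEOREM:HALF} together with an exchangeability argument that lifts cardinality to value --- is exactly the paper's. However, two of your key steps are false as stated, because they treat adversarially chosen values as if they concentrated. First, the claim that a Chernoff bound gives $T_{\floor{k/2}}(H_1)\geq \tfrac12\bigl(1-O(1/\sqrt{k})\bigr)T_k(S)$ \emph{with high probability} fails: if a single item carries essentially all of $T_k(S)$, then with probability $1/2$ it lands in the second half and $T_{\floor{k/2}}(H_1)\approx 0$. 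What the proof needs (and what is true) is a bound in expectation, obtained not from concentration of value but from symmetry: condition on the number $r$ of top-$k$ items landing in the first half; by exchangeability their conditional expected value is $\tfrac{r}{k}T_k(S)$, cap at $\floor{k/2}$, and evaluate the binomial sums exactly --- the loss is governed by the central binomial coefficient and yields the factor $1-\tfrac{1}{2\sqrt{k}}$ (Lemmas \ref{lem:randomsubsets} and \ref{lem:Bobby1}).

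Second, ``the threshold $T=y_{\ceil{k/2}}$ is w.h.p.\ no larger than the $k$-th largest value of $S$'' is not a high-probability event: it fails precisely when at least $\ceil{k/2}$ of the top $k-1$ items land in the first half, which happens with probability close to $1/2$. So you cannot conclude that $H_2^T$ w.h.p.\ contains essentially all top-$k$ items of the second half. The paper instead conditions on $Q=q$, the number of above-threshold arrivals in $[1/2,1)$, shows that $Q$ is distributed as a sum of $\ceil{k/2}$ i.i.d.\ geometric variables by reinterpreting the schedule as coin flips over items in decreasing value order (Lemma \ref{lem:geometric_equal}), and bounds the conditional value exactly: for $q\le\ceil{k/2}$ the above-threshold items are among the top $k$, so their average value is at least $T_k(S)/k$, while for larger $q$ they contain all of the top $k$; the deviation cost is then absorbed through $\Exp{|Q-\ceil{k/2}|}\le\sqrt{2\ceil{k/2}}$, a variance rather than Chernoff bound (Lemmas \ref{lemma:prof_r} and \ref{lem:second-half}). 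Beyond this, two technicalities you leave to ``bookkeeping'' require real work: the second-half analysis needs enough arrivals, which the paper secures by adding $3\ceil{k/2}$ zero-value dummy items and proving this only lowers the algorithm's value (Lemma \ref{lem:sizeDoesntMatter}); and the residual budget $\ceil{k/2}$ can truncate the greedy independent set when $q>\ceil{k/2}$, handled via $\min(s,\ceil{k/2})\ge s\ceil{k/2}/q$. Your boundary observation about $\gamma$-independence from the last first-half acceptance is correct and corresponds to the $O(\gamma)$ loss in Lemma \ref{lem:half_gamma_indep}.
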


\subsection{Outline of the proof of Theorem~\ref{THEOREM:MAIN}}\label{sec:outline}

We prove Theorem \ref{THEOREM:MAIN} by induction on $k$. For $k \le
25$ the theorem holds vacuously.

The profit, $p^{[0,1/2)}$, on those items that arrive during the
time interval $[0,1/2)$ is given by the inductive hypothesis\footnote{This profit, $p^{[0,1/2)}$ is $\Exp{v(C^{[0,1/2)}_{k,\gamma}(S, \theta))}{\theta:S \mapsto [0,1)}$, where $C^{[0,1/2)}_{k,\gamma}(S, \theta)$ the set of items chosen by the algorithm during the time period $[0,1/2)$.}.
However, the inductive hypothesis gives this profit, $p^{[0,1/2)}$,
in terms of the top $\lfloor k/2 \rfloor$ elements that arrive
before time $1/2$, and not in terms of $T_k(X)$, the value of the top $k$ items overall. Thus, we need to relate $p^{[0,1/2)}$ to $T_k(X)$. In Lemma
\ref{lem:Bobby1}  we show that $p^{[0,1/2)}$ is about 1/2 of $T_k(X)$.

Let $Z_{>T}$ be the set of items that arrive in the time interval
$[1/2,1)$ and have value greater than the threshold $T$. From $Z_{>T}$ we
greedily pick a $\gamma$-independent subset\footnote{modulo the caveat that the arrival time of
the 1st item chosen from the 2nd half must be at least $\gamma$
after  the arrival time of the last item chosen in the 1st half.}. It is easy to see that this  set is in
fact  a maximal $\gamma$-independent subset.

To bound the expected profit  from the items in $Z_{>T}$
 we first bound the size of the maximal
$\gamma$-independent set amongst these items. To do so we
use the following general theorem (see also Section \ref{sec:independent} and Section \ref{sec:half_details}).

%%\begin{reptheorem}{THEOREM:HALF}
\begin{restatable}{theorem}{halfThm}
\label{THEOREM:HALF}
Let $Z=\set{z_1, z_2, \ldots, z_n}$ be a set of independently uniform
samples, $z_i$, from the real interval $[0,1)$. For $0 \leq \gamma
\leq 1$,
\begin{equation}
\Exp{m(Z,\gamma)}{Z} \geq \frac{1-\alpha(\gamma)}{\gamma+1/n}= \frac{(1-\alpha(\gamma)) n}{1+n \gamma}, \quad \mbox{\rm where\ } \alpha(\gamma) = \theoremC\sqrt{\gamma\ln(1/\gamma)},\label{eq:theorem}
\end{equation}
 where
$m(Z,\gamma)$ denotes the size of the largest
$\gamma$-independent subset of $Z$.
\end{restatable}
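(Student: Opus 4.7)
The plan is to analyze the standard greedy algorithm for the maximum independent set in the unit-length interval graph: scan the points in sorted order, pick the leftmost, advance a cursor by $\gamma$, pick the next point past the cursor, and repeat. Since greedy is optimal for unit interval graphs, writing $P$ for the number of picks we have $m(Z,\gamma)=P$, and it suffices to show $\Exp{P} \geq (1-\alpha(\gamma))\,n/(1+n\gamma)$.

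The key decomposition is renewal-like. Letting $X_{1} < X_{2} < \cdots < X_{P}$ denote the picks and $X_{0} := -\gamma$, each inter-pick spacing splits as $X_{i} - X_{i-1} = \gamma + W_{i}$, where $W_{i}$ is the waiting distance from the cursor $X_{i-1}+\gamma$ to the next uniform point. Hence $P \geq k$ is equivalent to $\sum_{i=1}^{k} W_{i} \leq 1 - (k-1)\gamma$. Under the binomial model I would represent the uniform order-statistic gaps via the standard ``Dirichlet from exponentials'' construction: sample $n+1$ i.i.d.\ $E_{i} \sim \mathrm{Exp}(1)$, set $S = \sum_{i} E_{i}$, and use $(E_{1}/S, \ldots, E_{n+1}/S)$ as the simplex vector of uniform gaps. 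In this representation each $W_{i}$ is a partial sum of the $E_{\ell}/S$ over a ``waiting block,'' and the concentration of $S$ around $n+1$ effectively reduces the analysis to an idealized Poisson process of rate $n$, in which the waiting gaps would be i.i.d.~$\mathrm{Exp}(n)$ and their sum would be $\mathrm{Gamma}(k,n)$.

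The core estimate is then a Chernoff / Bennett tail bound. Setting $k^{*} = \lfloor (1-\alpha(\gamma))\,n/(1+n\gamma)\rfloor$, the threshold $1-(k^{*}-1)\gamma$ exceeds the mean $k^{*}/n$ of $\sum_{i=1}^{k^{*}} W_{i}$ by $\Theta(\alpha(\gamma) + \gamma)$, which corresponds to $\Theta(\sqrt{\ln(1/\gamma)})$ standard deviations of the underlying Gamma random variable, giving $\prob{P < k^{*}} = O(\gamma^{c})$ for a universal constant $c$. Combined with the $O(1/\sqrt{n})$ loss from the $S \approx n+1$ conditioning, $\Exp{P} = \sum_{k \geq 1}\prob{P \geq k}$ then yields $\Exp{P} \geq k^{*}(1 - O(\gamma^{c})) \geq (1-\alpha(\gamma))\,n/(1+n\gamma)$, after absorbing lower-order error into the leading constant $3$ of $\alpha(\gamma) = 3\sqrt{\gamma \ln(1/\gamma)}$.

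The main technical obstacle is uniform-in-$n$ quantitative control. The per-pick fluctuation of $P$ is $\Theta(1/\sqrt{n})$, so for very small $n$ one must argue separately; the trivial $\Exp{P}\geq 1$ suffices whenever $(1-\alpha(\gamma))\,n/(1+n\gamma) \leq 1$. In the borderline regime $n\gamma = \Theta(1)$ one must verify that the Chernoff tail still produces exactly the right scale: the allowed slack $\alpha(\gamma)\cdot n/(1+n\gamma)$ must dominate $\sqrt{\ln(1/\gamma)}$ standard deviations of $P$, which is where the constant $3$ (equivalently, the $\sqrt{\ln(1/\gamma)}$ safety factor) needs to be carefully chosen. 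The case analysis and the absorption of all lower-order error into $\alpha(\gamma)$ is the only delicate part of the proof.
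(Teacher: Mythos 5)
Your route is genuinely different from the paper's: you analyze greedy as a renewal process on the uniform spacings and push the estimate through Poissonization plus a Gamma tail bound, whereas the paper works with the identity $k \leq \sum_{i=1}^{k} p_i\,\Exp{C_i \mid I_i=1}$ (where $C_i$ counts points in $[X_i,X_i+\gamma)$ and $I_i$ indicates a greedy pick), bounds $\Exp{C_i\mid I_i=1}$ via Hoeffding concentration of the order statistic $X_i$ together with the conditional independence of $C_i$ and $I_i$ given $X_i$, and then extends from the first quarter of the points to all of $Z$ by a translation-invariance argument. In the regime $n\gamma \gtrsim 1$ your plan is sound: the slack $1-(k^*-1)\gamma - k^*/n \geq \alpha(\gamma)+\gamma$ is indeed $\Omega\bigl(\sqrt{\ln(1/\gamma)}\bigr)$ standard deviations of $\mathrm{Gamma}(k^*,n)$, the failure probability $\gamma^{\Theta(1)}$ is easily absorbed, and the de-Poissonization and flooring losses are of lower order than $\alpha(\gamma)\,n/(1+n\gamma)$.

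The genuine gap is the intermediate regime $1 \ll n \ll \sqrt{\ln(1/\gamma)/\gamma}$ (equivalently, $n$ large but $n\sqrt{\gamma\ln(1/\gamma)} \ll 1$), which your case analysis does not reach. There, (i) the number of standard deviations is only $(\alpha(\gamma)+\gamma)\sqrt{n(1+n\gamma)} \approx 3\sqrt{n\gamma\ln(1/\gamma)} = o(1)$, so the Chernoff/Bennett tail gives nothing and the claim $\prob{P<k^*}=O(\gamma^c)$ is false as a route to the bound; (ii) the additive loss of up to $1$ from taking $k^*=\floor{(1-\alpha(\gamma))n/(1+n\gamma)}$, and the $O(\sqrt{\ln n/n})$ budget distortion from the $S\approx n+1$ conditioning, both exceed the total allowed error $\alpha(\gamma)\,n/(1+n\gamma)$, so neither can be "absorbed into the constant $3$"; and (iii) your fallback $\Exp{P}\geq 1$ only covers targets at most $1$, i.e.\ $n=O(1)$ — for example $n=100$, $\gamma=10^{-10}$ is covered by neither branch, since $k^*\leq 99$ while the target exceeds $99.98$. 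In this regime the deviations of $P$ below $n$ are driven by rare pairwise collisions, not by Gaussian fluctuations of $\sum_i W_i$, so the right fix is a separate elementary argument, exactly as the paper does: for $n \leq 2\sqrt{e\ln(1/\gamma)/\gamma}$, each point has no other point within distance $\gamma$ to its left with probability $(1-\gamma)^{n-1}$, whence $\Exp{m(Z,\gamma)} \geq n(1-\gamma)^{n-1}$, and a short calculation shows this already exceeds $\frac{n}{1+n\gamma}\bigl(1-3\sqrt{\gamma\ln(1/\gamma)}\bigr)$ in that range. With such a split, your renewal argument only needs to handle the complementary large-$n$ range, where its error terms are indeed absorbable; without it, the proof as proposed does not establish the theorem for all $n$.
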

%\end{reptheorem}

We apply Theorem \ref{THEOREM:HALF} to the items in $Z_{>T}$. We can apply this theorem since arrival times of items in $Z_{>T}$ are uniformly distributed in the 2nd
half. Specifically, we give a lower bound on the expected profit of the algorithm from the items in the 2nd half as follows:

\begin{enumerate}
\item
 Condition on the size of $Z_{>T}$.
\item
Subsequently, condition on the set of arrival times $\{ \theta_1, \theta_2, \ldots, \theta_{|Z_{>T}|}\}$ of the items in $Z_{>T}$  but
{\sl not} on which item in $Z_{>T}$ arrives when. This conditioning fixes the $\gamma$-independent set selected greedily by the algorithm.
\item
 We take the expectation over all bijections $\theta$ whose image on the domain $Z_{>T}$ is the set  $\{ \theta_1, \theta_2, \ldots, \theta_{|Z_{>T}|}\}$. The expected profit (over the set $Z_{>T}$ and over
these bijections) is ``approximately"
\begin{equation}\frac{\mbox{\rm Size of maximal $\gamma$-independent set from $Z_{>T}$}}{|Z_{>T}|} \cdot \sum_{z \in  Z_{>T}} v(z).\label{eq:expectedprofit}\end{equation}
The ``approximately" is because of some technical difficulties:
\begin{itemize}
  \item We cannot ignore the last item amongst those arriving prior to time 1/2. If one such item was chosen at some time  $1/2-\gamma< t < 1/2$ then arrivals during the period $[1/2,t+\gamma)$ cannot be chosen.
\item We cannot choose more than $k$ items in total, if the algorithm choose $\lambda$ items from the time interval $[0,1/2)$, it cannot choose more than $k-\lambda$ items from the time interval $[1/2,1)$, but $k-\lambda$ may be smaller than the size of the $\gamma$-independent set from $Z_{>T}$.
\end{itemize}
\item To get an unconditional lower bound we average Equation (\ref{eq:expectedprofit}) over the possible sizes of the $\gamma$-independent set as given by Theorem \ref{THEOREM:HALF}.
\end{enumerate}

%\subsection{Detailed proof}\label{sec:detailed}
%\input{detailed}
\section{Upper bound for the temp secretary problem with uniform arrival times and with no budget restriction}
\label{sec:upperTempSec}
\begin{theorem}\label{THEOREM:UPPERBOUND}
For the temp secretary problem where item arrival times are taken from the uniform distribution, for any $\gamma \in (0,1)$, any online algorithm (potentially randomized) has a competitive ratio  $\leq 1/2+\gamma/2$.
\end{theorem}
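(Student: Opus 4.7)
\smallskip

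\noindent\textbf{Proof plan.}
The theorem is an upper bound on the problem's competitive ratio, so proving it requires exhibiting an instance $X$ on which every online algorithm's expected value is bounded by $(1/2+\gamma/2)$ times the expected offline optimum. Since $(1+\gamma)/2$ tends to $1/2$ as $\gamma\to 0$ (matching the Charter algorithm's lower bound at $k=\lceil 1/\gamma\rceil$), the hard instance should force any online strategy to capture essentially only half of two ``complementary'' sources of value that $\opt$ can exploit with its knowledge of the future.

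My candidate construction is a family $X_n = \{a\}\cup\{b_1,\ldots,b_n\}$ parameterized by $n\to\infty$, where $a$ is a single ``jackpot'' item of value $V$ and the $b_i$ are ``filler'' items of value $v$ each, with $V$ and $v$ to be tuned (for concreteness, a first guess is $V=1$ and $v=\gamma$). All items arrive uniformly on $[0,1)$; in the limit $n\to\infty$, the filler arrivals approach a Poisson process of intensity $n$, which is amenable to a fluid/steady-state analysis.

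The plan has three steps. First I would lower-bound $\Exp{v(\opt(X_n,\theta))}{\theta}$ by invoking Theorem~\ref{THEOREM:HALF} on the filler items lying outside the $2\gamma$-window around $\theta(a)$; combined with $a$ itself, this gives $V + v(1-2\gamma)/\gamma - o_n(1)$, up to the error factor from Theorem~\ref{THEOREM:HALF}. Second, and most importantly, I would upper-bound $\Exp{v(A(X_n,\theta))}{\theta}$ for an arbitrary online algorithm $A$. Introducing the ``readiness'' probability $\sigma(t)=\prob{A\text{ is ready to pick at time }t}$ and the (history-dependent) filler pick-probability $r(t)$, the expected value decomposes as $V\cdot \prob{A\text{ picks }a} + vn\int_0^1 r(t)\sigma(t)\,dt$, subject to the quasi-stationary balance $\sigma(t)\leq 1/(1+\gamma n r(t))$ plus $O(\gamma)$ boundary corrections. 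A convexity argument over feasible $r(\cdot)$ then yields the desired upper bound. Third, I would take the ratio and tune $V$ and $v$ (or add further jackpot items) so that the bound matches $(1+\gamma)/2$ exactly, handling $\gamma$ close to $1$ by a separate trivial argument.

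The main obstacle is Step 2: establishing the upper bound uniformly over \emph{all} adaptive online strategies, not merely threshold or ``wait-then-greedy'' heuristics. Although such simple strategies are easy to analyze, an arbitrary online algorithm may interleave picks in complex, time-varying ways, and one must show that no such interleaving beats the benchmark. The cleanest approach I can see is to relax to the fractional online LP, solve the dual in closed form to obtain a Lagrangian upper bound on the fractional value, and then note that integral strategies do no better. A secondary subtlety is controlling the $O(\gamma)$ boundary effects so that the bound holds for every $\gamma\in(0,1)$, not merely asymptotically as $\gamma\to 0$; this likely requires a careful accounting of the readiness-profile at the endpoints of $[0,1)$.
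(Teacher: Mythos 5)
There is a genuine gap, and it lies in the very first step: your hard instance is a \emph{single} known family $X_n=\{a\}\cup\{b_1,\dots,b_n\}$, and no choice of $V$ and $v$ can force the ratio down to $1/2+\gamma/2$ on such an instance. Because the algorithm faces a fixed instance containing exactly one jackpot that will certainly arrive at a uniformly random time, it can adapt: stay idle until $a$ arrives (catching it with probability $1$), and only then run greedy on the fillers. This gets roughly $V+v(\tfrac12-\gamma)/\gamma$, while pure greedy gets roughly $v/\gamma$; against $\mathrm{Opt}\approx V+v/\gamma$, the better of these two strategies already achieves ratio about $\max\left(1,\,x+\tfrac12\right)/(x+1)\ge 2/3$ where $x=V\gamma/v$, uniformly over the tuning of $V,v$ (and adding more jackpots does not help the adversary, since they too can be waited for). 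So for small $\gamma$ your construction cannot certify an upper bound below roughly $2/3$, and the heavy machinery you defer to Step 2 (readiness profiles, fractional LP and its dual over all adaptive strategies) would, even if completed, only confirm this.

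The missing idea is uncertainty about whether the high-value item exists at all, which the paper gets by playing \emph{two} inputs against each other, exploiting that $n$ is unknown to the algorithm: $S$ consists of $n-1$ unit-value items, and $S'=S\cup\{x_n\}$ with $v(x_n)=\infty$. Conditioning on the arrival times of $S$, let $g(\theta)$ be the expected fraction of $[0,1)$ left uncovered by the intervals $[\theta(x),\theta(x)+\gamma)$ of accepted items, and $G$ its expectation. On $S$ the algorithm collects at most $(1-g(\theta))/\gamma+1$ items, giving ratio at most $1-G+\gamma$ as $n\to\infty$; on $S'$ the infinite-value item is caught exactly when it lands in the uncovered portion, giving ratio $G$. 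Since the algorithm cannot distinguish the two inputs before $x_n$ arrives, its ratio is at most $\min(G,\,1-G+\gamma)\le 1/2+\gamma/2$. Note that this argument needs no analysis of arbitrary adaptive interleavings and no appeal to Theorem~\ref{THEOREM:HALF}: the coverage identity does all the work, and it yields the bound for every $\gamma\in(0,1)$ without boundary corrections. If you want to salvage your approach, you must reintroduce this ambiguity (e.g.\ a distribution over instances in which the jackpot may be absent), which essentially reconstructs the paper's two-input argument.
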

\begin{proof}
Let $A$ denote the algorithm.
Consider the following two inputs:
\begin{enumerate}
    \item The set $S$ of n-1 items of value $1$.
    \item The set $S' = S\cup\set{x_n}$ where $v(x_n) = \infty$.
\end{enumerate}
Note that these inputs are not of the same size (which is ok as the number of items is unknown to the algorithm).

Condition the mapping $\theta:S\mapsto [0,1)$ (but not the mapping of $x_n$). If $A$ accepts an item $x$ at time $\theta(x)$ we say that the segment $[x,x+\gamma)$ is \emph{covered}.
For a fixed $\theta$ let $g(\theta)$ be the expected  fraction of $[0,1)$ which is not covered when running $A$ on the set $S$ with arrival times
  $\theta$.
This expectation is over the  coin tosses of $A$.
Let $G$ be $\Exp{g(\theta)}{\theta:S\mapsto[0,1)}$.

The number of items that $A$ picks on the input $S$ with arrival time $\theta$ is at most $\frac{1-g(\theta)}{\gamma} + 1$. Taking expectation over all mappings $\theta:S\mapsto [0,1)$ we get that the value gained by $A$ is at most $(1-G)/\gamma + 1$.

As $n\rightarrow \infty$ the optimal solution consists of $\ceil{1/\gamma}$ items of total value $\ceil{1/\gamma}$.
Therefore the competitive ratio of $A$ is at most
\begin{equation}\label{eq:bound1}
\frac{(1-G)/\gamma + 1}{1/\gamma} = 1-G+\gamma.
\end{equation}

Note that $g(\theta)$ is exactly the probability that $A$ picks $x_n$ on the input $S \cup \set{x_n}$
(this probability is over the mapping of $x_n$ to $[0,1)$ conditioned upon the arrival times of all the items in $S\subset S'$). Therefore the competitive ratio of $A$ on the
input $S'$ is
\begin{equation}\label{eq:bound2}
\Exp{g(\theta)} = G\ .
\end{equation}
Therefore the competitive ratio of $A$ is no more than the minimum of the two upper bounds (\ref{eq:bound1}) and (\ref{eq:bound2})
$$\min{(G,\, 1-G+\gamma)}\leq 1/2+\gamma/2 \ .$$
\qed\end{proof}

\section{About Theorem \ref{THEOREM:HALF}: A Lower bound on the expected size of the maximum ${\gamma}$-independent subset}\label{sec:independent}

\newcommand{\theoremCOne}{c_1}
\newcommand{\theoremCOneVal}{2.058664}
\newcommand{\theoremCTwo}{c_2}
\newcommand{\theoremCTwoVal}{0.620670}
\newcommand{\theoremCThree}{c_3}
\newcommand{\theoremCThreeVal}{0.293353}

\newcommand{\theoremTVal}{0.032704}
\newcommand{\theoremT}{t}

\newcommand{\theoremInvT}{1/t}
\newcommand{\theoremInvTVal}{30.577509}
\newcommand{\theoremInvTValApprox}{31}

\newcommand{\theoremB}{\beta(\gamma)}
\newcommand{\theoremBVal}{1.829048}

\newcommand{\theoremH}{h}
\newcommand{\theoremHVal}{0.463713}

\newcommand{\theoremK}{k}
\newcommand{\theoremKVal}{2.490795}

\newcommand{\theoremRGamma}{r(\gamma)}

\newcommand{\theoremF}{f}
\newcommand{\theoremFVal}{4}

\newcommand{\theoremFMinOneVal}{3}
\newcommand{\theoremTwoFMinOneVal}{7}
\newcommand{\theoremFOverF}{\frac{f}{f-1}}
\newcommand{\theoremFOverFVal}{\frac{4}{3}}
\newcommand{\theoremFOverOverFVal}{\frac{3}{4}}

Recall the definition of $Z$ and $m(Z,\gamma)$ from Theorem \ref{THEOREM:HALF}.

Define the random variable $X_i$, $1 \leq i \leq n$ to be the $i$'th smallest point in $Z$. Define the random variable
$C_i$ to be the number of points from $Z$ that lie in the interval
$[X_i,X_i+ \gamma)$. Note that at most one of these points can belong to a $\gamma$-independent set.

The \emph{greedy algorithm} constructs a maximal $\gamma$-independent set by
traversing points of $Z$ from the small to large and adding a point whenever possible.
Let $I_i$ be a random variable with binary values where $I_i=1$ iff $X_i$ was chosen by the greedy algorithm. It follows from the definition that $\sum_i I_i$ gives the size of the maximal independent set, $m(Z,\gamma)$, and that $\sum_i I_i C_i =n$.

Note that $\Exp{C_i}\leq 1+ n \gamma$, one for the point $X_i$ itself, and $n \gamma$ as the expected number of uniformly random points that fall into an interval of length $\gamma$.
If $C_i$ and $I_i$ were independent random variables, it would follow that
$$\Exp{\sum I_i C_i} \leq (1+n \gamma)\sum \prob{I_i=1},$$ and, thus,
$$m(Z,\gamma) = \sum I_1 \geq n/(1+n\gamma).$$

Unfortunately, $C_i$ and $I_i$ are not independent, and the rest of the proof of Theorem \ref{THEOREM:HALF} in Section \ref{sec:half_details} primarily deals with showing that this dependency is insignificant.

%%\begin{theorem}\label{THEOREM:HALF}
%%Let  $Z$ be a set of $n$ points chosen uniformly at random in $[0,1)$. For $0 \leq \gamma
%%\leq \theoremT$, $\theoremT = \theoremTVal \approx 1/\theoremInvTValApprox$ we have
%%\begin{equation}
%%{\Exp{m(Z,\gamma)} \geq \frac{1}{\gamma+1/n}(1-\alpha(\gamma))=
%%\frac{n}{1+n \gamma}(1-\alpha(\gamma)), \quad \mbox{\rm where\ }
%%\alpha(\gamma) =
%%\theoremC\sqrt{\gamma\ln(1/\gamma)}},\label{eq:theorem}
%%\end{equation}
%% (Note that for any $\theoremT \in [\theoremTVal, 0.5]$,
%%$\alpha(\gamma) > 1$. In other words, for those values the bound in
%%Equation (\ref{eq:theorem}) is correct but it doesn't give any
%%non-trivial bound.)
%%%
%%%Hence, the condition on $\gamma$ can be extended to: $0 \leq \gamma \leq 0.5$
%%\end{theorem}
%%\halfThm

\section{Discussion and Open Problems}

We've introduced online optimization over temporal items under stochastic inputs subject to conditions of two different types:
\begin{itemize}
  \item ``Vertical" constraints: Predicates on the set of items held at all times $t$. In this class, we've considered conditions such as no more than $d$ simultaneous items held at any time, items held at any time of total weight $\leq W$, items held at any time must be independent in some Matroid.
  \item ``Horizontal" constraints: Predicates on the set of items over all times. Here, we've considered the condition that no more than $k$ employees be hired over time.
\end{itemize}
One could imagine much more complex settings where the problem is defined by arbitrary constraints of the first type above, and arbitrary constraints of the 2nd type. For example, consider using knapsack constraints in both dimensions. The knapsack constraint for any time $t$ can be viewed as the daily budget for salaries. The knapsack constraint over all times can be viewed as the total budget for salaries. Many other natural constraints suggest themselves.

It seems plausible that the time slice algorithm can be improved, at least in some cases, by making use of information revealed over time, as done by the Charter algorithm.

%\section{Conclusions}

%\bibliographystyle{acmsmall}
\bibliographystyle{plain}

%\bibliography{secretary}

%\makeatletter
%\def\runninghead{\hrulefill\quad APPENDIX\quad\hrulefill}
%\def\ps@headings{
%\def\@oddhead{\footnotesize\rm\hfill\runninghead\hfill}}
%\def\@evenhead{\@oddhead}
%\def\@oddfoot{\rm\hfill\thepage\hfill}\def\@evenfoot{\@oddfoot}
%\makeatother

\newpage

\appendix
\section*{APPENDIX}

\section{Table of results}

\begin{table}
\caption{Size of maximal $\gamma$-independent set. Uniformly prior on arrivals. } \label{tab:title} \label{tab:results1}
  {
    \begin{tabular}{|l|l|l|}
  \hline & &\\
     \begin{tabular}{l}
             % after \\: \hline or \cline{col1-col2} \cline{col3-col4} ...
       Size of maximal \\
        $\gamma$-independent set\\ capacity $d=1$
           \end{tabular}
      & $\frac{n}{1+\gamma\cdot n} \left(1-\Theta\left(\sqrt{\gamma \ln(1/\gamma)}\right)\right)$ & Theorem \ref{THEOREM:HALF} \\ & &\\
      \hline & &\\
      \begin{tabular}{l}
             % after \\: \hline or \cline{col1-col2} \cline{col3-col4} ...
       Size of maximal  \\
        $\gamma$-independent set  \\
        of capacity $d$ \\
         $n=d \cdot 1/\gamma$.
           \end{tabular}
           &
            \begin{tabular}{l}
                $\geq n \left(1-\Theta\left(\frac{\sqrt{\ln d}}{\sqrt{d}}\right)\right)$ \vspace*{.20in} \\
                $ \leq n \left(1-\Theta\left(\frac{1}{\sqrt{d}}\right)\right)$
           \end{tabular}
           & \begin{tabular}{l}
                Lower bound --- Theorem \ref{theorem:DWidthIndependetSet} \\
                Lower bound --- Theorem \ref{theorem:DWidthIndependetSetUpper}
           \end{tabular}\\ & &\\
     \hline  & & \\
      \begin{tabular}{l}
             % after \\: \hline or \cline{col1-col2} \cline{col3-col4} ...
       Size of maximal  \\
        $\gamma$-independent set \\ of capacity $d$
           \end{tabular} & \begin{tabular}{l} $\geq \min(n,d/\gamma) \left(1-\Theta\left(\frac{\sqrt{\ln d}}{\sqrt{d}}\right)\right)$ \\
            $\leq \min(n,d/\gamma)$ \end{tabular}& Theorem \ref{theorem:DWidthIndependetSet} \\ & &\\
     \hline
   \end{tabular}
   }
\end{table}

\begin{table}
\caption{Competitive ratios, arbitrary prior on arrivals} \label{tab:results3}
  {
      \begin{tabular}{|l|l|l|}
      \hline & &\\
        \begin{tabular}{l}
            % after \\: \hline or \cline{col1-col2} \cline{col3-col4} ...
            Capacity one $\gamma$-independent subset
        \end{tabular}
          & $\frac{1}{2e}$
          & Theorem \ref{THEOREM:SIMPLE} \\
          & &\\
          \hline
          & &\\
          \begin{tabular}{l}
                 % after \\: \hline or \cline{col1-col2} \cline{col3-col4} ...
         Matroid constraints
        \end{tabular}
          & $\Omega(1/\ln \ln \rho)$
          & \cite{Lachish14} \& Theorem \ref{THEOREM:SIMPLE} \\
          & & \\
              \hline & &\\
          \begin{tabular}{l}
                 % after \\: \hline or \cline{col1-col2} \cline{col3-col4} ...
         Knapsack constraints
                \end{tabular} & $\frac{1}{20e}$
                & \cite{Babaioff07-knapsack} \&  Theorem \ref{THEOREM:SIMPLE}\\
                & &\\
                \hline & & \\
                \begin{tabular}{l}
                    % after \\: \hline or \cline{col1-col2} \cline{col3-col4} ...
                    Capacity $d$ \\
                    $\gamma$-independent set
               \end{tabular}
               & $\frac{1}{2}\left(1 - \frac{5}{\sqrt{d}}\right)$
               &
                    \cite{Kleinberg} \&
                     Theorem \ref{THEOREM:SIMPLE}
               \\ & &\\
         \hline
       \end{tabular}
  }
\end{table}

\section{Proving the main Theorem}\label{sec:technical}
%%\begin{reptheorem}{THEOREM:MAIN}
%%\begin{restatable}{thm}{mainThm}
%%\label{THEOREM:MAIN}
%%For any set of items $S = \set{x_i}_{i=1}^n$, $0 < \gamma \le \gammaUP = \gammaUPVal$ and any positive integer $k$:
%%\begin{equation}\label{eq:mainTheorem}
%% \Exp{v(C_{k,\gamma}(S, \theta))}{\theta:S \mapsto [0,1]}
%%\geq \frac{1}{1+\gamma k} \left(1 - \beta(\gamma,k)\right)T_k(S),
%%\end{equation}
%%where
%%$$\beta(\gamma,\textbf{\textbf{\textbf{}}}k) = \theoremCCVal\sqrt{\gamma\ln(1/\gamma)} + \frac{\ETwoVal}{\sqrt{k}}\ ,$$
%%and the expectation is taken oven all uniform mappings of $S$ to the interval $[0,1)$.
%%(Note that  the right hand side of Equation (\ref{eq:mainTheorem}) is negative for $\gammaUP < \gamma \leq 0.5$.)
%%\end{restatable}
%%\end{reptheorem}
\mainThm*

\begin{proof}

The proof is via induction on $k$.
 For the base case of the induction we note that for any $ k < 25$ the theorem holds since $\left(1- \frac{\ETwoVal}{\sqrt{k}}\right) \leq 0$.
We hereby assume that  the theorem  holds for any $k' \leq k$, and prove the statement for $k$.

The proof is presented top-down. We refer to Lemmata
\ref{lem:sizeDoesntMatter}, \ref{lem:Bobby1} and \ref{lem:second-half},
whose statement and proof appear subsequently.

Let $C^{[0, 1/2)}\left(X, \theta\right)$ and $C^{[1/2, 1)}\left(X, \theta\right)$ be the subsets of $X$ chosen by algorithm $C_{k,\gamma}$
during the time intervals $[0, 1/2)$ and $[1/2, 1))$ respectively when applied to $X,\theta$.

For the induction step we use the  fact that
\begin{eqnarray}
  \lefteqn{\Exp{v\left(C_{k,\gamma}(S,\theta)\right)}{\theta:S\mapsto [0,1)}} \nonumber \\
  &=
    \Exp{v\left(C_{k,\gamma}^{[0,1/2)}(S,\theta)\right)}{\theta:S\mapsto [0,1)}
    +\Exp{v\left(C_{k,\gamma}^{[1/2,1)}(S,\theta)\right)}{\theta:S\mapsto [0,1)} \label{eq:ind11}
\end{eqnarray}
We give a lower bound on  the first term in (\ref{eq:ind11}) using the induction hypothesis (after an appropriate transformation)
and we directly lower bound the second term using Lemma \ref{lem:second-half}.

Since Lemma \ref{lem:second-half} requires that the size $n$ of $S$ is sufficiently large relative to $k$ (which may not be the case)
 we introduce a modification of $C_{k,\gamma}$, Algorithm~\ref{algo:zeroesC} --- denoted by $C_{k,\gamma}^*$:
\begin{algorithm}\label{algo:zeroesC}
    \DontPrintSemicolon

        During the time interval $[0,1/2)$, $C_{k,\gamma}^*$ emulates $C_{k,\gamma}$, {\sl i.e.}, $C_{k,\gamma}^{*[0,1/2)}(S) = C_{k,\gamma}^{[0,1/2)}(S)$.\;
        $D\gets$ a collection of $3\ceil{k/2}$   distinguishable dummy items with zero value.\;
        $\theta_D \gets $ random uniform mapping $D\mapsto [0,1)$\;
        $U \gets \set{x\in S \mid \theta(x)<1/2} \cup \set{d\in D \mid \theta_D(d)<1/2}.$\;
        \tcc{set the threshold $T^*$}
        $\tau = \ceil{k/2}$\;
        \uIf {$\size{U}\geq\tau$}{
             $T^*\gets $ the $\tau$ largest value amongst $U$\;
             \tcc{Note that if the number of items from $S$ mapped to the interval $[0,1/2)$ is $\geq \tau$, then $C_{k,\gamma}^*$ sets the same threshold as $C_{k,\gamma}$}
        }
        \Else{
            $T^*\gets $ absolute zero \tcp*{smaller than any other item}
        }
        \tcc{the arrival time of the next item will be after time $1/2$}
        $x \gets $ next item from $S\cup D$ \tcp*{If no further items arrive, $x\gets \emptyset$}
        \While{$x \neq \emptyset$ ~AND~ no more than $k-1$ items have already been accepted}{
            \If{$x > T^*$ ~AND~ ($\theta(x)\geq \theta(x')+\gamma$ where $x'$ is the last item accepted by $C_{k,\gamma}^{*}$
                    \\\qquad\qquad ~OR~ no items have been previously accepted)}{
                   $C_{k,\gamma}^{*}$ accepts $x$\;
                }
                $x \gets $ next item from $S\cup D$\tcp*{If no further items arrive, $x\gets \emptyset$}
        }
    \caption{$C_{k,\gamma}^*$}
\end{algorithm}

%\begin{itemize}
%  \item During the time interval $[0,1/2)$, $C_{k,\gamma}^*$ emulates $C_{k,\gamma}$, {\sl i.e.}, $C_{k,
%  \gamma}^{*[0,1/2)}(S) = C_{k,\gamma}^{[0,1/2)}(S)$.
%  \item  At time $1/2$:
%  \begin{itemize}
%  \item Let $D$ be a collection of $3\ceil{k/2}$   distinguishable dummy items with zero value.
%  \item  $C_{k,\gamma}^*$ chooses a random mapping $\theta_D:D\mapsto [0,1)$.
%  \item Let $$U =\set{x\in S \mid \theta(x)<1/2} \cup \set{d\in D \mid \theta_D(d)<1/2}.$$
%  If $\size{U}\geq\tau = \ceil{k/2}$ the threshold $T^*$ is set to be the $\tau$ largest value amongst $U$,
%  otherwise set the threshold $T^*$ to be an absolute zero (smaller than any other item).
%  Note that if the number of items from $S$ mapped to the interval $[0,1/2)$ is $\geq \tau$, then $C_{k,\gamma}^*$ sets the same threshold as $C_{k,\gamma}$.
%\end{itemize}
%\item At times $\geq 1/2$, $C_{k,\gamma}$ greedily chooses the next item above the threshold $T^*$,
%    where items may be either from $S$ or from $D$ ($\set{x\in S\mid \theta(x) \geq 1/2 }\cup \set{d\in D\mid \theta_D(d)\geq 1/2 }$).
%\end{itemize}

By Lemma \ref{lem:sizeDoesntMatter} we have that
\begin{equation}\label{eq:sizeValueCmp}
\Exp{v\left(C_{k,\gamma}(S,\theta)\right)}{\theta:S\mapsto [0,1)}
\geq \Exp{v\left(C_{k,\gamma}^*(S,\theta)\right)}{\theta:S\mapsto [0,1)}.
\end{equation}
By definition of $C_{k,\gamma}^*$ above we have that
\begin{eqnarray}
  \lefteqn{\Exp{v\left(C_{k,\gamma}^*(S,\theta)\right)}{\theta:S\mapsto [0,1)}} \nonumber
  \\ &=&
    \Exp{v\left(C_{k,\gamma}^{*[0,1/2)}(S,\theta)\right)}{\theta:S\mapsto [0,1)}
    + \Exp{v\left(C_{k,\gamma}^{*[1/2,1)}(S,\theta)\right)}{\theta:S\mapsto [0,1)}\nonumber
  \\ &=&
    \Exp{v\left(C_{k,\gamma}^{[0,1/2)}(S,\theta)\right)}{\theta:S\mapsto [0,1)}
    + \Exp{v\left(C_{k,\gamma}^{[1/2,1)}(S\cup D,\theta')\right)}{\theta':(S\cup D)\mapsto [0,1)} \ \ \  \label{eq:induction}
\end{eqnarray}

The mapping $\theta'$ in Equation (\ref{eq:induction}), with domain $S\cup D$, combines $\theta:S\mapsto [0,1)$ and $\theta_D:D\mapsto [0,1)$, this is well defined because $S$ and $D$ are disjoint.

We give a lower bound for the first term in Equation (\ref{eq:induction}) using the inductive hypothesis.  Using Lemma \ref{lem:second-half} we derive a lower bound for the 2nd term in Equation (\ref{eq:induction}).

To simplify the notation hereinafter we abbreviate $ \Exp{v(C_{k,\gamma}(S, \theta))}{\theta:S \mapsto [0,1)}$ as
$\Exp{C_{k,\gamma}(S)}$.

We first give a lower bound on  the first term in Equation (\ref{eq:induction}).
Given a set of items $S$, fix the set of items $S^{[0,1/2)}\subseteq S$ arriving in $[0,1/2)$. The arrival
times of $S^{[0,1/2)}$ are uniform in $[0,1/2)$. Therefore  conditioned on $S^{[0,1/2)}$
arriving in $[0,1/2)$, the expected profit of $C_{k,\gamma}$ from $S^{[0,1/2)}$
equals
\begin{equation*}
\Exp{C_{\floor{k/2},2\gamma}\left(S^{[0,1/2)}\right)}.
\end{equation*}

By induction we obtain
\begin{eqnarray*}
\Exp{C_{\floor{k/2},2\gamma}\left(S^{[0,1/2)}\right)}
&\ge& \frac{1}{1+2\gamma \floor{k/2}}
\bigr(1 - \beta(2\gamma, \floor{k/2})\bigr)\cdot T_{\floor{ k/2 }}\left(S^{[0,1/2)}\right) \ .
\end{eqnarray*}

It therefore follows that the expected profit of algorithm
$C_{k,\gamma}$ from elements arriving in $[0,1/2)$ (without any
conditioning on $S^{[0,1/2)}$), is
\begin{eqnarray}
\lefteqn{\Exp{C_{k,\gamma}^{[0,1/2)}(S)}} \nonumber
\\ &=&\Exp{\Exp{C_{\floor{k/2},2\gamma}\left(S^{[0,1/2)}\right)}}{S^{[0,1/2)}\subseteq S} \nonumber
\\ &\geq&\Exp{\frac{1}{1+2\gamma \floor{ k/2 }}
\bigr(1 - \beta(2\gamma, \floor{k/2})\bigr)\cdot T_{\floor{ k/2 }}\left(S^{[0,1/2)}\right)}{S^{[0,1/2)}\subseteq S} \nonumber
\\ &=&\frac{1}{1+2\gamma \floor{ k/2 }}
\bigr(1 - \beta(2\gamma, \floor{k/2})\bigr)\cdot \Exp{T_{\floor{ k/2 }}\left(S^{[0,1/2)}\right)}{S^{[0,1/2)}\subseteq S} \nonumber
\\ &\geq& \frac{1}{1+\gamma k} \bigr(1 - \beta(2\gamma, \floor{k/2})\bigr)\cdot \Exp{T_{\floor{ k/2 }}\left(S^{[0,1/2)}\right)}{S^{[0,1/2)}\subseteq S} \ .
\label{eq:profithalf}
\end{eqnarray}
The set $S^{[0,1/2)}$  is a uniformly random subset of $S$, therefore applying Lemma
\ref{lem:Bobby1} we get
\begin{equation} \label{tamara}
\Exp{T_{\floor{ k/2 }}\left(S^{[0,1/2)}\right)}{S^{[0,1/2)}\subseteq S}
\ge \left(1-\frac{1}{2\sqrt{k}}\right)\frac{T_{k}(S)}{2} \ .
\end{equation}
By substituting the lower bound (\ref{tamara}) into Equation
(\ref{eq:profithalf}) we get that
\begin{eqnarray}
\lefteqn{\Exp{C_{k,\gamma}^{[0,1/2)}(S)}} \nonumber \\
&\geq&
  \frac{1}{1+\gamma k}
  \bigr(1 -\beta(2\gamma, \floor{k/2})\bigr)
  \left(1-\frac{1}{2\sqrt{k}}\right)\frac{T_{k}(S)}{2}
\nonumber \\
&\ge&
  \frac{1}{1+\gamma k}
  \left(1 -\beta(2\gamma, \floor{k/2})  -\frac{1}{2\sqrt{k}} \right)
  \frac{T_{k}(S)}{2}
\nonumber \\
&\stackrel{{\rm sub.}\ \beta}{=}&
  \frac{1}{1+\gamma k}
  \left(
    1
    -\theoremCCVal\sqrt{2\gamma\ln(1/(2\gamma))}
    - \left(\frac{\ETwoVal}{\sqrt{\floor{k/2}}} + \frac{1}{2\sqrt{k}}\right)
  \right)
  \frac{T_{k}(S)}{2}
\nonumber \\
&\geq&
  \frac{1}{1+\gamma k}
  \left(
    \frac{1}{2}
    -\frac{\sqrt{2}}{2}\cdot\theoremCCVal\sqrt{\gamma\ln(1/\gamma)}
    - \left(\frac{\sqrt{2}}{2}\cdot\frac{\ETwoVal}{\sqrt{k-1}} + \frac{1}{4\sqrt{k}}\right)
  \right)
  T_{k}(S)\ \ \ \ \ \
\label{eq:profithalffinal}
\end{eqnarray}

where the last inequality in this derivation follows since $\floor{ k/2 } \ge (k-1)/2$.

Now we give a lower bound on the second term of Equation (\ref{eq:induction}). By Lemma \ref{lem:second-half} (recall that $\size{S\cup D} \geq 3\ceil{k/2}$) we obtain that the expected
profit of $C_{k,\gamma}$ executed on input $S\cup D$ during the time interval $[1/2, 1)$ is at least
\begin{eqnarray}
\lefteqn{\Exp{C_{k,\gamma}^{[1/2, 1)}(S\cup D)}} \nonumber \\
&\geq&
  \frac{1}{1 + \gamma k}
  \left(1 - \sqrt{2}\cdot\alpha(\gamma) - 4.5\gamma\right)
  \left(1 - \frac{2\sqrt{1 + 1/k}}{\sqrt{k}} - \frac{1}{k}\right)
  \frac{T_k(S\cup D)}{2} \nonumber\\
&=&
  \frac{1}{1 + \gamma k}
  \left(
    \frac{1}{2}
    - \left(\frac{\sqrt{2}}{2}\alpha(\gamma) + \frac{9}{4}\gamma\right)
    - \left(\frac{\sqrt{1 + 1/k}}{\sqrt{k}} + \frac{1}{2k}\right)\right)
  {T_k(S)}\label{eqn:second-half}\ ,
\end{eqnarray}

where $\alpha(\gamma) = \theoremC\sqrt{\gamma\ln(1/\gamma)}$ as define in Theorem \ref{THEOREM:HALF}. The last equality follows since $v(D) = 0$, therefore $T_k(S\cup D) = T_k(S)$.

Substituting the lower bounds (\ref{eq:profithalffinal}) and (\ref{eqn:second-half}) into Equation (\ref{eq:induction}) and separating terms that depend on $\gamma$ from those that depend on $k$ we get

\begin{eqnarray}\label{eqn:sumYZStar}
\Exp{C_{k, \gamma}^{*}(S)}
\geq
  \frac{T_k(S)}{1 + \gamma k}\cdot
    \Bigg(
      1
      &-& \left(\frac{\sqrt{2}}{2}\theoremCCVal\sqrt{\gamma\ln(1/\gamma)} + \frac{\sqrt{2}}{2}\alpha(\gamma) + \frac{9}{4}\gamma\right)
      \nonumber \\&-& \left(\frac{\sqrt{2}}{2}\cdot\frac{\ETwoVal}{\sqrt{k-1}} + \frac{1}{4\sqrt{k}} + \frac{\sqrt{1 + 1/k}}{\sqrt{k}} + \frac{1}{2k}\right)
    \Bigg)\ \ \ \ \ \
\end{eqnarray}

We observe that for all $\gamma \le \gammaUP = \gammaUPVal$
we have
\begin{eqnarray*}
    \lefteqn{\frac{\sqrt{2}}{2}\theoremCCVal\sqrt{\gamma\ln(1/\gamma)} + \frac{\sqrt{2}}{2}\alpha(\gamma) + \frac{9}{4}\gamma} \\
    &=& \left(\frac{\sqrt{2}}{2}\theoremCCVal + \frac{\sqrt{2}}{2}\cdot\theoremC + \frac{9}{4}\sqrt{\frac{\gamma}{\ln(1/\gamma)}}\right)\sqrt{\gamma\ln(1/\gamma)}
    \\&\leq& \left(\frac{\sqrt{2}}{2}\theoremCCVal + \frac{\sqrt{2}}{2}\cdot\theoremC + \frac{9}{4}\sqrt{\frac{\gammaUP}{\ln(1/\gammaUP)}}\right)\sqrt{\gamma\ln(1/\gamma)}
    \\&\approx& 7.3407561 \sqrt{\gamma\ln(1/\gamma)} < \theoremCC\sqrt{\gamma\ln(1/\gamma)} \ .
\end{eqnarray*}
Similarly for $k > 25$
we have
\begin{eqnarray*}
    \lefteqn{\frac{\sqrt{2}}{2}\cdot\frac{\ETwoVal}{\sqrt{k-1}} + \frac{1}{4\sqrt{k}} + \frac{\sqrt{1 + 1/k}}{\sqrt{k}} +\frac{1}{2k}} \\
    &=& \left(\frac{\sqrt{2}}{2}\frac{\ETwoVal}{\sqrt{1 - 1/k}} + \frac{1}{4} + \sqrt{1 + 1/k} + \frac{1}{2\sqrt{k}}\right)\frac{1}{\sqrt{k}}
    \\ &<& \left(\frac{\sqrt{2}}{2}\frac{\ETwoVal}{\sqrt{1 - 1/25}} + \frac{1}{4} + \sqrt{1 + 1/25} + \frac{1}{2\sqrt{25}}\right)\frac{1}{\sqrt{k}}
    \\ &\approx& \frac{4.9783}{\sqrt{k}} < \frac{\ETwoVal}{\sqrt{k}} \ .
\end{eqnarray*}
%%
%%One sees that the constants $E_1$, $E_2$ are well defined and non-negative since the constraints that they satisfy are of the form $a_i E_i + b_i \le E_i$ for
%%constants $0 < a_i<1$ and $b_i>0$.

Substituting (\ref{eqn:sumYZStar}) into (\ref{eq:sizeValueCmp}) we obtain the statement of the theorem.
\qed\end{proof}

\begin{lemma}\label{lem:sizeDoesntMatter}

For any set $S$
$$\Exp{v\left(C_{k,\gamma}(S,\theta)\right)}{\theta:S\mapsto [0,1)}
\geq \Exp{v\left(C_{k,\gamma}^*(S,\theta)\right)}{\theta:S\mapsto [0,1)}.$$
\end{lemma}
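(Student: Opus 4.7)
The plan is to couple $C_{k,\gamma}$ and $C_{k,\gamma}^*$ on a common $(S,\theta)$, draw the dummy arrival times $\theta_D$ independently, and compare second-half contributions, since the first halves coincide by construction. I would begin by showing $T^* \geq T$ pointwise via a case analysis on $m = |S \cap [0, 1/2)|$ and $|U|$: when $m \geq \tau$, each dummy has value $0$ which is below any positive real, so the rank-$\tau$ elements of $U$ and of $S \cap [0, 1/2)$ coincide and $T^* = T$; when $m < \tau$ but $|U| \geq \tau$, the rank-$\tau$ element of $U$ is a dummy so $T^* = 0$, while $T$ is absolute zero; when $|U| < \tau$, both equal absolute zero.

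In the first two subcases ($|U| \geq \tau$) the threshold $T^* \geq 0$, so a dummy's value $0$ does not strictly exceed $T^*$ and $C_{k,\gamma}^*$ rejects every dummy in the second half. Consequently both algorithms apply the same greedy rule to the real items, and $v(C_{k,\gamma}^*(S, \theta, \theta_D)) = v(C_{k,\gamma}(S, \theta))$ pointwise whenever $|U| \geq \tau$. The only case where $C_{k,\gamma}^*$ can accept a dummy is $|U| < \tau$, where $T^* = T$ is absolute zero; I would handle this case by conditioning on the shared first-half state, so that the second-half inputs become the remaining reals (i.i.d.\ uniform in $[1/2, 1)$) and, additionally for $C_{k,\gamma}^*$, the remaining dummies -- also i.i.d.\ uniform -- with both algorithms reduced to the greedy procedure of Section~\ref{sec:independent}.

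By exchangeability of i.i.d.\ labels, each item in the corresponding ground set is accepted with the same marginal probability. Writing $V'$ for the sum of remaining real values, $N'$ and $D'$ for the numbers of remaining reals and dummies, and $E(M)$ for the expected greedy output size on $M$ i.i.d.\ uniforms in $[1/2,1)$ under the remaining budget, the conditional expected values become
\[
\Exp{v(C_{k,\gamma}^*) \mid \text{first-half state}} = \frac{E(N' + D')}{N' + D'} \cdot V', \qquad \Exp{v(C_{k,\gamma}) \mid \text{first-half state}} = \frac{E(N')}{N'} \cdot V',
\]
so the lemma reduces (conditionally) to the monotonicity $E(N' + D')/(N' + D') \leq E(N')/N'$.

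The main obstacle is this monotonicity of $M \mapsto E(M)/M$, which I would prove by an incremental exchangeability argument. By exchangeability, a fresh i.i.d.\ uniform item added to an instance of size $M$ is accepted with probability $E(M+1)/(M+1)$. Because greedy is cardinality-optimal for $\gamma$-independent packing even under a budget cap, the difference $N_{M+1} - N_M$ lies in $\{0, 1\}$, with the increment requiring the new item itself to be accepted; hence $E(M+1) - E(M) \leq E(M+1)/(M+1)$, which rearranges to $E(M+1)/(M+1) \leq E(M)/M$. Taking unconditional expectations and combining with the pointwise equality established for the first two subcases completes the proof.
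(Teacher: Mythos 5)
Your proposal is correct, but it proves the lemma by a genuinely different route than the paper. The paper never distinguishes cases according to the thresholds: it conditions on the first-half state ($S^{[0,1/2)}$, the residual budget $k-\lambda$, the time $\delta$ after which acceptances may resume) \emph{and} on the multiset of arrival times $x_1,\ldots,x_q$ of the above-threshold items in $[1/2,1)$, but not on which item lands where. Under that conditioning the size of the set picked by $C_{k,\gamma}$ is deterministic and equal to $\min\set{k-\lambda,\,m_\gamma(X_\delta)}$, i.e.\ the largest it could possibly be, while whatever $C^*_{k,\gamma}$ picks from the real items is a $\gamma$-independent subset of $X_\delta$ of size at most $k-\lambda$; averaging over the random assignment of items to times then gives conditional domination directly, with no comparison across ground sets of different sizes. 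You instead show pointwise coincidence of the two runs except on the event that fewer than $\tau=\ceil{k/2}$ items (real plus dummy) arrive before time $1/2$, and on that event you reduce to the monotonicity of $M\mapsto E(M)/M$ for the value-blind budgeted greedy on i.i.d.\ uniform arrivals, proved by an add-one-point exchangeability coupling ($N_{M+1}-N_M\in\{0,1\}$, and an increment forces acceptance of the new point, whence $E(M+1)-E(M)\le E(M+1)/(M+1)$). That monotonicity lemma is correct and a nice reusable fact, and your reduction via equal marginal acceptance probabilities is sound; the price is the extra threshold case analysis and a few conventions you should make explicit: real values are positive and a dummy (value $0$) never strictly exceeds a threshold that is itself a zero-value dummy, so that the ``pointwise equality'' subcases really are pointwise; and your $E(\cdot)$ must be defined with both the residual budget $k-\lambda$ and the spacing constraint inherited from the last first-half acceptance (time $\delta$) folded in, which is fine since both runs share these. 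The paper's conditioning argument is shorter and uniform; yours isolates exactly where the dummies could hurt and replaces the paper's ``greedy attains $\min\set{k-\lambda,m_\gamma(X_\delta)}$'' observation with a comparison lemma between instances of different sizes.
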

\begin{proof}
Since $C_{k,\gamma}^{*}$ runs exactly as $C_{k,\gamma}$ in the time interval $[0,1/2)$ the expected value of both of them in this interval will be the same.
Hence all we need to prove is
\begin{equation}\label{eq:lemSecondInterval}
\Exp{v\left(C_{k,\gamma}^{[1/2,1)}(S,\theta)\right)}{\theta:S\mapsto [0,1)}
\geq \Exp{v\left(C_{k,\gamma}^{*[1/2,1)}(S,\theta)\right)}{\theta:S\mapsto [0,1)}.
\end{equation}

The proof splits the probability space $\theta:\mathrm{S}\mapsto [0,1)$ into subspaces.
We prove that Inequality (\ref{eq:lemSecondInterval}) holds for each subspace
and therefore it holds for the entire probability space.

We break the probability space by conditioning on the following:
\begin{itemize}
    \item  The subset of items of $S$ arriving in $[0,1/2)$.
We denote this subset by
 $S^{[0,1/2)}$. Note that once we condition on $S^{[0,1/2)}$ the following are also fixed
    \begin{itemize}
        \item The subset $S^{[1/2, 1)} \subseteq S$ of the items of $S$ arriving in  $[1/2,1)$.
        \item The threshold $T$ that  $C_{k, \gamma}$ use to pick items from $S^{[1/2, 1)}$.
        \item The subset $S_{\mbox{\tiny{\textgreater T}}}^{[1/2, 1)} \subseteq S^{[1/2, 1)}$ of items in $S_{\mbox{\tiny{\textgreater T}}}^{[1/2, 1)}$ with value at least $T$.
    \end{itemize}
    \item The time $t_{\ell}$ in which $C_{k,\gamma}$ last chose an element in $[0,1/2)$. Note that
given this conditioning $C_{k,\gamma}$ (and $C^*_{k,\gamma}$) can first pick an item in $[1/2,1)$  after time $\max{\set{1/2,t_{\ell}+\gamma }}$.
We denote this time by $\delta$.
\item The number of elements, $\lambda$, picked by $C_{k,\gamma}$ (and $C^*_{k,\gamma}$) in $[0,1/2)$.
    \item The arrival times $x_1, \ldots, x_q \in [1/2,1) $, $q = \size{S_{\mbox{\tiny{\textgreater T}}}^{[1/2, 1)}}$,  of the items
in $S_{\mbox{\tiny{\textgreater T}}}^{[1/2, 1)}$. (Note that we do not fix
which  item of $S_{\mbox{\tiny{\textgreater T}}}^{[1/2, 1)}$ arrives  at $x_i$ for any $1\le i \le q$.) We denote the
set $\set{x_i \mid x_i \ge \delta }$ by $X_\delta$.
\end{itemize}

This conditioning determines the  $\gamma$-independent subset $\Gamma$ of  $X_\delta$ in which $C_{k,\gamma}$ picks
items of $S_{\mbox{\tiny{\textgreater T}}}^{[1/2, 1)}$.
Let $m_\gamma(X_\delta)$ denote the size of the maximum $\gamma$-independent subset of  $X_\delta$.
The size of $\Gamma$ is the minimum of  $k-\lambda$ and $m_\gamma(X_\delta)$.
   The  expectation of $\Exp{v\left(C_{k,\gamma}^{[1/2,1)}(S,\theta)\right)}{\theta:S\mapsto [0,1)}$ in a subspace defined by the conditioning above is the average
over all 1-1 mappings of $S_{\mbox{\tiny{\textgreater T}}}^{[1/2,1)}$ to  the arrival times $x_1, \ldots, x_q$, of the values of the items
mapped to $\Gamma$.
In the following we denote this conditional expectation by $\Exp{C^{[1/2, 1)}_{k, \gamma}(S)}$. Hence we  get that
\begin{eqnarray}
\Exp{C^{[1/2, 1)}_{k, \gamma}(S)}
&=& \frac{\size{\Gamma}}{q} v(S_{\mbox{\tiny{\textgreater T}}}^{[1/2,1)}) \nonumber
\\ &=& \frac{\min{\set{k-\lambda,~ m_\gamma(X_\delta)}}}{q}\cdot v(S_{\mbox{\tiny{\textgreater T}}}^{[1/2,1)})\ \ . \label{eq:originalProfit}
\end{eqnarray}

Our conditioning does not fix the $\gamma$-independent subset $\Gamma^*$
of  $X_\delta$ in which $C^*_{k,\gamma}$ picks
items of $S_{\mbox{\tiny{\textgreater T}}}^{[1/2, 1)}$ because $\Gamma^*$ depend on $\theta_D$.
We denote the  expectation of $\Exp{v\left(C_{k,\gamma}^{*[1/2,1)}(S,\theta)\right)}{\theta:S\mapsto [0,1)}$ in a subspace defined by the conditioning above by $\Exp{C^{*[1/2, 1)}_{k, \gamma}(S)}$. As before we have
\begin{equation}\label{eq:fakeProfit}
\Exp{C^{*[1/2, 1)}_{k, \gamma}(S)} = \Exp{\frac{\size{\Gamma^*}}{q} v(S_{\mbox{\tiny{\textgreater T}}}^{[1/2,1)})} =  \frac{v(S_{\mbox{\tiny{\textgreater T}}}^{[1/2,1)})}{q}\Exp{\size{\Gamma^*}}
\end{equation}
 where the expectation of $\Gamma^*$
is taken over all 1-1 mappings of $S_{\mbox{\tiny{\textgreater T}}}^{[1/2,1)}$ to  the arrival times $x_1, \ldots, x_q$ and over $\theta_D$.

We can give an upper bound on the value of  $\size{\Gamma^*}$ for every
1-1 mappings of $S_{\mbox{\tiny{\textgreater T}}}^{[1/2,1)}$ to  the arrival times $x_1, \ldots, x_q$ and every $\theta_D$.
Note that $C^{*}_{k, \gamma}$ can select at most $k$ items and therefore $\size{\Gamma^*} \leq k-\lambda$.
It also holds that  $\Gamma^*$ is a $\gamma$-independent subset of $X_\delta$ and
therefore  $\size{\Gamma^*}\leq m_\gamma(X_\delta)$ hence
\begin{equation}\label{eq:GammaBound}
\size{\Gamma^*} \leq \min{\set{k-\lambda,~ m_\gamma(X_\delta)}}
\end{equation}

Substituting (\ref{eq:GammaBound}) into (\ref{eq:fakeProfit}) and comparing it to (\ref{eq:originalProfit}) we get that
$\Exp{C^{[1/2, 1)}_{k, \gamma}(S)} \geq \Exp{C^{*[1/2, 1)}_{k, \gamma}(S)}$.
Since this holds for any
subspace define by our conditioning on the
 values of $S^{[0,1/2)}$, $\delta$, $\lambda$ and $\set{x_1, \ldots, x_q}$ the lemma follows.

\qed\end{proof}

\begin{lemma} \label{lem:Bobby1}
Let $S$ be a set of size $n$ and let $Y$ be a subset of $S$ chosen uniformly at random
amongst all subsets of $S$. Then for any $25 < k \leq n$:

$$\Exp{T_{\floor{\half{k}}}(Y)} \geq \left(1-\frac{1}{2\sqrt{k}}\right)\cdot\frac{T_{k}(S)}{2}$$
\end{lemma}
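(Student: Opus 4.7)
The plan is to restrict attention to the top $k$ items $T = \{x_1, \ldots, x_k\}$ of $S$ (ordered so that $v(x_1) \ge \cdots \ge v(x_k)$) and analyse the random intersection $Y \cap T$. Setting $m = \floor{k/2}$, I would first observe that every item of $S \setminus T$ has value at most $v(x_k)$, so $T_m(Y) \ge T_m(Y \cap T)$; it therefore suffices to lower bound $\Exp{T_m(Y \cap T)}$.

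Next I would condition on $J = |Y \cap T|$, which is distributed as $\mathrm{Bin}(k, 1/2)$ because each item of $T$ lies in $Y$ independently with probability $1/2$. Given $J = j$, the set $Y \cap T$ is uniform over the $j$-subsets of $T$. For any such subset $A$, the top $\min(m, j)$ of its elements sum to at least $\frac{\min(m,j)}{j}\,v(A)$: this is trivial for $j \le m$, and for $j > m$ it follows from the fact that the top $m$ out of $j$ values is at least their average. Combined with the symmetry identity $\Exp{v(Y \cap T) \mid J = j} = (j/k)\,T_k(S)$, this gives $\Exp{T_m(Y \cap T) \mid J = j} \ge \frac{\min(m,j)}{k}\,T_k(S)$, and averaging over $J$ yields
$$\Exp{T_m(Y)} \;\ge\; \frac{T_k(S)}{k}\cdot \Exp{\min(m,\,J)}.$$

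What remains is to show $\Exp{\min(m, J)} \ge (k/2)\bigl(1 - 1/(2\sqrt{k})\bigr)$, or equivalently $\Exp{(J-m)^+} \le \sqrt{k}/4$. For even $k$ this is immediate: $m = k/2 = \Exp{J}$, so $\Exp{(J-m)^+} = \tfrac{1}{2}\Exp{|J-m|} \le \tfrac{1}{2}\sqrt{\mathrm{Var}(J)} = \sqrt{k}/4$ by Jensen's inequality, matching the target exactly.

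The hard part will be the odd-$k$ case. Then $m = (k-1)/2$ lies $1/2$ below $\Exp{J}$, and the same Jensen argument gives only $\Exp{(J-m)^+} \le 1/4 + \sqrt{k}/4$, overshooting by a $\Theta(1)$ additive term. I would close this gap by invoking the exact formula $\Exp{|J - k/2|} = k\binom{k-1}{\floor{(k-1)/2}}/2^k$ together with the identity $\Exp{|J-m|} = \Exp{|J-k/2|}$, which holds for odd $k$ because $\Pr[J \le m] = \Pr[J \ge m+1] = 1/2$. A Stirling estimate of the central binomial coefficient then shows that this quantity is at most $(\sqrt{k}-1)/2$ for every odd $k \ge 27$. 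This is precisely where the quantitative hypothesis $k > 25$ enters: the inequality is in fact false at $k = 25$, tight near $k = 27$, and has constant slack for all larger odd $k$.
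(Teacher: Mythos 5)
Your proposal is correct, and at its core it runs parallel to the paper's own argument: conditioning on $J=|Y\cap T|$ and using $\Exp{T_m(Y\cap T)\mid J=j}\ge \frac{\min(m,j)}{k}\,T_k(S)$ is exactly the paper's Lemma \ref{lem:randomsubsets}, after which both proofs reduce to estimating $\Exp{\min(m,J)}$ for $J\sim Bin(k,1/2)$. The differences are in the finishing moves. For even $k$, your Jensen bound $\Exp{(J-m)^+}=\tfrac12\Exp{|J-\Exp{J}|}\le\tfrac12\sqrt{\mathrm{Var}(J)}=\sqrt{k}/4$ replaces the paper's exact evaluation of the binomial sum plus Stirling, and it hits the stated constant exactly --- a genuine simplification. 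For odd $k$, your reduction via $\Exp{|J-m|}=\Exp{|J-k/2|}=k\binom{k-1}{(k-1)/2}2^{-k}$ (both identities are correct) lands on literally the same inequality, $\binom{k-1}{(k-1)/2}2^{-k}+\frac{1}{2k}\le\frac{1}{2\sqrt{k}}$, that the paper isolates as Lemma \ref{lem:induction111}.

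The one soft spot is the phrase ``a Stirling estimate'' in your last step. The standard corollary $\binom{2n}{n}\le 4^n/\sqrt{\pi n}$ bounds your quantity by $k/\sqrt{2\pi(k-1)}\approx 2.112$ at $k=27$, which overshoots the target $(\sqrt{27}-1)/2\approx 2.098$ (the true value is $\approx 2.092$); that crude bound only succeeds from $k=29$ onward. So you need either a Stirling/Robbins bound with explicit error terms, e.g. $\binom{2n}{n}\le \frac{4^n}{\sqrt{\pi n}}\,e^{1/(24n)-2/(12n+1)}$, which does cover $k=27$, or a direct numerical check at $k=27$ combined with the crude bound for odd $k\ge 29$. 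The paper hits the same boundary problem --- it remarks that its Stirling bound alone yields the inequality only for $k\ge 63$ --- and resolves it by proving Lemma \ref{lem:induction111} by induction on odd $k$ from a numerically verified base case $k=27$. With that single step firmed up, your proof is complete.
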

\begin{proof}
From Lemma \ref{lem:randomsubsets} we know that
$$\Exp{T_{\floor{\half{k}}}(Y)} \geq T_{k}(S)\sum_{r=1}^{k}{{k \choose
r}\frac{1}{2^k}} \cdot\frac{\min(r, \floor{ k/2})}{k}\ .$$
 We consider even and odd $k$.

For $k$ even:
\begin{eqnarray}
\lefteqn{T_{k}(S)\sum_{r=1}^{k}{{k \choose r}\frac{1}{2^k}\cdot\frac{\min\left(r, \floor{ {\frac{k}{2}} }\right)}{k}}} \nonumber\\
&=&
  T_{k}(S)
  \left(
    \sum_{r=1}^{k/2}{{k \choose r}\frac{r}{k}\cdot\frac{1}{2^k}}
    + \sum_{r=k/2+1}^{k}{{k \choose r}\frac{1}{2^k}\cdot\frac{1}{2}}
  \right) \nonumber\\
&=&
  T_{k}(S)
  \left(
    \frac{1}{2}\sum_{r=1}^{k/2}{{k - 1 \choose r - 1}\frac{1}{2^{k-1}}}
    + \frac{1}{2}\cdot\prob{Bin\left(k, \frac{1}{2}\right) \geq \frac{k}{2}+1}
  \right)\nonumber\\
&=&
  \frac{T_{k}(S)}{2}
  \left(
    \sum_{r=0}^{k/2 - 1}{{k - 1\choose r}\frac{1}{2^{k-1}}}
    + \prob{Bin\left(k, \frac{1}{2}\right) \geq \frac{k}{2}+1}
  \right) \nonumber\\
&=&
  \frac{T_{k}(S)}{2}
  \left(
    \prob{Bin\left(k-1, \frac{1}{2}\right) \leq \frac{k}{2}-1}
    + \prob{Bin\left(k, \frac{1}{2}\right) \geq \frac{k}{2}+1}
  \right) \nonumber \\
&=&
  \frac{T_{k}(S)}{2}
  \left(
    \frac{1}{2}
    + \frac{1}{2} - {k \choose k/2}\frac{1}{2^{k+1}}
  \right) \label{eq:first_binomial332} \\
 &=& \frac{T_{k}(S)}{2}\left(1 - {k \choose k/2}\frac{1}{2^{k+1}}\right).\label{eq:first_binomial33}
\end{eqnarray}
We derive (\ref{eq:first_binomial332}) as follows:
\begin{itemize}
  \item
      $\prob{Bin\left(k-1, \frac{1}{2}\right) \leq \frac{k}{2}-1} = 1/2.$

      Consider $k-1$ coin tosses of a fair coin, the number of heads can be $0, 1, \ldots, k-1$.
      The probability that there are $i$ heads is equal to the probability that there are $k-1-i$ heads.
      As $k-1-(k/2-1)=k/2$, the set $\set{0, \ldots k/2-1 }$ is disjoint from the set  $\set{k/2,\ldots, k}$ and their union is $\set{0,\ldots,k-1}$.
  \item
      $\prob{Bin\left(k, \frac{1}{2}\right) \geq \frac{k}{2}+1}=\frac{1}{2} - {k \choose k/2}\frac{1}{2^{k+1}}.$

      Consider $k$ tosses of a fair coin, the outcome can have $0,1, \ldots, k$ heads (an odd number of outcomes).
      As above, the probability that there be $i$ heads, $i=0,1, \ldots, k/2-1$ is equal to the probability that there be $k-i$ heads.
      Ergo, $\prob{\mbox{$\#$ heads $\in 0, \ldots, k/2-1$}}=\prob{\mbox{$\#$ heads $\in k/2+1, \ldots, k$}}$,
      and \begin{eqnarray*}
    \lefteqn{\prob{\mbox{$\#$ heads $\in 0, \ldots, k/2-1$}}} \\
    & & +\prob{\mbox{$\#$ heads $\in k/2+1, \ldots, k$}} \\
    & & +\prob{\mbox{$\#$ heads $= k/2$}}=1.
    \end{eqnarray*}
      As the probability that there are exactly $k/2$ heads is ${k \choose k/2}\frac{1}{2^{k}}$,
      solving for $\prob{\mbox{$\#$ heads $\in k/2+1, \ldots, k$}}$ gives the desired result.
\end{itemize}

From Stirling's formula we get:
\begin{equation}\label{eq:after_binon3}
  {k \choose k/2}\frac{1}{2^k}
  \leq
    \frac{e\sqrt{k}(\frac{k}{e})^{k}}{\left(\sqrt{2\pi}\sqrt{\frac{k}{2}}{\left(\left(\frac{k}{2}\right)/e\right)^\frac{k}{2}}\right)^2}
     \cdot \frac{1}{2^k} =
\frac{e\sqrt{k}}{\pi \cdot k} <
\frac{1}{\sqrt{k}} \ .
\end{equation}
By substituting (\ref{eq:after_binon3}) in (\ref{eq:first_binomial33}) we get

\begin{equation*}
T_{\floor{k/2}}(S)  \geq \frac{T_k(S)}{2}\left(1 - \frac{1}{2\sqrt{k}}\right) \ ,
\end{equation*}
finishing the proof for $k$ even.

For an odd  $k$:
\begin{eqnarray}
\lefteqn{
  T_k(S)
    \sum_{r=1}^{k}{{k \choose r}\frac{1}{2^k}\frac{\min(r, \floor{\frac{k}{2}})}{k}} }
  \nonumber\\
&=&
  T_k(S)
  \left(
    \sum_{r=1}^{\floor{\frac{k}{2}}}{{k \choose r}\frac{r}{k}\cdot\frac{1}{2^k} }
    + {\floor{\frac{k}{2}}}\cdot\frac{1}{k}\sum_{r=\floor{\frac{k}{2}}+1}^{k}{{k \choose r}\frac{1}{2^k}}
  \right) \nonumber \\
&\geq&
  T_k(S)
  \left(
    \frac{1}{2}\sum_{r=1}^{\floor{\frac{k}{2}}}
      {{k-1 \choose r-1}\frac{1}{2^{k-1}}}
    ~+~ \frac{k-1}{2k}\sum_{r=\floor{\frac{k}{2}}+1}^{k}{{k \choose r}\frac{1}{2^k}}
  \right) \nonumber \\
&=&
  \frac{T_k(S)}{2}
  \left(
    \prob{Bin\left(k-1, \frac{1}{2}\right) \leq \floor{\frac{k}{2}} - 1}
    + \frac{k-1}{k}\prob{Bin\left(k, \frac{1}{2}\right) \geq \floor{\frac{k}{2}} + 1}
  \right) \nonumber\\
&=&
  \frac{T_k(S)}{2}
  \left(
    \left(\frac{1}{2} - {{k-1} \choose (k-1)/2}\frac{1}{2^k}\right)
    + \frac{k-1}{k} \cdot \frac{1}{2}
  \right) \nonumber\\
&=&
\frac{T_k(S)}{2}\left(1 - \left({{k-1} \choose (k-1)/2}\frac{1}{2^k} + \frac{1}{2k}\right)\right). \label{eqn:something43}
\end{eqnarray}
Lemma \ref{lem:induction111} shows by induction that for any odd $k > 25$
\begin{equation}\label{eqn:something45}
{k-1 \choose (k-1)/2}\frac{1}{2^k} + \frac{1}{2k} \leq \frac{1}{2\sqrt{k}} \ .
\end{equation}
(Note that by (\ref{eq:after_binon3}) we get that
$$
{k-1 \choose (k-1)/2}\frac{1}{2^k} + \frac{1}{2k} \le \frac{1}{2}\left(\frac{e}{\pi} \frac{1}{\sqrt{k-1}} + \frac{1}{k} \right) $$
which is smaller than $1/(2\sqrt{k})$ for $k \ge 63$.)

Substituting (\ref{eqn:something45}) into (\ref{eqn:something43}) we get
\begin{equation*}
T_{\floor{k/2}}(S)  \geq \frac{T_k(S)}{2}\left(1 - \frac{1}{2\sqrt{k}}\right) \ ,
\end{equation*}
finishing the proof for $k$ odd.

\qed\end{proof}

\begin{lemma} \label{lem:randomsubsets}
Let $S = \set{s_1 > s_2 > \dots > s_n}$ and let $S'$ be a subset of $S$ chosen uniformly at random
amongst all subsets of $S$. Let $t, k$ be integers s.t. $t\le k \le n$. Then\\
$$\Exp{[T_t(S')]} \geq T_{k}(S) \sum_{r=1}^{k}{{k \choose r}\frac{1}{2^{k}} \frac{\min(r, t)}{k} } \ . $$
\end{lemma}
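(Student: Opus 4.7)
The plan is to restrict attention to the top $k$ elements of $S$ and analyze their random intersection with $S'$. Let $S_k=\{s_1,\dots,s_k\}$ and $R=S'\cap S_k$. Then $|R|\sim\mathrm{Bin}(k,1/2)$, and conditioned on $|R|=r$, the set $R$ is a uniformly random $r$-subset of $S_k$. The key structural observation is that every element of $S'\setminus R$ has value strictly less than every element of $R$ (since $R$ collects exactly those elements of $S'$ that lie among the top $k$ of $S$). Consequently the top $\min(|R|,t)$ elements of $R$ are all contained among the top $t$ elements of $S'$, which gives
\[
T_t(S')\;\geq\;\sum_{s_i\in\mathrm{top}_{\min(|R|,t)}(R)} s_i.
\]

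Next, fix $|R|=r$ and set $M=\min(r,t)$. Let $p_i=\Pr[s_i\in\mathrm{top}_M(R)\mid |R|=r]$. Then $\sum_{i=1}^{k}p_i=M$ deterministically, because exactly $M$ elements of $R$ are selected. The main technical step is to show that $p_i$ is non-increasing in $i$. I would prove this by a swap argument: the map $R\mapsto R\,\triangle\,\{s_i,s_{i+1}\}$ is a bijection between $\{R:s_i\notin R,\,s_{i+1}\in R\}$ and $\{R:s_i\in R,\,s_{i+1}\notin R\}$ that preserves the rank of the moved element, while in the symmetric case where both $s_i$ and $s_{i+1}$ lie in $R$, one has $s_{i+1}\in\mathrm{top}_M(R)\Rightarrow s_i\in\mathrm{top}_M(R)$ because $s_i>s_{i+1}$. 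Combining these contributions yields $p_i\ge p_{i+1}$.

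With both sequences $(s_i)$ and $(p_i)$ non-increasing, Chebyshev's sum inequality gives
\[
\sum_{i=1}^k s_i p_i \;\geq\; \frac{1}{k}\left(\sum_{i=1}^k s_i\right)\left(\sum_{i=1}^k p_i\right) \;=\; \frac{M}{k}\,T_k(S).
\]
Hence $\mathrm{E}[T_t(S')\mid |R|=r] \ge (\min(r,t)/k)\,T_k(S)$, and removing the conditioning on $|R|$ using the binomial law (and noting the $r=0$ term vanishes) yields the stated bound. The main obstacle is the monotonicity claim for $p_i$; once that is in hand, Chebyshev and the law of total expectation finish the proof cleanly.
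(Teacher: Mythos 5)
Your proposal is correct. The outer structure coincides with the paper's: restrict to the top $k$ elements $R=\set{s_1,\dots,s_k}$, condition on $r=|S'\cap R|$ (binomial with parameter $1/2$, uniform $r$-subset given $r$), prove the conditional bound $\mathrm{E}\left[T_t(S')\mid r\right]\geq \frac{\min(r,t)}{k}T_k(S)$, and average over $r$. Where you differ is in the inner step: you bound $\mathrm{E}\left[v(\mathrm{top}_{\min(r,t)}(S'\cap R))\mid r\right]$ elementwise via inclusion probabilities $p_i$, prove $p_i$ is non-increasing by a swap bijection, and then invoke Chebyshev's sum inequality to get $\sum_i s_i p_i \geq \frac{\min(r,t)}{k}T_k(S)$. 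The paper gets the same conditional bound with no correlation inequality at all: by symmetry $\mathrm{E}\left[v(S'\cap R)\mid r\right]=\frac{r}{k}T_k(S)$, and deterministically the top $t$ elements of any $r$-element set carry at least a $\frac{t}{r}$ fraction of its total value (and all of it when $t\geq r$), so $T_t(S')\geq T_t(S'\cap R)\geq \frac{\min(r,t)}{r}\,v(S'\cap R)$ pointwise. Your route costs an extra lemma (the monotonicity of the $p_i$, which your swap argument does establish) and an appeal to Chebyshev's inequality, but in exchange it identifies exactly which elements land in the selected top segment, a slightly finer piece of information than the averaging argument uses; the paper's argument is shorter and needs only linearity and a trivial averaging fact. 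Both are valid and yield the identical bound.
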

\begin{proof}
Let $R = \set{s_1 > \ldots > s_k}$ be the $k$ elements of largest value in $S$.
Conditioned upon $\size{S'\cap  R} = r$, the expectation of
$v(S' \cap R)$ is  $r/k$ times the sum of the values in $R$ which is $T_k(S)$.

The lemma now follows by summing over all
possible values of $r$ using the following two facts:
\begin{enumerate} \item The probability  that
 $\size{S'\cap R}=r$ is ${k
\choose r}  \frac{1}{2^k}$.
\item  If $t\le r$ then $T_t(S')\ge T_t(S'\cap R) \ge \frac{t}{r} v(S' \cap R)$ and if $t\ge r$ than
$T_t(S')\ge  v(S' \cap R)$.
\end{enumerate}
\qed\end{proof}

\begin{lemma}\label{lem:induction111}
For any odd $k > 25$
$${k-1 \choose (k-1)/2}\frac{1}{2^{k}} + \frac{1}{2k} \leq \frac{1}{2\sqrt{k}}$$
\end{lemma}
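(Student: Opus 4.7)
My plan is to proceed by induction on odd $k$, with base case $k = 27$. Let $f(k) := \binom{k-1}{(k-1)/2}/2^k + 1/(2k)$; the claim is $f(k) \le 1/(2\sqrt{k})$ for every odd $k > 25$.

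For the base case $k = 27$, I will verify the inequality by direct numerical computation using the explicit value $\binom{26}{13} = 10400600$. Rewriting $1/(2\sqrt{27}) = 1/(6\sqrt{3})$ and moving $1/54$ to the right-hand side, the claim reduces to the explicit integer comparison $54 \cdot 10400600 \le 2^{27}(3\sqrt{3} - 1)$, which follows from the elementary bound $\sqrt{3} > 1.732$ (since $1.732^2 < 3$). This step is essentially tight: the slack at $k=27$ is only about $2\cdot 10^{-4}$, which is why a cleaner asymptotic bound via Stirling's formula does not succeed at $k = 27$ and the base case really has to be checked by hand.

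For the inductive step, the key observation is the ratio between consecutive central binomial coefficients. Writing $m = (k-1)/2$, the identity $\binom{2m+2}{m+1} = \frac{2(2m+1)}{m+1}\binom{2m}{m}$ combined with $2^{k+2} = 4\cdot 2^k$ gives $\binom{k+1}{(k+1)/2}/2^{k+2} = \tfrac{k}{k+1}\cdot\binom{k-1}{(k-1)/2}/2^k$. Substituting this into the definition of $f(k+2)$ and combining the leftover terms $-\tfrac{1}{2(k+1)}$ and $+\tfrac{1}{2(k+2)}$ yields
\[
f(k+2) \;=\; \frac{k}{k+1}\,f(k) \;-\; \frac{1}{2(k+1)(k+2)}.
\]
Plugging in $f(k) \le 1/(2\sqrt{k})$ from the inductive hypothesis, the goal $f(k+2) \le 1/(2\sqrt{k+2})$ reduces to $\frac{\sqrt{k}}{2(k+1)} - \frac{1}{2(k+1)(k+2)} \le \frac{1}{2\sqrt{k+2}}$, for which it suffices to prove the cleaner inequality $\sqrt{k}\,(k+2) \le (k+1)\sqrt{k+2}$. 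Squaring both positive sides, this becomes $k(k+2)^2 \le (k+1)^2(k+2)$, i.e., after dividing by $k+2$, $k(k+2) \le (k+1)^2$, which is simply $0 \le 1$.

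The only delicate step is the base case, where the inequality is essentially tight. The induction step itself is self-strengthening: because $(k+1)\sqrt{k+2}$ strictly exceeds $\sqrt{k}(k+2)$, the slack in the claimed inequality grows from one odd $k$ to the next, so once $k = 27$ is handled the remainder is mechanical.
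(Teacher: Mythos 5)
Your proof is correct and follows essentially the same route as the paper: induction over odd $k$ with base case $k=27$ checked numerically, the ratio identity between consecutive central binomial coefficients giving the factor $\tfrac{k}{k+1}$, and a final reduction to the elementary fact $k(k+2)\le (k+1)^2$ (the paper phrases this as $\sqrt{(k+1)^2-1}/(k+1)<1$ while keeping the small negative correction term you simply drop). The only cosmetic difference is that you package the step as the clean recurrence $f(k+2)=\tfrac{k}{k+1}f(k)-\tfrac{1}{2(k+1)(k+2)}$, which is a tidier presentation of the same computation.
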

\begin{proof}
We prove the lemma by induction on $k$

Note that this inequality doesn't hold for $k
\leq 25$.

For simplicity, we prove the equivalent inequality:
$$\left({k-1 \choose (k-1)/2}\frac{1}{2^k} + \frac{1}{2k}\right)\cdot2\sqrt{k} \leq 1$$
For the basis of the induction we verify this inequality for
$k=27$:
$$\left({k-1 \choose (k-1)/2}\frac{1}{2^k} + \frac{1}{2k}\right)\cdot {2\sqrt{k}} \approx 0.997755077 < 1.$$
We assume that the inequality holds for some odd $k$ and show that it holds for $k+2$. In the following derivation  Inequality (\ref{eqn:here_induction}) follows by the induction hypothesis.
\begin{eqnarray}
\lefteqn{\left({k+2-1 \choose (k+2-1)/2}\frac{1}{2^{k+2}} + \frac{1}{2(k+2)}\right)\cdot 2\sqrt{k+2}} \nonumber\\
&=& \left(\frac{k(k+1)}{((k+1)/2)^2} \cdot \frac{1}{4}\cdot{k-1 \choose (k-1)/2}\frac{1}{2^{k}} + \frac{1}{2(k+2)}\right)\cdot 2\sqrt{k+2} \nonumber\\
&\leq& \left(\frac{k}{k+1}\cdot\left(\frac{1}{2\sqrt{k}} - \frac{1}{2k}\right) + \frac{1}{2(k+2)}\right)\cdot 2\sqrt{k+2} \label{eqn:here_induction}\\
&=& \frac{\sqrt{k}\sqrt{k+2}}{k+1} - \frac{\sqrt{k+2}}{k+1} + \frac{1}{\sqrt{k+2}} \nonumber\\
&=& \frac{\sqrt{k}\sqrt{k+2}}{k+1} + \frac{k+1 - (k+2)}{(k+1)\sqrt{k+2}} \nonumber\\
&=& \frac{\sqrt{k^2 + 2k + 1 - 1}}{k+1} - \frac{1}{(k+1)\sqrt{k+2}} \nonumber\\
&=& \frac{\sqrt{(k+1)^2 - 1}}{k+1} - \frac{1}{(k+1)\sqrt{k+2}} \nonumber\\
&<& 1 + 0  \nonumber\\
&=& 1 \nonumber
\end{eqnarray}
\qed\end{proof}

\noindent{\bf Definitions:}
The  following definitions are used in Lemmata
    (\ref{lem:second-half}) -- (\ref{lem:notSimpleArithmetics2}).

\begin{itemize}
\item
    Let  $S^{[0,1/2)}=\set{y_1 > y_2 >  \cdots > y_{m}} \subseteq S$ be  random variables for the set of items arriving during the time period $[0,1/2)$,
\item
    and let  $S^{[1/2 , 1)}=\set{z_1 > z_2 >  \cdots > z_{\size{S} - m}} \subseteq S$ be  random variables for the set of items arriving during the time period
    $[1/2,1)$.
\item
    Let $Q = \size{\set{z \in S^{[1/2, 1)} \mid z > T}}$ be the number of elements in $S^{[1/2, 1)}$ greater than the threshold $T$.
    In the Algorithm~\ref{algo:tempalg} we define the threshold $T$ where $T=y_\tau$, $\tau=\ceil{k/2}$,  if $\size{S^{[0,1/2)}} \ge \tau$ and
    $T=0$ otherwise (in this case we consider any item of value $0$ as greater than $T$).
\item
    Let $G_i$, $1 \leq i \leq \tau$,  be identical independent  geometric random
    variables, such that for any integer $j\ge 0$, $\prob{G_i = j} =
    \left(\frac{1}{2}\right)^{j+1}$. It follows that the expectation of $G_i$, $\Exp{G_i} = 1$, whereas the variance $\sigma^2[G_i] = 2$. Let $G =
    \sum_{i = 1}^{\tau}{G_i}$. Note also that $\Exp{G} = \tau$ and $\sigma^2[G] = 2\tau$.
\item
    We abbreviate $ \Exp{v(C_{k,\gamma}(S, \theta))}{\theta:S \mapsto [0,1)}$ as
    $\Exp{C_{k,\gamma}(S)}$ as in the proof of Theorem \ref{THEOREM:MAIN}.
\end{itemize}

\begin{lemma} \label{lem:second-half}
Given that $\size{S} \geq 3\ceil{k/2}$, for any $\gamma \leq \gammaUP = \gammaUPVal$ and any $k \leq 1/\gamma$, the expected profit of $C_{k,\gamma}$ during time interval $[1/2,1)$
is at least
$$
\Exp{C_{k, \gamma}^{[1/2, 1)}(S)} \geq
  \frac{1}{1 + \gamma k}
  \left(1 - \sqrt{2}\cdot\alpha(\gamma) - 4.5\gamma\right)
  \left(1 - \frac{2\sqrt{1 + 1/k}}{\sqrt{k}} - \frac{1}{k}\right)\frac{T_k(S)}{2} \ ,
$$

where $\alpha(\gamma) = \theoremC\sqrt{\gamma\ln(1/\gamma)}$ as defined in Theorem \ref{THEOREM:HALF}.

\end{lemma}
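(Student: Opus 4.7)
The plan is to condition carefully on the first-half outcomes and then apply Theorem~\ref{THEOREM:HALF} to the second half. First I condition on $S^{[0,1/2)}$ (which fixes the threshold $T=y_\tau$ with $\tau=\ceil{k/2}$ and the set $Z_{>T}$ of second-half items exceeding $T$), the number $\lambda$ of items already accepted in $[0,1/2)$, and the earliest admissible second-half acceptance time $\delta\in[1/2,1)$. Given this conditioning and additionally fixing $Z_{>T}$ (but not which item of $Z_{>T}$ lands at which time), the arrival times of items in $Z_{>T}$ are i.i.d.\ uniform in $[1/2,1)$; the algorithm greedily picks from these items a maximal $\gamma$-independent subset of their arrival times in $[\delta,1)$, capped at $k-\lambda$. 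Averaging over the bijections mapping $Z_{>T}$ to the arrival positions yields an expected collected value of $(m/\size{Z_{>T}})\sum_{z\in Z_{>T}} v(z)$, where $m$ is the size of the selected set.

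Rescaling $[1/2,1)$ to $[0,1)$ (so the $\gamma$-separation becomes $2\gamma$-separation) and applying Theorem~\ref{THEOREM:HALF} gives $\Exp{m}\geq \size{Z_{>T}}(1-\alpha(2\gamma))/(1+2\size{Z_{>T}}\gamma)$, with additive slack of at most one item lost to the $\delta$-offset (which contributes $O(\gamma)$ multiplicative slack since, in expectation, $\delta-1/2\leq \gamma$) and vanishing loss from the $k-\lambda$ cap once $k$ is moderately large. Using $\alpha(2\gamma)=\theoremC\sqrt{2\gamma\ln(1/(2\gamma))}\leq \sqrt{2}\,\alpha(\gamma)$ explains the $\sqrt{2}\,\alpha(\gamma)$ factor in the stated bound, and consolidating the $\delta$-offset loss with the $\floor{k/2}$-vs-$k/2$ rounding accounts for the $4.5\gamma$ term.

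The remaining task is to lower-bound $\Exp{\sum_{z\in Z_{>T}}v(z)\big/(1+2\size{Z_{>T}}\gamma)}$ by $(1-O(1/\sqrt{k}))\,T_k(S)/(2(1+\gamma k))$. Listing items as $s_1>s_2>\cdots>s_n$ and flipping an independent fair coin per item for ``first'' vs.\ ``second'' half, the threshold $T$ equals $s_M$ where $M$ is the position of the $\tau$-th first-half flip. The gaps between successive first-half flips are i.i.d.\ geometric random variables $G_i$ with $\prob{G_i=j}=2^{-(j+1)}$, so $Q:=\size{Z_{>T}}=\sum_{i=1}^\tau G_i=M-\tau$ has $\Exp{Q}=\tau$ and variance $2\tau$. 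The hypothesis $\size{S}\geq 3\tau$ ensures that $M\leq\size{S}$ except on a Chebyshev-controlled event of probability $O(1/k)$, which furnishes the $1/k$ and $\sqrt{1+1/k}/\sqrt{k}$ correction terms in the stated bound.

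To couple the value sum with its denominator, I restrict to the good event $E=\{Q\leq \tau+c\sqrt{\tau}\}$ for an appropriate absolute constant $c$; on $E$ the denominator satisfies $1+2Q\gamma\leq 1+\gamma k+O(\sqrt{k}\,\gamma)$, so $1/(1+2Q\gamma)\geq (1-O(1/\sqrt{k}))/(1+\gamma k)$ since $\gamma k\leq 1$. For the numerator I observe that each top-$k$ item $s_j$ contributes to $\sum_{z\in Z_{>T}} v(z)$ precisely when $s_j$ is second-half-labeled and the number of first-half labels among $s_1,\dots,s_{j-1}$ is at most $\tau-1$; a binomial tail bound together with Lemma~\ref{lem:randomsubsets}-style counting (analogous to the derivation in Lemma~\ref{lem:Bobby1}) gives $\Exp{\sum_{z\in Z_{>T}} v(z)\cdot\mathbf{1}_E}\geq (1-O(1/\sqrt{k}))\,T_k(S)/2$. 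The main obstacle will be handling the positive correlation between $Q$ and $\sum v(z)$ carefully enough to combine these two bounds without losing leading constants; the approach is to work entirely on $E$ where $Q$ is within $O(\sqrt{k})$ of its mean and absorb the complementary event into the explicit $O(1/\sqrt{k})$ corrections.
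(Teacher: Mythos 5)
Your setup coincides with the paper's machinery: the conditioning that reduces the second-half profit to $\frac{m}{|Z_{>T}|}\sum_{z\in Z_{>T}}v(z)$, the rescaling of $[1/2,1)$ so that Theorem~\ref{THEOREM:HALF} applies with $2\gamma$ (whence $\sqrt{2}\,\alpha(\gamma)$ and the $O(\gamma)$ losses), and the identification of $Q=|Z_{>T}|$ with a sum of $\ceil{k/2}$ geometric variables are exactly Lemmata \ref{lem:half_gamma_indep} and \ref{lem:geometric_equal}. The genuine gap is the final combination step, which you yourself flag: handling the joint behaviour of $Q$, the value sum, the dilution factor $1/(1+2\gamma Q)$, and the budget cap. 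Your proposed fix --- restrict to $E=\{Q\le \tau+c\sqrt{\tau}\}$ for an absolute constant $c$ and absorb $\bar E$ into $O(1/\sqrt k)$ corrections --- cannot work as stated: since $Q$ is a sum of $\tau$ i.i.d.\ geometrics, $\prob{\bar E}$ is a \emph{constant} (CLT/Chebyshev only give $\Theta(1/c^2)$), and conditioned on $\bar E$ the threshold is low, so $\bar E$ carries roughly a $\prob{\bar E}$ fraction of $T_k(S)/2$; discarding it therefore costs a constant factor, not $O(1/\sqrt k)$. Widening the window (e.g.\ to $\Theta(\sqrt{\tau\log k})$) to make the tail negligible destroys the bound $1+2\gamma Q\le(1+\gamma k)(1+O(1/\sqrt k))$ and in any case cannot reproduce the exact constants $\frac{2\sqrt{1+1/k}}{\sqrt k}+\frac1k$ demanded by the statement. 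A second, smaller, inaccuracy: the hypothesis $|S|\ge 3\ceil{k/2}$ is not there to control a ``bad event'' $M>|S|$ of probability $O(1/k)$ that feeds the correction terms; it only guarantees $\prob{Q=q}=\prob{G=q}$ for every $q\le 2\ceil{k/2}$, and no tail probability enters the final bound at all.

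The paper closes this gap differently, and this is the idea your sketch is missing: instead of a good-event truncation, it proves a bound \emph{conditional on} $Q=q$ that degrades linearly in $|q-\ceil{k/2}|$ (Lemma \ref{lemma:prof_r}), treating the two directions separately --- for $q\le\ceil{k/2}$ the value sum is only about $\frac{q}{k}T_k(S)$, while for $q>\ceil{k/2}$ one pays both the cap ratio $\ceil{k/2}/q$ and the dilution $\frac{q}{q+\tau-1}$, with the arithmetic of Lemmata \ref{lem:notSimpleArithmetics1}--\ref{lem:notSimpleArithmetics2} converting both cases into the common factor $1-2\frac{|q-\ceil{k/2}|}{k}-\frac1k$. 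Averaging this over the geometric law of $Q$ then needs only the mean-absolute-deviation inequality $\Exp{|G-\Exp{G}|}\le\sqrt{\mathrm{Var}[G]}=\sqrt{2\ceil{k/2}}\le\sqrt{k+1}$ (Lemma \ref{lem:random_dist}), which produces precisely $1-\frac{2\sqrt{1+1/k}}{\sqrt k}-\frac1k$ with no truncation and no lost constants. Your numerator estimate via Lemma~\ref{lem:randomsubsets}-style counting is fine on its own (the monotonicity of the values against the decreasing inclusion probabilities saves it), but without the linear-in-$|q-\tau|$ conditional bound and the MAD step, the proposal does not yield the stated inequality.
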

\begin{proof}

Splitting the probability space by conditioning on $Q$ we obtain
\begin{eqnarray}
\Exp{C_{k, \gamma}^{[1/2, 1)}(S)}
&=& \sum_{q=0}^{\infty}{\prob{Q = q}\cdot{\Exp{C_{k, \gamma}^{[1/2, 1)}(S) \mid Q = q}}} \nonumber \\
&\geq& \sum_{q = 0}^{2\ceil{k/2}}
    {\prob{Q = q}\cdot{\Exp{C_{k, \gamma}^{[1/2, 1)}(S) \mid Q =
    q}}} \label{eq:condQq} \ .
\end{eqnarray}

Lemma \ref{lemma:prof_r} shows that
${\Exp{C_{k, \gamma}^{[1/2, 1)}(S) \mid Q = q}} \ge \frac{1}{1 + \gamma k}
        \left(1 - \sqrt{2}\cdot\alpha(\gamma) - 4.5\gamma\right)\cdot
        \left(1 - 2\frac{|q - \ceil{k/2}|}{k} - \frac{1}{k}\right)
        \cdot \frac{T_k(S)}{2}$.
       Substituting this lower bound in (\ref{eq:condQq}) we obtain

$$
\begin{array}{l}
\Exp{C_{k, \gamma}^{[1/2, 1)}(S)}
 \geq
  \sum_{q = 0}^{2\ceil{k/2}} \Bigg(\prob{Q = q}\cdot \bigg[\frac{1}{1 + \gamma k}
        \bigg(1 - \sqrt{2}\cdot\alpha(\gamma) - 4.5\gamma\bigg)  \\
\hspace{2.5cm}           \cdot
        \bigg(1 - 2\frac{|q - \ceil{k/2}|}{k} - \frac{1}{k}\bigg)
        \cdot \frac{T_k(S)}{2}
      \bigg]\Bigg)    \\
\hspace{2cm} =
\frac{1}{1 + \gamma k} \cdot
  \left(1 - \sqrt{2}\cdot\alpha(\gamma) - 4.5\gamma\right)
   \frac{T_k(S)}{2} \nonumber \\
\hspace{2.5cm} \cdot  \sum_{q = 0}^{2\ceil{k/2}}
    {
      \prob{Q = q}
      \left(1 - 2\frac{|q - \ceil{k/2}|}{k} - \frac{1}{k}\right)
    }  \label{eqn:geom}
\end{array}$$

Lemma
\ref{lem:geometric_equal}  proves that if $|S|\geq 3 \ceil{k/2}$ then for any $q \leq 2\ceil{k/2}$ the probability that
$Q=q$ is the same as the probability that the sum of $\ceil{k/2}$ identical geometrical random variables
is equal $q$. We denote this sum by $G$ and derive the following lower bound on the last sum of Equation (\ref{eqn:geom}):

\begin{eqnarray}
\lefteqn{\sum_{q = 0}^{2\ceil{k/2}}
    {
      \prob{Q = q}
      \left(1 - 2\frac{|q - \ceil{k/2}|}{k} - \frac{1}{k}\right)
    }
} \nonumber \\
&=&
  \sum_{q = 0}^{2\ceil{k/2}}
    {
      \prob{G = q}
      \left(1 - 2\frac{|q - \ceil{k/2}|}{k} - \frac{1}{k}\right)
    } \nonumber \\
&=&
  \sum_{r = 0}^{\ceil{k/2}}
    {\prob{|G - \ceil{k/2}| = r}\cdot
      \left(1 - 2\cdot\frac{r}{k} - \frac{1}{k}\right)
    } \label{eq:QExplain1} \\
&\geq&
  \sum_{r = 0}^{\infty}
    {\prob{|G - \ceil{k/2}| = r}\cdot
      \left(1 - 2\cdot\frac{r}{k} - \frac{1}{k}\right)
    } \label{eq:QExplain2} \\
&=& 1 - 2\cdot\frac{\Exp{|G -\ceil{\frac{k}{2}}|}}{k}  -
\frac{1}{k} \nonumber \\
&=& 1 - 2\cdot\frac{\Exp{|G -\mathrm{E}[G]|}}{k}  -
\frac{1}{k} \label{eq:Q} \ \ .
\end{eqnarray}
Equality (\ref{eq:QExplain1}) follows by the change of variables $r= |q -\ceil{k/2}|$, Inequality (\ref{eq:QExplain2}) follows
since $\left(1 - 2\cdot\frac{r}{k} - \frac{1}{k}\right) < 0$ for any $r > \ceil{k/2}$ and the last equality follows since $\ceil{k/2} = E(G)$.

Using the fact that $\Exp{|W - \Exp{W}|} \le \sqrt{\mathrm{Var}[W]}$ ($\mathrm{Var}$ here stands for  variance) for any random variable $W$ (Lemma \ref{lem:random_dist})
and the fact that $\mathrm{Var}[G] = 2\ceil{k/2} \le k+1$ we get
\begin{equation} \label{eq:var}
\frac{\Exp{|G - \mathrm{E}[G]|}}{k} \leq
 \frac{\sqrt{k+1}}{k} =
 \frac{\sqrt{1 + 1/k}}{\sqrt{k}}.
\end{equation}

Substituting the upper bound of Equation (\ref{eq:var}) into Equation (\ref{eq:Q}), and then substituting the resulting inequality into Equation
(\ref{eqn:geom}) we obtain the lemma.

\qed\end{proof}

% We observe that the $\gamma$-independent set chosen by $C_{k,\gamma}$ from the items arriving after time $1/2$ and above the threshold $T$ is the same irrespective of the specific values assigned to these items.

It may be useful to observe that lower bound given in the following Lemma \ref{lemma:prof_r}, is approximately equal to the expected size of the  $\gamma$-independent set chosen from the $q$ items arriving during the time interval $[1/2,1)$ divided by $q$, multiplied by the expected value of these $q$ elements ($\left(1 - 2\frac{|q - \ceil{k/2}|}{k} - \frac{1}{k}\right)
        \cdot \frac{T_k(S)}{2}$).

\begin{lemma}\label{lemma:prof_r}
For any $\gamma \leq \gammaUP = \gammaUPVal$, any $k\leq 1/\gamma$, any $S$, $|S|\geq k$, and for any $0 \leq q \leq 2\ceil{k/2}$ we have
    \begin{eqnarray*}
    \lefteqn{\Exp{C_{k, \gamma}^{[1/2, 1)}(S) \mid Q = q}} \\
    &&\geq \frac{1}{1 + \gamma k}\left(1 - \sqrt{2}\alpha(\gamma) -
    4.5\gamma\right)\left(1 - 2\cdot\frac{|q - \ceil{k/2}|}{k}-\frac{1}{k}\right)
    \frac{T_k(S)}{2}\ \ .
    \end{eqnarray*}
\end{lemma}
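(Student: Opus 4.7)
I will follow the three-step outline from Section~\ref{sec:outline}. Fix $Q=q$ and let $\pi$ denote the random partition of $S$ into first- and second-half items; then $T=s_{\tau+q}$ (the $(\tau+q)$-th largest item of $S$) and $Z_{>T}$ contains exactly $q$ items whose arrival times are i.i.d.\ uniform in $[1/2,1)$. Given $\pi$ and these arrival times (but not the bijection between items and arrival times), the algorithm deterministically picks a $\gamma$-independent subset $\Gamma$ of the arrival times of $Z_{>T}$ in $[\delta,1)$, with $\delta=\max(1/2,\,t_\ell+\gamma)$ and $|\Gamma|=\min(k-\lambda,\,m_\gamma(\cdot\cap[\delta,1)))$. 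Averaging over the $q!$ bijections between items and arrival times gives expected value $(|\Gamma|/q)\sum_{z\in Z_{>T}}v(z)$; since $\sum v(z)$ is $\pi$-measurable while $|\Gamma|$ depends on $\pi$ together with independent arrival-time randomness, any uniform lower bound $L_q$ on $\Exp{|\Gamma|\mid\pi}$ (over $\pi$ compatible with $Q=q$) yields
\[\Exp{C_{k,\gamma}^{[1/2,1)}(S)\mid Q=q}\;\ge\;\frac{L_q}{q}\,\Exp{\sum_{z\in Z_{>T}}v(z)\mid Q=q}.\]

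For the value factor, conditioning on $Q=q$ makes $Z_{>T}$ a uniformly random $q$-subset of $\{s_1,\dots,s_{\tau+q-1}\}$; each $s_i$ belongs to it with probability $q/(\tau+q-1)$, so $\Exp{\sum v(z)\mid Q=q}=(q/(\tau+q-1))\sum_{i=1}^{\tau+q-1}s_i$. The elementary sorted-sequence inequality $\sum_{i=1}^m s_i\ge\min(m,k)\,T_k(S)/k$ (valid because the average of any top-$m$ prefix is at least the average of the top $k$) then yields $\tfrac{1}{q}\Exp{\sum v(z)\mid Q=q}\ge T_k(S)/\max(\tau+q-1,\,k)$. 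For the size factor, rescaling $[1/2,1)$ to $[0,1)$ turns $\gamma$-independence into $2\gamma$-independence, so Theorem~\ref{THEOREM:HALF} gives $\Exp{m_\gamma(\cdot)}\ge(1-\alpha(2\gamma))q/(1+2q\gamma)$. The restriction ``$\ge\delta$'' costs at most one element since $\delta-1/2<\gamma$ (any $\gamma$-independent set contains at most one point in a slab of length $<\gamma$), and the budget $k-\lambda\ge\tau$ — together with the observation that in the regime where the right-hand side of the lemma is positive the budget is essentially non-binding — trims $|\Gamma|$ by at most an $O(1)$ additive term. Using $\alpha(2\gamma)\le\sqrt{2}\alpha(\gamma)$ yields $L_q\ge(1-\sqrt{2}\alpha(\gamma))q/(1+2q\gamma)-O(1)$.

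Substituting both bounds,
\[\Exp{C_{k,\gamma}^{[1/2,1)}(S)\mid Q=q}\;\ge\;\left(\frac{(1-\sqrt{2}\alpha(\gamma))q}{1+2q\gamma}-O(1)\right)\frac{T_k(S)}{\max(\tau+q-1,\,k)}.\]
Using $q\le 2\tau\le k+1$ with $k\gamma\le 1$ (so $1+2q\gamma\le 2(1+k\gamma)$), a short case analysis on whether $\max(\tau+q-1,k)=k$ or $\max(\tau+q-1,k)=\tau+q-1$ rearranges the product into the claimed form $\tfrac{1}{1+k\gamma}(1-\sqrt{2}\alpha(\gamma)-4.5\gamma)(1-2|q-\tau|/k-1/k)T_k(S)/2$. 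The main obstacle is this final rearrangement: one must verify that the $O(1)$ additive losses, the slack between $1+2q\gamma$ and $2(1+k\gamma)$, and the passage from $\alpha(2\gamma)$ to $\sqrt{2}\alpha(\gamma)$ all fit within the $4.5\gamma$ safety margin, with separate checks for even and odd $k$. The boundary cases $q=0$ or $q=2\tau$ are vacuous since the right-hand side is already non-positive; the genuinely delicate regime is $q\approx\tau$, where both sides scale as $T_k(S)/(2(1+k\gamma))$ and the budget constraint is non-binding, so the bound is tight up to the $4.5\gamma$ slack.
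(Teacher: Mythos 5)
Your overall architecture (exchangeability over the $q!$ bijections, a value factor $\ge T_k(S)/\max(\tau+q-1,k)$ via the probability $q/(\tau+q-1)$, and a size factor from Theorem~\ref{THEOREM:HALF} after rescaling) matches the paper's proof, and the value-factor half is essentially correct (modulo the unhandled case $\size{S^{[0,1/2)}}<\tau$, where $T$ is zero — the paper checks this separately and it only helps). The gap is in how you account for the two corrections to the size factor. First, the budget: your claim that in the positive-RHS regime the budget is ``essentially non-binding'' and trims $\size{\Gamma}$ by $O(1)$ is false. For $\ceil{k/2}<q<\ceil{k/2}+(k-1)/2$ the RHS of the lemma is still positive, yet the first half may have consumed $\lambda=\floor{k/2}$ of the budget, leaving only $\ceil{k/2}$ slots, while for tiny $\gamma$ the $q$ arrivals are typically all $\gamma$-separated, so the unconstrained independent set has size $q$ and the trimming is $q-\ceil{k/2}$, which can be $\Theta(k)$ (e.g.\ $q=\ceil{k/2}+k/4$). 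Hence your intermediate bound $L_q\ge(1-\sqrt2\alpha(\gamma))q/(1+2q\gamma)-O(1)$ is simply not true in that regime, and the final inequality cannot be derived from it. The paper instead treats the budget multiplicatively, $\min(s,\ceil{k/2})\ge s\cdot\ceil{k/2}/q$, and it is exactly the product of this factor with the value dilution $q/(\tau+q-1)$ (Lemmas~\ref{lem:notSimpleArithmetics1} and~\ref{lem:notSimpleArithmetics2}) that produces the $\bigl(1-2|q-\ceil{k/2}|/k-1/k\bigr)$ term; it is not a lower-order correction.

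Second, charging a deterministic ``$-1$'' for the restriction to $[\delta,1)$ is too coarse to fit in the stated constants. An additive unit loss in $\size{\Gamma}$ costs at least about $T_k(S)/(2k)$ in the final bound, whereas the only slack beyond the paper's own uses of the $4.5\gamma$ margin is roughly $\tfrac{T_k(S)}{2(1+\gamma k)}\bigl(\tfrac12\gamma+\tfrac1k\bigr)$, which is smaller whenever $\gamma\ll 1/k$ (allowed, since only $k\le 1/\gamma$ is assumed); concretely, for $k$ even and $q=1$ the lemma's RHS is positive while your $L_1$ bound is non-positive, so the chain cannot certify it. The loss must be taken in expectation: the probability that any of the $q$ uniform arrivals in $[1/2,1)$ lands in $[1/2,\delta)$ with $\delta-1/2<\gamma$ is at most $2\gamma q$, which is what Lemma~\ref{lem:half_gamma_indep} converts into the multiplicative $4\gamma$ term inside the parenthesis (leaving $0.5\gamma$ of the $4.5\gamma$ for the $\tfrac{1}{1+\gamma(k+1/2)}\ge\tfrac{1-\gamma/2}{1+\gamma k}$ conversion). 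With these two repairs — probabilistic accounting of the $\delta$-restriction and the multiplicative $\ceil{k/2}/q$ budget factor — your outline becomes the paper's proof; as written, both the $O(1)$ claims are unjustified and one of them is genuinely false.
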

\begin{proof}

We condition  upon there being $q$ items exceeding the threshold in $S^{[1/2,1)}$. Let $S_{>T}$ be the set of all items in $S$ greater than $T$.

Assume $|S^{[0,1/2)}|\geq \tau$ ({\sl i.e.}, sufficiently many items arrive before time $1/2$ so as to take $y_\tau$ as a threshold).  Therefore there are exactly $\tau-1$ items $> y_\tau$ in $S^{[0,1/2)}$, and from the conditioning there are exactly $q$ items $> y_{\tau}$ in $S^{[1/2,1)}$, so  $|S_{>T}|=\tau-1+q$ is the number items in $S$ that are strictly greater than the threshold $y_\tau$. Thus, the probability that an item $x\in S_{>T}$  is in $S^{[1/2,1)}$ is
\begin{equation*}
\prob{x \in S^{[1/2, 1)} \mid Q = q} = \frac{q}{|S_{>T}|}=\frac{q}{\tau + q - 1}\ .
\end{equation*}

If $|S^{[0,1/2)}|< \tau$, the threshold is set to be zero, and  $S=S_{>T}$. The size of $S$ is $|S^{[0,1/2)}|+q< \tau+q$.
It follows that the probability that an item $x\in S_{>T}$  is in $S^{[1/2,1)}$ is
\begin{equation*}
\prob{x \in S^{[1/2, 1)} \mid Q = q} =  \frac{q}{|S_{>T}|} \geq \frac{q}{\tau + q - 1}\ .
\end{equation*}

We now consider two cases: $q\leq\ceil{k/2}$, and $\ceil{k/2} < q \leq 2\ceil{k/2}$.

For $q\leq\ceil{k/2}$ the expected sum of the values of the $q$ items in $S^{[1/2,1)}\cap S_{>T}$ is
\begin{eqnarray}
\Exp{v(\set{x \mid x \in S^{[1/2,1)}\cap S_{>T}}) \mid Q = q}
&=& \sum_{x\in S_{>T}}{\prob{x \in S^{[1/2, 1)} \mid Q = q}}\cdot v(x)\nonumber
\\&=& \frac{q}{|S_{>T}|} \cdot \sum_{x\in S_{>T}} v(x)\nonumber
\\&\geq& \frac{q}{k}\cdot T_k(S) \label{eqn:partial1}
\end{eqnarray}

Equation (\ref{eqn:partial1}) follows since $|S_{>T}|\leq\tau - 1+q \leq \ceil{k/2}-1+\ceil{k/2} \leq k$ and $T_k(S)/k$ (the average value of the $k$ largest items) is smaller or equal to the average value of the items greater than $T$, ($\sum_{x\in S_{>T}} v(x)/|S_{>T}|$).

Since $q \leq \ceil{k/2}$, $C_{k, \gamma}$ chooses a $\gamma$-independent subset
of size at least as large as the size of the maximum $\gamma$-independent subset of those among
the $q$ items arriving in the interval $[1/2+\gamma,1)$.

Fixing the arrival times of the $q$ items in $S^{[1/2, 1)}$ which are above the threshold, the assignment of items to these times is a random permutation.
This implies that the expected value of the items in the $\gamma$-independent subset picked by $C_{\gamma,k}$ equals to the cardinality of this
$\gamma$-independent subset divided by $q$, multiplied by the expected value of these items.
Lemma (\ref{lem:half_gamma_indep}) gives a  lower bound on the expected size of the $\gamma$-independent subset
picked by $C_{k, \gamma}$ and Equation (\ref{eqn:partial1}) gives a lower bound on the expected value of the $q$ items, therefore:
\begin{eqnarray}
\lefteqn{\Exp{C_{k, \gamma}^{[1/2, 1)}(S) \mid Q = q}} \nonumber \\
&\geq&
  \frac{1}{q}\cdot\left[\frac{q}{1 + 2\gamma q}\left(1 - \sqrt{2}\alpha(\gamma) -4\gamma\right)\right]
  \cdot
  \left[\frac{q}{k}\cdot T_k(S)\right]\label{eq:prevExample} \\
&\geq&
  \frac{1}{1 + \gamma (k+1/2)}
  \left(1 - \sqrt{2}\alpha(\gamma) - 4\gamma\right)
  \cdot\left[\frac{q}{k}\cdot T_k(S)\right] \nonumber \\
&=&
  \frac{1}{1+\gamma k} \cdot \frac{1+\gamma k}{1 + \gamma (k+1/2)}
  \left(1 - \sqrt{2}\alpha(\gamma) - 4\gamma\right)
  \left[\frac{q}{k}\cdot T_k(S)\right] \label{eq:chain1} \\
&=&
  \frac{1}{1+\gamma k} \cdot \frac{1}{1 + \gamma/2 \cdot 1/(1+\gamma k))}
  \left(1 - \sqrt{2}\alpha(\gamma) - 4\gamma\right)
  \left[\frac{q}{k}\cdot T_k(S)\right]\nonumber \\
&\geq&
  \frac{1}{1+\gamma k} \cdot \frac{1}{1 + \gamma/2}
  \left(1 - \sqrt{2}\alpha(\gamma) - 4\gamma\right)
  \left[\frac{q}{k}\cdot T_k(S)\right]\nonumber \\
&\geq&
  \frac{1}{1+\gamma k} \cdot (1-\gamma/2)
  \left(1 - \sqrt{2}\alpha(\gamma) - 4\gamma\right)
  \left[\frac{q}{k}\cdot T_k(S)\right]\label{eq:chain3} \\
&\geq&
  \frac{1}{1+\gamma k}
  \left(1 - \sqrt{2}\alpha(\gamma) - 4.5\gamma\right)
  \left(1 - 2\cdot\frac{|q - \ceil{k/2}|}{k}\right)
  \frac{T_k(S)}{2}\nonumber \\
&\geq&
  \frac{1}{1+\gamma k}
  \left(1 - \sqrt{2}\alpha(\gamma) - 4.5\gamma\right)
  \left(1 - 2\cdot\frac{|q - \ceil{k/2}|}{k} - \frac{1}{k}\right)
  \frac{T_k(S)}{2}\nonumber
\end{eqnarray}
where Inequality \ref{eq:chain1} follow since $q \leq \ceil{k/2}$ and Inequality (\ref{eq:chain3}) follows since $\frac{1}{1-x} \geq 1+x$ for any $x$.
In addition we have that for any $\gamma \leq \gammaUP$
$$\left(1 - \sqrt{2}\alpha(\gamma) - 4.5\gamma\right) \geq \left(1 - \sqrt{2}\alpha(\gammaUP) - 4.5\gammaUP\right) > 0\ ,$$
therefore all terms in the previous calculations are positive meaning that decreasing any of them decreases the whole product.
This concludes the proof for $q \leq \ceil{k/2}$.

For $\ceil{k/2} < q \leq 2\ceil{k/2}$, the expected sum of the values of the $q$ items in $S^{[1/2,1)}\cap S_{>T}$ is
\begin{eqnarray}
\Exp{v(\set{x \mid x \in S^{[1/2,1)}\cap S_{>T}}) \mid Q = q}
&\geq& \sum_{x\in S_{>T}}{\prob{x \in S^{[1/2, 1)} \mid Q = q}}\cdot v(x) \nonumber
\\&=& \frac{q}{\size{S_{>T}}}\sum_{x\in S_{>T}} v(x)\nonumber
\\&\geq& \frac{q}{q + \tau - 1} \sum_{x\in S_{>T}} v(x)\nonumber \\
&\geq& \frac{q}{q + \tau - 1} T_k(S) \label{eqn:partial2} \ .
\end{eqnarray}
If $\size{S^{[0,1/2)}} < \tau$ the threshold is zero so $S_{>T} = S$ and therefore Inequality (\ref{eqn:partial2}) follows.
Otherwise $\size{S_{>T}} = \tau - 1 + q \geq \ceil{k/2} - 1 + \ceil{k/2} + 1\geq k$ so $S_{>T}$ contains the $k$ largest items of $S$, and therefore Inequality (\ref{eqn:partial2}) follows.

Let $s$ be the
size of the maximum $\gamma$-independent subset of the
the $q$ items larger than $T$ arriving in the interval $[1/2+\gamma,1)$.

Since $C_{k, \gamma}$ is restricted to choose at most $\ceil{k/2}$ items in the time interval $[1/2, 1)$ the size of the $\gamma$-independent subset that $C_{k,\gamma}$ chooses
is at least $\min(s,\ceil{k/2})$. Therefore,
\begin{eqnarray}
\Exp{\size{C_{k,\gamma}^{[1/2,1)}(S)} \mid Q=q}
&\geq&\Exp{\min\left(s, \ceil{k/2}\right)  \mid Q=q} \nonumber
\\&=& \Exp{\frac{s \cdot \ceil{k/2}}{\max(s, \ceil{k/2})}  \mid Q=q} \nonumber
\\&\geq& \Exp{s \mid Q=q}\cdot\frac{\ceil{k/2}}{q} \nonumber
\\&\geq&
\left(
  \frac{q}{1+2\gamma q}
  \left(
    1
    - \sqrt{2}\alpha(\gamma)
    - 4\gamma\right)
\right)
\cdot\frac{\ceil{k/2}}{q} \nonumber
\\&\geq&
\left(
  \frac{q}{1+2\gamma q}
  \left(
    1
    - \sqrt{2}\alpha(\gamma)
    - 4.5\gamma\right)
\right)
\cdot\frac{\ceil{k/2}}{q} \label{eq:blabla1}
\end{eqnarray}
where Equation (\ref{eq:blabla1}) follows from Lemma \ref{lem:half_gamma_indep}.

Combining Equations (\ref{eqn:partial2}) and (\ref{eq:blabla1}) using the same argument that we used to derive Equation (\ref{eq:prevExample}) we get
%%
%%The same argument used to derive Equation (\ref{eq:prevExample})
%%together with Equations (\ref{eqn:partial2}) and (\ref{eq:blabla1})
%%imply that for $q \geq \ceil{k/2}$
\begin{eqnarray}
\lefteqn{\Exp{C_{k, \gamma}^{[1/2, 1)}(S) \mid Q = q} } \nonumber \\
&& \geq \frac{1}{q} \cdot \left[\left(
  \frac{q}{1+2\gamma q}
  \left(
    1
    - \sqrt{2}\alpha(\gamma)
    - 4.5\gamma\right)
\right)\right]
\cdot \left[\frac{\ceil{k/2}}{q}
\cdot \frac{q}{q+\tau - 1}T_k(S) \right] \label{eqn:haba12}
\end{eqnarray}

A simple arithmetic manipulation (for more details see Lemma \ref{lem:notSimpleArithmetics1}) gives us
\begin{equation}\label{eqn:arithmeticsSub1}
\frac{\ceil{k/2}}{\tau + q - 1} \geq \frac{1}{2}\left(1 - \frac{q - \ceil{\half{k}}}{k}\right) \ .
\end{equation}
Substituting (\ref{eqn:arithmeticsSub1}) into (\ref{eqn:haba12}) we obtain
\begin{equation}\label{eqn:beforeArithmeticsSub2}
\Exp{C_{k, \gamma}^{[1/2, 1)}(S) \mid Q = q} \geq \frac{1}{1+2\gamma q}\left(1 - \sqrt{2}\alpha(\gamma) - 4.5\gamma\right) \cdot
\frac{1}{2}\left(1 - \frac{q - \ceil{\half{k}}}{k}\right)T_k(S)
\end{equation}
Another simple arithmetic manipulation (for more details see Lemma \ref{lem:notSimpleArithmetics2}) gives us
\begin{eqnarray}
\frac{1}{1+2\gamma q} \left(1 - \frac{q - \ceil{k/2}}{k}\right)
&\geq& \frac{1}{1+\gamma k}\left(1 - 2\cdot\frac{q-k/2}{k}\right) \label{eqn:preArithmeticsSub2}
\\&\geq& \frac{1}{1+\gamma k}\left(1 - 2\cdot\frac{|q-\ceil{k/2}|}{k} -\frac{1}{k}\right)\ . \label{eqn:arithmeticsSub2}
\end{eqnarray}
By substituting (\ref{eqn:arithmeticsSub2}) into (\ref{eqn:beforeArithmeticsSub2}) the lemma follows.
\qed\end{proof}

\begin{lemma}\label{lem:geometric_equal}
For any set $S$ and any $q \leq \size{S} - \tau$
$$
\prob{Q = q} = \prob{G = q}.
$$
($Q$ and $G$ are defined before \ref{lem:second-half}).
\end{lemma}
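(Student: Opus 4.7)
The plan is to reinterpret the process generating $Q$ as a sequence of independent fair coin flips indexed by the items of $S$ in decreasing order of value, and then to recognize the resulting distribution as the negative binomial distribution that also governs $G$.

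First I would enumerate the items of $S$ in decreasing order of value as $s_1 > s_2 > \cdots > s_n$. Since each arrival time $\theta(s_i)$ is independently uniform on $[0,1)$, the events ``$s_i \in S^{[0,1/2)}$'' (call this outcome L) and ``$s_i \in S^{[1/2,1)}$'' (call this outcome R) are independent Bernoulli$(1/2)$ events across $i$. Let $\sigma_1,\ldots,\sigma_n$ be the resulting i.i.d.\ $\{L,R\}$ sequence.

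Next I would translate $\{Q=q\}$ into a statement about this sequence. Assuming $|S^{[0,1/2)}|\ge\tau$, the threshold is $T=y_\tau$, the $\tau$-th largest item in the left half, so the items strictly greater than $T$ are exactly $s_1,\ldots,s_{\pi-1}$ where $\pi$ is the position of the $\tau$-th L; among these, the $R$'s number exactly $Q$. Hence $\{Q=q\}$ occurs iff the $\tau$-th L in the sequence appears at position $\pi=\tau+q$. The hypothesis $q\le |S|-\tau$ guarantees $\tau+q\le n$, so the event is well defined; conversely, when $|S^{[0,1/2)}|<\tau$, one has $T=0$ and $Q=|S|-|S^{[0,1/2)}|>|S|-\tau\ge q$, so this degenerate case contributes nothing to $\{Q=q\}$.

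The probability that the $\tau$-th L occurs at position $\tau+q$ is the classical negative binomial expression
\[
\prob{Q=q}\;=\;\binom{\tau+q-1}{q}\,2^{-(\tau+q)},
\]
since the first $\tau+q-1$ flips must contain exactly $\tau-1$ L's and the flip at position $\tau+q$ must be L. Finally I would identify this with $\prob{G=q}$: interpreting $G_i$ as the number of R's strictly between the $(i-1)$-th and $i$-th L in the same coin sequence, the $G_i$ are i.i.d.\ with $\prob{G_i=j}=2^{-(j+1)}$ as in the paper's definition, and $G=\sum_{i=1}^{\tau}G_i$ equals (position of the $\tau$-th L)$\,-\,\tau$. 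Summing $\tau$ such shifted geometrics yields the same negative binomial mass function displayed above, so $\prob{G=q}=\prob{Q=q}$. There is no substantive obstacle here: the argument is essentially the observation that both sides describe ``wait for the $\tau$-th success in a fair-coin sequence, count failures.''
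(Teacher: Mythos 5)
Your proposal is correct and rests on the same key idea as the paper's proof: reinterpret the assignment of the items of $S$, taken in decreasing order of value, to the two half-intervals as a sequence of independent fair coin flips, so that for $q\le |S|-\tau$ the event $\{Q=q\}$ is exactly the event that the $\tau$-th ``left'' outcome occurs at position $\tau+q$ (with the degenerate case $|S^{[0,1/2)}|<\tau$ forcing $Q>|S|-\tau$, just as in the paper). The only cosmetic difference is in the last step: you compute both $\prob{Q=q}$ and $\prob{G=q}$ explicitly as the negative binomial mass $\binom{\tau+q-1}{q}2^{-(\tau+q)}$, whereas the paper generates the schedule from the very coin sequence defining $G$ and concludes pathwise that $Q=G$ whenever $G\le|S|-\tau$; both routes are valid and yield the identical distributional identity.
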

\begin{proof}

We reinterpret the random schedule as though we schedule the items in $S$ in decreasing order.

Repeatedly toss a fair coin until $\tau$ ``tails" appear. The length of this sequence is the sum of $\tau$ geometric variables $G_i$ (number of consecutive ``heads") plus $\tau$ (number of ``tails"), let $G=\sum_{i=1}^\tau G_i$ be a random variable for the total number of ``heads" in this sequence.

Traverse this sequence until its end or $S$ is exhausted, schedule the top item remaining in $S$ within the interval $[1/2,1)$ if ``heads", and within the interval $[0,1/2)$ if ``tails".

If $S$ has not been exhausted, place remaining items to appear at random times in time interval $[\,0,1)$.

If $S$ was  exhausted strictly before the end of the sequence --- we have that $G +\tau > \size{S}$ (total number of coin tosses),
and the number of items arriving in the time interval $[0,1/2)$ is $< \tau$, thus all items arriving during $[1/2, 1)$ must be counted in $Q$ and therefore $Q > \size{S} -\tau$ as well.

If $S$ was not exhausted strictly before the last coin toss, then $G +\tau \leq \size{S}$, ergo, any item placed in the time interval $[1/2, 1)$ before the last coin toss contributes to $Q$, and any other item must not be counted in $Q$
(either it was placed in time interval $[0,1/2)$ or its value is less than the threshold value --- the minimum of the $\tau$ largest items scheduled during the time interval $[0,1/2)$. Thus, $Q = G \leq \size{S} - \tau$.

We conclude that $G > \size{S} -\tau$ iff $Q > \size{S} -\tau$ and for any $q \leq \size{S} - \tau$,
  $G = q $ iff $Q = q $. So
\begin{equation*}
\prob{G > \size{S} - \tau} = \prob{Q > \size{S}-\tau},
\end{equation*}
and for any $q \leq \size{S} - \tau$
\begin{equation*}
\prob{G = q} = \prob{Q = q}.
\end{equation*}
\qed\end{proof}

\begin{lemma}\label{lem:random_dist}
For any random variable $W$, with mean $\mu$ and standard deviation
$\sigma$,
$$\mathrm{E}\bigr[|W - \mu|\bigr] \le \sigma$$
\end{lemma}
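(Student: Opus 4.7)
The plan is to observe that this is a one-line consequence of Jensen's inequality (or equivalently Cauchy--Schwarz) applied to the concave function $\sqrt{\cdot}$. Concretely, I would write $|W-\mu| = \sqrt{(W-\mu)^2}$ and then use the fact that for a concave function $\varphi$ and an integrable random variable $Y$, $\Exp{\varphi(Y)} \le \varphi(\Exp{Y})$.

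Applying this with $\varphi(y)=\sqrt{y}$ and $Y=(W-\mu)^2 \ge 0$ gives
\[
\Exp{|W-\mu|} \;=\; \Exp{\sqrt{(W-\mu)^2}} \;\le\; \sqrt{\Exp{(W-\mu)^2}} \;=\; \sqrt{\sigma^2} \;=\; \sigma,
\]
which is exactly the claim.

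There is essentially no obstacle here: the only thing to verify is that $W$ has a finite variance, which is implicit in the hypothesis that the standard deviation $\sigma$ is well-defined. An equally short alternative is Cauchy--Schwarz, writing $\Exp{|W-\mu|\cdot 1}\le \sqrt{\Exp{(W-\mu)^2}\cdot \Exp{1^2}}=\sigma$, but the Jensen form is the cleanest one-liner and fits the style of the surrounding lemmata.
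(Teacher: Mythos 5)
Your proof is correct and is in essence the same argument as the paper's: both amount to the inequality $\mathrm{E}\bigl[|W-\mu|\bigr]^2 \le \mathrm{E}\bigl[(W-\mu)^2\bigr] = \sigma^2$, which you obtain via Jensen (or Cauchy--Schwarz) and the paper obtains by noting that $\mathrm{Var}\bigl[|W-\mu|\bigr] \ge 0$. Nothing is missing; the two phrasings are interchangeable.
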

\begin{proof}
By the definition of the variance
\begin{eqnarray*}
\mathrm{Var}\bigr[|W - \mu|\bigr]
&=& \mathrm{E}\bigr[|W - \mu|^2\bigr] - \mathrm{E}^2\bigr[|W - \mu|\bigr] \\
&=& \mathrm{Var}\bigr[W\bigr] -
\mathrm{E}^2\bigr[|W - \mu|\bigr]
\end{eqnarray*}
Since $\mathrm{Var}\bigr[|W - \mu|\bigr]\geq 0$ it follows that
$$\mathrm{E}^2\bigr[|W - \mu|\bigr] \le \mathrm{Var}\bigr[W\bigr] = \sigma^2\ .$$
The lemma follows by taking the square root of both sides.
\qed\end{proof}

\begin{lemma}\label{lem:half_gamma_indep}
Let $0 < \gamma \leq 1$ and let $\Psi$ be a set of items scheduled at uniform times during the  interval
$[1/2, 1)$, and let $\psi=|\Psi|$. Let $\Psi'$ be the subset of $\Psi$ that are scheduled during the time interval $[1/2+\gamma,1)$. The expected size of the maximum $\gamma$-independent
subset of items from $\Psi'$ is at least
$$\frac{\psi}{1+ 2\gamma \psi}\left(1-\sqrt{2}\alpha(\gamma) - 4\gamma\right),$$
where $\alpha(\gamma) = \theoremC\sqrt{\gamma\ln(1/\gamma)}$ (as defined in Theorem \ref{THEOREM:HALF}).
\end{lemma}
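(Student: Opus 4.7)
The plan is to apply Theorem~\ref{THEOREM:HALF} to the whole set $\Psi$ on $[1/2,1)$ (after rescaling to $[0,1)$), and then pay a small additive penalty for the items that land in the discarded short interval $[1/2,\,1/2+\gamma)$.

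First I would rescale $[1/2,1)$ onto $[0,1)$ via $t \mapsto 2(t-1/2)$. The rescaled arrival times are i.i.d.\ uniform on $[0,1)$, and a $\gamma$-independent set in the original coordinates corresponds exactly to a $2\gamma$-independent set in the rescaled coordinates. Invoking Theorem~\ref{THEOREM:HALF} with $n=\psi$ and parameter $2\gamma$ yields
\[
\Exp{m(\Psi,\gamma)} \;\geq\; \frac{(1-\alpha(2\gamma))\,\psi}{1+2\gamma\psi}.
\]
Since $\ln(1/(2\gamma)) \leq \ln(1/\gamma)$, one has $\alpha(2\gamma) \leq \sqrt{2}\,\alpha(\gamma)$, so this inequality can be restated with $\sqrt{2}\,\alpha(\gamma)$ in place of $\alpha(2\gamma)$.

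Next I would relate $m(\Psi,\gamma)$ to $m(\Psi',\gamma)$. Any $\gamma$-independent subset $M$ of $\Psi$ contains at most one point in the length-$\gamma$ interval $[1/2,\,1/2+\gamma)$, so $M \cap \Psi'$ is a $\gamma$-independent subset of $\Psi'$ of size at least $|M|-1$. Hence $m(\Psi',\gamma) \geq m(\Psi,\gamma) - L$, where the loss $L \in \{0,1\}$ indicates whether a maximum $\gamma$-independent subset of $\Psi$ meets $[1/2,\,1/2+\gamma)$. The expected loss $\Exp{L}$ is bounded both by $1$ and by $\Exp{|\Psi \cap [1/2,\,1/2+\gamma)|} = 2\gamma\psi$, so
\[
\Exp{m(\Psi',\gamma)} \;\geq\; \frac{(1-\sqrt{2}\,\alpha(\gamma))\,\psi}{1+2\gamma\psi} \;-\; \min\bigl(1,\,2\gamma\psi\bigr).
\]

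It remains to massage this into the claimed bound $\psi(1-\sqrt{2}\alpha(\gamma)-4\gamma)/(1+2\gamma\psi)$ by a case split on whether $2\gamma\psi \geq 1$. When $2\gamma\psi \geq 1$, the subtracted $1$ is absorbed into the $-4\gamma\psi$ in the numerator, since $4\gamma\psi/(1+2\gamma\psi) \geq 1$. When $2\gamma\psi < 1$, subtracting $2\gamma\psi$ puts $-2\gamma\psi(1+2\gamma\psi) = -2\gamma\psi - 4\gamma^2\psi$ in the numerator of the common-denominator form, and the bound $4\gamma^2\psi = 2\gamma\cdot(2\gamma\psi) \leq 2\gamma$ closes the gap. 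The main technical obstacle is why one does not instead apply Theorem~\ref{THEOREM:HALF} directly to $\Psi'$ (conditioning on $|\Psi'|$ and rescaling $[1/2+\gamma,1)$ to $[0,1)$): the resulting bound $n'/(1+cn')$ is concave in $n'$, so Jensen's inequality points the wrong way when averaging over the binomial distribution of $|\Psi'|$. Working with the whole set $\Psi$ and absorbing the additive penalty $\min(1,2\gamma\psi)$ sidesteps this concavity issue, at the modest cost of the extra $4\gamma$ slack.
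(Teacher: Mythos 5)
Your proposal is correct and follows essentially the same route as the paper: rescale $[1/2,1)$ to $[0,1)$, apply Theorem~\ref{THEOREM:HALF} with parameter $2\gamma$ to all of $\Psi$, observe that dropping the prefix $[1/2,1/2+\gamma)$ costs at most one point of any $\gamma$-independent set, and absorb the expected loss into the $4\gamma$ slack (the paper bounds this loss by $1-(1-2\gamma)^\psi$ and uses $(1-x)^n\ge 1-nx$, while you use $\min(1,2\gamma\psi)$ with a case split --- the same bookkeeping). The only cosmetic points are a typo in the expanded numerator ($4\gamma^2\psi^2$, which disappears once you factor out $\psi$ as your final check does) and the trivial case $\gamma\ge 1/2$, where the claimed bound is negative and the lemma holds vacuously, which the paper dispenses with in one sentence.
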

\begin{proof}
If $\gamma \geq 1/2$ the lemma holds since the bound is negative, therefore assume $\gamma < 1/2$.

%Let $I$ be an indicator random variable, so that $I = 1$ iff at least one of the $q$ items arrived at the time interval $[1/2, 1/2+\gamma)$.
Let $\rho$ be the size of the maximum $\gamma$-independent subset of $\Psi$, and let $\rho'$ be the size of the maximal $\gamma$-independent subset of $\Psi'$ (both $\rho$ and $\rho'$ are random variables). It is easy to see that $\rho' \geq \rho-1$.

Let $I$ be an indicator variable where $I=1$ if some item in $\Psi$ is scheduled during the time interval $[1/2,1/2+\gamma)$, and $I=0$ otherwise.

\begin{eqnarray}
\Exp{\rho'} &\ge& \prob{I=0} \cdot \Exp{\rho \mid I=0} ~~+~~ \prob{I=1}\cdot\bigr(\Exp{\rho \mid I=1} - 1\bigr) \nonumber\\
% &=& \bigr(\prob{I=0} \cdot \Exp{\rho \mid I=0} ~~+~~ \prob{I=1}\cdot\Exp{\rho \mid I=1}\bigr) ~~-~~ \prob{I=1} \nonumber\\
&=& \Exp{\rho} - \prob{I=1}. \nonumber
\end{eqnarray}
To bound $\Exp{\rho}$ we recall that $\Psi$ consists of $\psi$ points randomly placed in the time interval $[1/2,1)$. Every  $\gamma$-independent subset of $\Psi$ is analogous to a $2\gamma$-independent subset of $\widetilde{\Psi}$ where $\widetilde{\Psi} = \{2p-1 \mid p\in \Psi\}$. {\sl I.e.}, the size of the maximal $\gamma$-independent subset of $\Psi$ is equal to the size of the maximal $2\gamma$-independent subset of $\widetilde{\Psi}$. Ergo, we can apply Theorem \ref{THEOREM:HALF} and hence:
\begin{eqnarray}
\Exp{\rho} - \prob{I=1}
&\geq& \frac{\psi}{1+ 2\gamma \psi}(1-\alpha(2\gamma)) - \prob{I=1} \nonumber\\
&=& \frac{\psi}{1+ 2\gamma \psi}\Bigr(1-\alpha(2\gamma)\Bigr) - (1 - (1-2\gamma)^\psi) \nonumber\\
&=& \frac{\psi}{1+ 2\gamma \psi}\left(1-\alpha(2\gamma) - \frac{(1 - (1-2\gamma)^\psi)(1 + 2\gamma \psi)}{\psi}\right) \nonumber\\
&=& \frac{\psi}{1+ 2\gamma \psi}\left(1-\alpha(2\gamma) - \frac{(1 - (1-2\gamma)^\psi)}{\psi} - 2\gamma(1 - (1-2\gamma)^\psi)\right) \nonumber\\
&\geq& \frac{\psi}{1+ 2\gamma \psi}\left(1-\alpha(2\gamma) - \frac{1 - (1 - 2\gamma\psi)}{\psi} - 2\gamma\right) \label{eq:induc_pow2}\\
&=& \frac{\psi}{1+ 2\gamma \psi}\left(1-\alpha(2\gamma) - 4 \gamma\right) \nonumber \\
&=& \frac{\psi}{1+ 2\gamma \psi}\left(1-\theoremC\sqrt{2\gamma\ln(1/(2\gamma))} - 4 \gamma\right)\nonumber \\
&\geq& \frac{\psi}{1+ 2\gamma \psi}\left(1-\theoremC\sqrt{2\gamma\ln(1/\gamma)} - 4 \gamma\right)\nonumber \\
&=& \frac{\psi}{1+ 2\gamma \psi}\left(1-\sqrt{2}\alpha(\gamma) - 4
\gamma\right). \nonumber
\end{eqnarray}
Where the Inequality (\ref{eq:induc_pow2}) follows since for any $n\in\mathrm{N}$ and any $x \in (0,1)$,  $(1-x)^n > (1-nx)$.
\qed\end{proof}

\begin{lemma}\label{lem:notSimpleArithmetics1}
For any $k\in \mathrm{N}$ and any $q \geq \ceil{k/2}$
$$\frac{\ceil{k/2}}{\tau + q - 1} \geq \frac{1}{2}\left(1 - \frac{(q - \ceil{\half{k}})}{k}\right) \ .$$
\end{lemma}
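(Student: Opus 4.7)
The plan is to reduce the inequality to an elementary quadratic bound in a single variable. Setting $\tau = \lceil k/2 \rceil$ and substituting $q = \tau + j$ where $j \geq 0$ (which is allowed by the hypothesis $q \geq \lceil k/2 \rceil$), the claim becomes
$$\frac{\tau}{2\tau + j - 1} \;\geq\; \frac{k - j}{2k}.$$
Note that $2\tau + j - 1 \geq 1 > 0$ since $\tau \geq 1$, so the left-hand side is positive.

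First I would dispose of the easy case $j \geq k$: then the right-hand side is non-positive while the left-hand side is positive, so the inequality holds trivially. Otherwise, $k - j > 0$ and both sides are positive, so cross-multiplying is equivalence-preserving. After cross-multiplying and expanding, the inequality simplifies (routine algebra) to
$$j^2 + (2\tau - 1 - k)\, j + k \;\geq\; 0.$$

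The key observation is that $\tau = \lceil k/2 \rceil$ forces $2\tau - k \in \{0,1\}$, hence $2\tau - 1 - k \in \{-1, 0\}$. This gives two sub-cases: if $k$ is odd then the linear coefficient is $0$ and we need $j^2 + k \geq 0$, which is obvious; if $k$ is even the inequality becomes $j^2 - j + k \geq 0$, whose discriminant $1 - 4k$ is negative for $k \geq 1$, so the quadratic is strictly positive for all real $j$. In particular it is non-negative for all integers $j \geq 0$, completing the proof.

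I do not expect any real obstacle here --- the only delicate point is being careful that cross-multiplication is valid, which is why I separate out the case $j \geq k$ first. Everything else is a one-line quadratic verification made possible by the fact that $2\lceil k/2 \rceil - k$ takes only the values $0$ and $1$.
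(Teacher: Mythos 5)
Your proof is correct, and it takes a genuinely different route from the paper's. The paper proceeds by a chain of direct lower bounds on the left-hand side: writing the denominator as $2\ceil{k/2} + (q-\ceil{k/2}) - 1$, dropping the $-1$, rewriting the resulting fraction as $\frac{1}{2} - \frac{\frac{1}{2}(q-\ceil{k/2})}{2\ceil{k/2}+(q-\ceil{k/2})}$, then shrinking the denominator to $2\ceil{k/2}$ and finally using $\ceil{k/2}\geq k/2$ to land on $\frac12\bigl(1-\frac{q-\ceil{k/2}}{k}\bigr)$. You instead substitute $j=q-\tau$, split off the trivial case $j\geq k$ so that cross-multiplication is sign-safe, and reduce the claim to the non-negativity of $j^2+(2\tau-1-k)j+k$, which follows because $2\ceil{k/2}-k\in\{0,1\}$ makes the linear coefficient $0$ or $-1$ and the discriminant $1-4k$ is negative in the even case. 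Your version has the advantage of being an exact equivalence up to the sign case split (so it also exposes that the inequality holds for all real $j\geq 0$, not just integers, and with no slack lost along the way), at the cost of an expansion step; the paper's version never leaves the "fraction-manipulation" register and slots stylistically into the surrounding estimates, but each of its steps discards a little. Either argument is complete and elementary, and your handling of the positivity needed for cross-multiplication is exactly the delicate point that needed care.
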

\begin{proof}
\begin{eqnarray}
\frac{\ceil{k/2}}{\tau + q - 1}
&=& \frac{\ceil{k/2}}{\ceil{k/2} + q - 1} \nonumber\\
&=& \frac{\ceil{k/2}}{2\ceil{k/2} + (q - \ceil{k/2}) - 1}  \nonumber\\
&\geq& \frac{\ceil{k/2}}{2\ceil{k/2} + (q - \ceil{k/2})}  \nonumber\\
&=& \left(\frac{1}{2} - \frac{\frac{1}{2} (q - \ceil{\half{k}})}{2\ceil{\half{k}}+(q - \ceil{\half{k}})}\right) \nonumber\\
&\geq& \left(\frac{1}{2} - \frac{1}{2}\cdot\frac{q - \ceil{\half{k}}}{2\ceil{\half{k}}}\right) \label{eq:notSimpleArithmeticsSub11} \\
&\geq& \frac{1}{2}\left(1 - \frac{q - \ceil{\half{k}}}{2(k/2)}\right) \label{eq:notSimpleArithmeticsSub12} \\
&\geq& \frac{1}{2}\left(1 - \frac{q - \ceil{\half{k}}}{k}\right) \ .\label{eq:notSimpleArithmeticsSub13}
\end{eqnarray}
where  (\ref{eq:notSimpleArithmeticsSub11}), (\ref{eq:notSimpleArithmeticsSub12}) and (\ref{eq:notSimpleArithmeticsSub13}) follow since $q \geq \ceil{k/2}$.
\qed\end{proof}

The following Lemma is used in the proof of Lemma \ref{lemma:prof_r}, for bounding the profit of the algorithm from items in the 2nd half, and conditioned on there being $q$ such items above the threshold.

\begin{lemma}\label{lem:notSimpleArithmetics2}
For any $k \in \mathrm{N}$, $\ceil{k/2} \leq q \leq 2\ceil{k/2}$ and any $\gamma$ that satisfies $k \leq 1/\gamma$ we have
\begin{equation}\label{eq:longAlgebraStatement}
\frac{1}{1+2\gamma q} \left(1 - \frac{q - \ceil{k/2}}{k}\right) \geq \frac{1}{1+\gamma k}\left(1 - 2\cdot\frac{q-k/2}{k}\right).
\end{equation}
\end{lemma}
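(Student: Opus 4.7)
The plan is to clear denominators and reduce~(\ref{eq:longAlgebraStatement}) to a polynomial inequality that can be handled by an elementary case analysis. I would introduce the convenient abbreviations $\eta = \ceil{k/2}/k$, $\delta = q/k$, and $\beta = \gamma k$; the hypotheses of the lemma translate into $\eta \in [1/2,\, 1/2 + 1/(2k)]$, $\delta \in [\eta,\, 2\eta]$, and $\beta \le 1$ (the last being exactly $k \le 1/\gamma$). Multiplying both sides of~(\ref{eq:longAlgebraStatement}) by the positive quantity $(1+2\gamma q)(1+\gamma k)$, it suffices to prove
$$(1+\beta)(1 + \eta - \delta) \;\ge\; 2(1+2\beta\delta)(1-\delta).$$

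Next I would expand both sides and separate the terms free of $\beta$ from those involving $\beta$. A direct calculation yields
$$\mathrm{LHS} - \mathrm{RHS} \;=\; \bigl(\delta - (1-\eta)\bigr) \;+\; \beta\, Q(\delta), \qquad \text{where } Q(\delta) := 4\delta^2 - 5\delta + 1 + \eta.$$
The first summand is non-negative because $\delta \ge \eta \ge 1/2 \ge 1 - \eta$. The only possible obstruction is therefore that $Q(\delta)$ can be negative on a sub-interval of $[\eta,\, 2\eta]$; for instance, when $\eta = 1/2$ one has $Q(\delta) < 0$ on $(1/2,\, 3/4)$.

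To handle that regime I would split into sub-cases on the sign of $Q(\delta)$. If $Q(\delta) \ge 0$ the conclusion is immediate. If $Q(\delta) < 0$, I would invoke the hypothesis $\beta \le 1$ together with the fact that multiplying a negative number by something in $[0,1]$ only brings it closer to $0$; this yields $\beta Q(\delta) \ge Q(\delta)$, so
$$\mathrm{LHS} - \mathrm{RHS} \;\ge\; (\delta - 1 + \eta) + Q(\delta) \;=\; 4\delta^2 - 4\delta + 2\eta \;=\; 4\bigl(\delta - \tfrac{1}{2}\bigr)^2 + (2\eta - 1) \;\ge\; 0,$$
the last inequality again using $\eta \ge 1/2$.

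The main obstacle, in my view, is recognizing that the only non-trivial regime is the narrow ``dip'' of $Q$ around $\delta \approx 5/8$ and that precisely the hypothesis $k \le 1/\gamma$ (equivalently $\beta \le 1$) is what neutralizes this dip; once that observation is made, the remaining work reduces to the routine cross-multiplication and the short case split above.
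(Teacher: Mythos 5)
Your proof is correct. With $\eta=\lceil k/2\rceil/k$, $\delta=q/k$, $\beta=\gamma k$ and cross-multiplication (valid since both denominators are positive), the identity $\mathrm{LHS}-\mathrm{RHS}=(\eta+\delta-1)+\beta\left(4\delta^2-5\delta+1+\eta\right)$ checks out, the first summand is non-negative since $\delta\ge\eta\ge 1/2\ge 1-\eta$, and in the regime $Q(\delta)<0$ the hypothesis $\beta\le 1$ gives $\beta Q(\delta)\ge Q(\delta)$, after which the completed square $4\left(\delta-\tfrac{1}{2}\right)^2+(2\eta-1)\ge 0$ closes the argument. This is a genuinely different organization from the paper's proof: there, one first disposes of the case where the right-hand side is non-positive (using the upper bound $q\le 2\lceil k/2\rceil$ to show the left-hand factor is non-negative), then lower-bounds the left side by replacing $\lceil k/2\rceil$ with $k/2$, and finally shows the ratio of the two sides is at least $1$ by substituting $r=q-k/2\ge 0$ and reducing to $r\left(1-\gamma k+4\gamma r\right)\ge 0$, which is exactly where $k\le 1/\gamma$ enters. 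Your route keeps the ceiling throughout, needs no sign analysis of the right-hand side, and in fact never uses the hypothesis $q\le 2\lceil k/2\rceil$; the paper's route isolates the single factor $r(1-\gamma k+4\gamma r)$ and so makes the role of $k\le 1/\gamma$ slightly more transparent, at the cost of the preliminary positivity case split and the ceiling-dropping step. Both reduce to elementary one-variable algebra and both hinge on $\gamma k\le 1$.
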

\begin{proof}
While the factor on the left in the left hand side of (\ref{eq:longAlgebraStatement})  is smaller than factor on the left in the right hand side of (\ref{eq:longAlgebraStatement}), we show that this is compensated for the factors on the right of both sides of the Inequality.

Note that for any $\ceil{k/2} \leq q \leq 2\ceil{k/2}$,
\begin{eqnarray}
1 - \frac{q - \ceil{k/2}}{k}
&\geq& 1 - \frac{\ceil{k/2}}{k} \label{ineq:notsimple}
\\&\geq& 1 - \frac{k/2 + 1/2}{k}\nonumber
\\&=& 1/2 \cdot (1 - 1/k) \nonumber
\\&\geq& 0. \label{ineq:notsimpleagain}
\end{eqnarray}
The derivation (\ref{ineq:notsimple}) above follows by substituting the maximal value $q$ can take.
In Equation (\ref{eq:longAlgebraStatement}) both factors $1/(1+2\gamma q)$ (on the left) and $1/(1+\gamma k)$ (on the right) are strictly positive since both $q$ and $k$ are strictly positive. It therefore follows from (\ref{ineq:notsimpleagain}) that the left hand side of Equation (\ref{eq:longAlgebraStatement}) is non negative. It also follows that if $1 - 2(q-k/2)/k \leq 0$ then the right hand side of Equation (\ref{eq:longAlgebraStatement}) is  $\leq 0$ and the lemma holds.

Thus, we may assume that the right hand side of Equation (\ref{eq:longAlgebraStatement}) is strictly positive, and thus that \begin{equation}1 - 2\cdot\frac{q-k/2}{k}> 0.\label{eq:notsimplepos1}\end{equation}
As $q\geq \ceil{k/2}$ it follows that $q-k/2\geq 0$ and thus inequality (\ref{eq:notsimplepos1}) implies that \begin{equation}1 - \frac{q-k/2}{k}> 0.\label{eq:notsimplepos2}\end{equation}

As $\ceil{k/2} \geq k/2$ we can bound the left hand side of Equation (\ref{eq:longAlgebraStatement}) as follows:  \begin{equation}\frac{1}{1+2\gamma q} \left(1 - \frac{q - \ceil{k/2}}{k}\right)
\geq\frac{1}{1+2\gamma q} \left(1 - \frac{q - k/2}{k}\right).\label{eq:notsimplenoceil}\end{equation}
As the right hand side of Equation (\ref{eq:longAlgebraStatement}) is strictly positive, we can multiply and divide the right hand side of (\ref{eq:notsimplenoceil}) by the right hand side of Equation (\ref{eq:longAlgebraStatement}):
\begin{eqnarray}
\frac{1}{1+2\gamma q} \left(1 - \frac{q - k/2}{k}\right)
&=&\frac{1}{1+\gamma k}\left(1 - 2\cdot\frac{q-k/2}{k}\right) \label{eq:shit_term1}\\
   &&\qquad  \cdot\left(\frac{1+\gamma k}{1+2\gamma q}\right)
    \cdot\left(\frac{1 - \frac{q - k/2}{k}}{1 - 2\cdot\frac{q - k/2}{k}}\right). \nonumber
\end{eqnarray}
Note the the right hand side above in Equation (\ref{eq:shit_term1}) is of the form $\mathrm{RHS}(\mathrm{Equation\ }\ref{eq:longAlgebraStatement})$ times some factor $t$. By assumption $\mathrm{RHS}(\mathrm{Equation\ }\ref{eq:longAlgebraStatement})>0$ so, given Equation (\ref{eq:notsimplenoceil}) and that $t\geq 1$ then the lemma holds, so it suffices to show that $t\geq 1$:
\begin{equation}\label{eqn:preLast}
\frac{1+\gamma k}{1+2\gamma q} \cdot\frac{\left(1 - \frac{q - k/2}{k}\right)}{\left(1 - 2\cdot\frac{q - k/2}{k}\right)} \geq 1.
\end{equation}
As $\gamma$, $k$, $q$ are non negative, and given Inequalities (\ref{eq:notsimplepos1}) and (\ref{eq:notsimplepos2}), it follows that
the numerators and denumerators, in both fractions whose product is on the left hand side of Equation (\ref{eqn:preLast}), are strictly positive.

We substitute $r$ for $q - k/2$, note that $r\geq 0$, it now follows that
\begin{eqnarray*}
\frac{1+\gamma k}{1+2\gamma q} \cdot\frac{\left(1 - \frac{q - k/2}{k}\right)}{\left(1 - 2\cdot\frac{q - k/2}{k}\right)}
&=& \frac{1+\gamma k}{1+2\gamma (k/2 + r)} \cdot\frac{\left(1 - \frac{r}{k}\right)}{\left(1 - \frac{2r}{k}\right)} \\
&=& \frac{1+\gamma k}{1+\gamma k + 2\gamma r} \cdot\frac{\left(k - r\right)}{\left(k - 2r\right)}.
\end{eqnarray*}
So it suffices to show:
$$(1+\gamma k)(k-r) \geq (1+\gamma k + 2\gamma r)(k-2r)$$
or, equivalently,
$$(1+\gamma k)(k-r) - (1+\gamma k + 2\gamma r)(k-2r)\geq 0$$
\begin{eqnarray*}
\lefteqn{(1+\gamma k)(k-r) - (1+\gamma k + 2\gamma r)(k-2r)} \\
&=&(1+\gamma k)k-(1+\gamma k)r - (1+\gamma k)k + 2(1+\gamma k)r - 2\gamma r k + 4\gamma r^2\\
&=&(1+\gamma k)r - 2\gamma r\cdot k + 4\gamma r^2\\
&=&r\bigr(1+\gamma k - 2\gamma k + 4\gamma r\bigr)\\
&=&r\bigr(1-\gamma k + 4\gamma r\bigr)\\
&\geq& 0
\end{eqnarray*}
where the last inequality holds since $k \leq 1/\gamma$ (and therefore $k\gamma \leq 1$) and $4\gamma r \geq 0$.
\qed\end{proof}

%%
%%\section{Removing Extraneous Assumptions}\label{sec:assumption}
%%\input{zerosExtention}

\section{Proof of Theorem \ref{THEOREM:HALF}} \label{sec:half_details}

Define \begin{eqnarray*} \alpha(x) &=&
\theoremC\sqrt{x\ln(1/x)},\\ t &=& \max\set{x \mid \alpha(x)\leq 1 }, \qquad \theoremT = \theoremTVal \approx 1/\theoremInvTValApprox.\end{eqnarray*}

\halfThm*

We first give an overview of the proof of Theorem \ref{THEOREM:HALF}, and then give the full proof in detail.

\subsection{Proof Overview}

We need the following definitions for the proof of Theorem \ref{THEOREM:HALF}. Let $Z=\set{z_1,\ldots,z_n}$ and
 define the random variable $X_i$, $1 \leq i \leq n$ to be the $i$'th smallest point in $Z$. Define the random variable
$C_i$ to be the number of $X_j$'s that lie in the interval
$[X_i,X_i+ \gamma)$. At most one of the $X_j\in[X_i,X_i+ \gamma)$
can belong to a $\gamma$-independent set.

Given any set $S$ of points we define the following \emph{greedy algorithm} that constructs a $\gamma$-independent set.
The greedy algorithm initializes the set with the smallest point in $S$ and then traverses the remaining points in increasing order and adds the point $x$ to the $\gamma$-independent set if $x$ is larger by at least $\gamma$ from any point previously added. We denote by $G(S,\gamma)$ the $\gamma$-independent set computed by applying greedy algorithm to $S$.
Let $I_i$ be a random variable with binary values where $I_i=1$ iff $X_i \in G(Z,d)$, and define $p_i = \prob{I_i=1}.$

Lemma \ref{lem:greedy_best} proves the well known fact that the greedy algorithm picks a $\gamma$-independent set of largest size. So $m(Z,\gamma) = \size{G(Z,\gamma)}$. Our proof uses this fact.

We give an outline of the proof of Theorem \ref{THEOREM:HALF}. The details are in Section \ref{sec:half_details}.
The proof splits into two main subcases:

{\bf ``Small $n$":}
In this case
the expected number of non-overlapping intervals of the form $(z_i-\gamma,z_i]$ that contain no point $z_j$, $j\neq i$ is
sufficiently large.
  Specifically, Lemma  \ref{lem:prfot_small_n} shows that for $n\leq 2\sqrt{e\cdot \ln(1/\gamma)/\gamma}$ the expected number of intervals of this form is at least $\frac{n}{1+n\gamma}(1-\alpha(\gamma))$. All such $z_i$ will be chosen by the greedy algorithm thus proving the theorem for small $n$.

{\bf ``Larger $n$":}
  For $n > 2\sqrt{e \cdot \ln(1/\gamma)/\gamma} $, we need to
  consider the expected number of points $z_j$ ``discarded" when the greedy algorithm chooses some
  $z_i$,
so the key point in this part of the proof is to bound $\Exp{C_i\mid
  I_i=1}$.

Given a set of points $Z$, since the greedy algorithm picks a maximal
$\gamma$-independent set we have that
$m(Z,\gamma)= \sum_{i=1}^n I_i$. Therefore $\Exp{m(Z,\gamma)}=\sum_{i=1}^n p_i$. So to prove the
theorem we seek a lower bound for $\sum_{i=1}^n p_i$.

We note that $\sum_{i=1}^n C_i \cdot I_i = n$. This follows because
(a) the $C_i$ for which $I_i=1$ are non-overlapping so this sum is
$\leq n$, and (b) $\sum_{i=1}^n C_i \cdot I_i < n$ implies that the
greedy algorithm skipped over a point that should have been chosen.
In particular, the expectation of $\sum_{i=1}^n C_i \cdot I_i$ is
also $n$.

Contrawise, it must be that
 \begin{equation*}E\left(\sum_{i=1}^n
C_i \cdot
I_i\right) = \sum_{i=1}^n \Exp{C_i\mid I_i=1} \cdot \mbox{\rm Prob}(I_i=1)
=\sum_{i=1}^n p_i \Exp{C_i\mid I_i=1}.
\end{equation*}

If we could show that
\begin{equation} \label{eqn:beta}
\Exp{C_i,I_i=1}\leq \beta
\end{equation}
 for all $i\in \set{1, \ldots, n}$ then it
would follow that $\sum_{i=1}^n p_i \geq n/\beta$, giving us a lower
bound on the sum of the $p_i$'s.

In fact we show (Lemma \ref{lem:cond_Ci_bound}) that the bound
(\ref{eqn:beta}) holds for $i=1,\ldots, \floor{n/4}$ with
\begin{equation} \label{eqn:beta1}
\beta = \frac{1}{n} ~+~ 1 + \frac{\gamma n}{1 -
\frac{4}{3}\sqrt{\frac{1}{2}\ln{\frac{1}{1-\gamma}} +
\frac{\ln{n}}{n}}} \ .
\end{equation}

To get an intuition of why the upper bound
\begin{equation}
\Exp{C_i,I_i=1}\leq \frac{1}{n} ~+~ 1 + \frac{\gamma n}{1 -
\frac{4}{3}\sqrt{\frac{1}{2}\ln{\frac{1}{1-\gamma}} +
\frac{\ln{n}}{n}}}  \label{eq:outline_c_i}
\end{equation}
holds for $i \le \floor{n/4}$, we note that for small $\gamma$ it is
close to $1+\gamma n$ and we expect to see $\le \gamma n$ points in
an interval of length $\gamma$ following any specific point.
 Specifically, note that when
$\gamma \rightarrow 0$ since $n > 2\sqrt{e \cdot
\ln(1/\gamma)/\gamma} $ then $n \rightarrow \infty$ and
$$\lim_{\substack{{\gamma\rightarrow 0} \\
{n > \sqrt{\frac{e
\ln(1/\gamma)}{\gamma}}}}}{\frac{4}{3}\sqrt{\frac{1}{2}\ln{\frac{1}{1-\gamma}}
+ \frac{\ln{n}}{n}}} = 0 \ ,$$ so  $\Exp{C_i \mid I_i = 1} \approx
{1+\gamma n}.$

We prove Inequality (\ref{eq:outline_c_i})  for $i=1,\ldots,
\floor{n/4}$  in two steps. Lemma \ref{lem:conditional_x_i_bound}
shows that if the greedy algorithm chooses $q_i$  then
 with high probability
 \begin{equation} \label{eqn:mean}
X_i \leq \frac{i-1}{n}+ \sqrt{\frac{1}{2}\ln{\frac{1}{1-\gamma}} +
\frac{\ln{n}}{n}}  \ .
\end{equation}
 (Informally this is to say that $X_i$ is
concentrated around its mean, which is $i/(n+1)$). Next, Lemma
\ref{lem:independent} and Lemma \ref{lem:ci_cond_xi} imply that if
the greedy algorithm chooses $X_i$ then
\begin{equation} \label{eqn:mm}
\Exp{C_i \mid X_i = x_i}= 1 + \frac{1}{1-x_i}(n - i).
\end{equation}
Substituting Equation (\ref{eqn:mean}) in (\ref{eqn:mm}) we get
(\ref{eq:outline_c_i}).

The claim that $n = \sum_{i=0}^{n}{p_i \Exp{C_i \mid I_i = 1}}$
which we mentioned before, extends (with the same argument) to
partial sums. {\sl I.e.}, for all $k \le n$
$$k \leq \sum_{i=0}^{k}{p_i \Exp{C_i \mid I_i = 1}} \ .$$
Using this with $k = \floor{n/4}$ and  substituting the bound
(\ref{eqn:beta}) which we have proven for $1 \leq i \leq
\floor{n/4}$, we get that $\sum_{i=0}^{\floor{n/4}}{p_i} \geq
{1/\beta\cdot\floor{n/4}}$. Let $A_1=\set{X_1, \ldots, X_{\lfloor n/4 \rfloor}}$, we have that the expected size of the maximal independent set in $A$ is at least $\floor{n/4}/\beta$.

In Lemma \ref{lem:iF_eq} we show that for any $0 \leq i < 4$ the expected size of the maximal $\gamma$-independent set in the point sets $A_i=\set{X_{i\floor{n/4}+1}, \ldots, X_{(i+1) \floor{n/4 }}}$ are equal (and thus at least $\floor{n/4}/\beta$).

It therefore follows that in the union $A_1 \cup A_2 \cup A_3 \cup A_4$ the size of the maximal $\gamma$-independent set is at least $4\floor{n/4}/\beta-3$. Ergo, the size of the maximal $\gamma$-independent set in $q_1, \ldots, q_n$ is at least $(n-1)/\beta -3\geq n/\beta - 4$.
 As the greedy algorithm is optimal (Lemma \ref{lem:greedy_best}) it follows that the greedy algorithm gives a $\gamma$-independent set of size at least $n/\beta - 4$.

The theorem follows by showing that for ``large" n,
$$\frac{n}{\beta} - O(1) \geq \frac{n}{1+n \gamma}(1-\alpha(\gamma)).$$

\subsection{Full Proof of theorem \ref{THEOREM:HALF}}

\begin{proof} First let make the following definitions:
\begin{itemize}
    \item Define $Z$ be the set $\set{z_1, z_2, \ldots, z_n}$ where each $z_i$ is sampled uniformly in $[0,1)$.
    \item Define the random variable $X_i$, $1 \leq i \leq n$ to be the $i$'th smallest point in $Z$.
    \item Define the random variable $C_i$ to be the number of $X_j$'s that lie in the interval $[X_i,X_i+ \gamma)$. Note that at most one of the $X_j\in[X_i,X_i+ \gamma)$ can belong to a $\gamma$-independent set.
    \item Given any set $S$ of points we define the following \emph{greedy algorithm} that constructs a $\gamma$-independent set.
The greedy algorithm initializes the set with the smallest point in $S$ and then traverses the remaining points in increasing order and adds the point $x$ to the $\gamma$-independent set if $x$ is larger by at least $\gamma$ from any point previously added. We denote by $G(S,\gamma)$ the $\gamma$-independent set computed by applying greedy algorithm to $S$.
Note that $\size{G(Z, \gamma)} = m(Z, \gamma)$ (Lemma (\ref{lem:greedy_best})).
    \item Let $G(Z,\gamma)$ be a random variable which is the $\gamma$-independent set obtained by the greedy algorithm when applied to the set $Z$.
    \item Let $I_i$ be a random variable with binary values where $I_i=1$ iff $X_i \in G(Z,d)$, and define $$p_i = \prob{I_i=1}.$$
\end{itemize}

For  $n \leq
2\sqrt{e \ln{(1/\gamma) \cdot 1/\gamma}}$ the statement follows from Lemma \ref{lem:prfot_small_n}.

So from here to the end of the proof we  assume that $n >  2\sqrt{e \ln{(1/\gamma) \cdot 1/\gamma}}$ .

Let $A_{i}$, $1\leq i < \theoremFVal$, be the set $\set{X_{1 + i\floor{n/\theoremFVal}}, \ldots, X_{(i+1)\floor{n/\theoremFVal}}}$.

Note that for any  $1\leq i < \theoremFVal$
$$\max{\left(A_{i}\right)}=
X_{(i+1)\floor{\frac{n}{\theoremFVal}}} \leq
X_{1+(i+1)\floor{\frac{n}{\theoremFVal}}} =
\min{\left(A_{i+1}\right)}  \ .$$

It is easy to verify that
$$
\Exp{m(Z, \gamma)} \geq \Exp{m\left(\bigcup_{i = 0}^{\theoremFMinOneVal} {A_{i}}, \gamma\right)}
\geq \sum_{i = 0}^{\theoremFMinOneVal}\Exp{m(A_{i}, \gamma)}-\theoremFMinOneVal
$$
By Lemma \ref{lem:iF_eq}, $\Exp{m(A_{i}, \gamma)} = \Exp{m(A_{0}, \gamma)}$ for $i=1,2,3$. By Lemma \ref{lem:greedy_best},
$m(Z,\gamma)$ is the size of the independent set picked by the greedy algorithm and
 by the definition of the $p_i$'s we have that $\Exp{m(A_{0}, \gamma)} = \sum_{i=1}^{\floor{n/\theoremFVal}}{p_i}$. So we get
\begin{equation} \label{eq:sumpi} \Exp{m(Z, \gamma)} \geq \theoremFVal\cdot\sum_{i=1}^{\floor{n/\theoremFVal}}{p_i}-\theoremFMinOneVal \ .
\end{equation}

For simplicity lets $r(\gamma)$ be as define is Lemma \ref{lem:half_expect}
$$r(\gamma) = \theoremK\left(\sqrt{\frac{1}{2}\ln{\frac{1}{1-\gamma}} + \frac{\beta(\gamma)}{\theoremFVal}(\gamma\ln{(1/\gamma)}) + \frac{\beta(\gamma)\ln{(\beta(\gamma))}}{\theoremFVal}\gamma}\right)\ , $$
where $\theoremK \approx \theoremKVal$ is as defined in Lemma \ref{lem:half_expect} and $\beta(\gamma) = 1+ \frac{2}{\ln{(1/\gamma)}-1}$ is as defined in Lemma \ref{lem:logngamman}.

Substituting the lower bound on  $\sum_{i=1}^{\floor{n/\theoremFVal}}{p_i}$ given by Lemma \ref{lem:half_expect} in Equation (\ref{eq:sumpi}) we get

\begin{eqnarray*}
\Exp{m(Z, \gamma)}
&\geq & \theoremFVal\cdot\frac{n}{\theoremFVal} \cdot \frac{1}{1+\gamma n}\left(1 - \theoremRGamma - \frac{\theoremFVal}{n}\right) - \theoremFMinOneVal\\
&=& \frac{n}{1+\gamma n}\left(1 - \theoremRGamma - \frac{\theoremFVal}{n} - \theoremFMinOneVal\cdot\frac{1 + \gamma n}{n}\right) \\
&=& \frac{n}{1+\gamma n}\left(1 - \theoremRGamma - \frac{\theoremFVal}{n} - \theoremFMinOneVal\left(\frac{1}{n} + \gamma\right)\right) \\
&=& \frac{n}{1+\gamma n}\left(1 - \left(\theoremRGamma + \frac{\theoremTwoFMinOneVal}{n} + \theoremFMinOneVal\gamma\right)\right)
\end{eqnarray*}

Since $n > 2\sqrt{e\ln{(1/\gamma)}\cdot1/\gamma}$ then
\begin{eqnarray}
\Exp{m(Z, \gamma)}
&\geq& \frac{n}{1+\gamma n}\left(1 - \left(\theoremRGamma + \frac{\theoremTwoFMinOneVal}{n} + \theoremFMinOneVal\gamma\right)\right) \nonumber \\
&\geq& \frac{n}{1+\gamma n}\left(1 - \left(\theoremRGamma + \frac{\theoremTwoFMinOneVal}{2\sqrt{e}}\cdot\frac{\sqrt{\gamma}}{\sqrt{\ln{(1/\gamma)}}} + \theoremFMinOneVal\gamma\right)\right) \label{eq:maineq}
\end{eqnarray}

Now we upper bound the terms $\theoremRGamma$, $\frac{\theoremTwoFMinOneVal}{2\sqrt{e}}\frac{\sqrt{\gamma}}{\sqrt{\ln{(1/\gamma)}}} $, and $\theoremFMinOneVal\gamma$ for $0< \gamma < \theoremT$.
We  start with $\theoremRGamma$.
\begin{eqnarray*}
\theoremRGamma &=& \theoremK\sqrt{\frac{1}{2}\ln{\frac{1}{1-\gamma}} + \frac{\beta(\gamma)}{\theoremFVal}(\gamma\ln{(1/\gamma)}) + \frac{\beta(\gamma)\ln{(\beta(\gamma))}}{\theoremFVal}\gamma} \\
&=& \sqrt{\gamma\ln{(1/\gamma)}}\cdot \theoremK\sqrt{\frac{1}{2}\cdot\frac{\log(1/(1-\gamma))}{\gamma\ln{(1/\gamma)}} + \frac{\beta(\gamma)}{\theoremFVal} + \frac{\beta(\gamma)\ln{(\beta(\gamma))}}{\theoremFVal\ln{(1/\gamma)}}} \\
&=& \sqrt{\gamma\ln{(1/\gamma)}}\cdot
\theoremK\sqrt{\frac{1}{2}\cdot\frac{\log(1/(1-\gamma)^{1/\gamma})}{\ln{(1/\gamma)}}
+ \frac{\beta(\gamma)}{\theoremFVal} + \frac{\beta(\gamma)\ln{(\beta(\gamma))}}{\theoremFVal\ln{(1/\gamma)}}} \\
& \le &
\sqrt{\gamma\ln{(1/\gamma)}}\cdot \theoremK\sqrt{\frac{1}{2}\cdot\frac{\log(1/(1-\theoremT)^{1/\theoremT})}{\ln{(1/\theoremT)}} + \frac{\beta(t)}{\theoremFVal} + \frac{\beta(t)\ln{(\beta(t))}}{\theoremFVal\ln{(1/\theoremT)}}}  \ ,
\end{eqnarray*}
where the last inequality follows since  the functions
 $\frac{1}{\ln{(1/\gamma)}}$,
$\ln{(1/(1-\gamma)^{1/\gamma})}$ and $\beta(\gamma)$ are monotonic increasing in $\gamma$. So we have that
$$
\theoremRGamma \le \theoremCOne \sqrt{\gamma\ln{(1/\gamma)}}$$ where
 $\theoremCOne =
 \theoremK\sqrt{\frac{1}{2}\cdot\frac{\log(1/(1-\theoremT)^{1/\theoremT})}{\ln{(1/\theoremT)}}
    + \frac{\beta(t)}{\theoremFVal} + \frac{\beta(t)\ln{(\beta(t))}}{\theoremFVal\ln{(1/\theoremT)}}}
 \approx \theoremCOneVal
 $
 is a constant.

Now we look at the  term $\frac{\theoremTwoFMinOneVal}{2\sqrt{e}}\cdot\frac{\sqrt{\gamma}}{\sqrt{\ln{(1/\gamma)}}}$:
\begin{eqnarray*}
\frac{\theoremTwoFMinOneVal}{2\sqrt{e}}\cdot\frac{\sqrt{\gamma}}{\sqrt{\ln{(1/\gamma)}}}
&=& \frac{\theoremTwoFMinOneVal}{2\sqrt{e}}\frac{1}{\ln{(1/\gamma)}} \sqrt{\gamma\ln{(1/\gamma)}} \\
&\leq&
\frac{\theoremTwoFMinOneVal}{2\sqrt{e}\ln{(1/\theoremT)}} \cdot \sqrt{\gamma\ln{(1/\gamma)}} = \theoremCTwo \sqrt{\gamma\ln{(1/\gamma)}}\ ,
\end{eqnarray*}
where the last inequality follows since
the function
 $\frac{1}{\ln{(1/\gamma)}}$,
is monotonic increasing in $\gamma$, and $\theoremCTwo = \frac{\theoremTwoFMinOneVal}{2\sqrt{e}\ln{(1/\theoremT)}} \approx \theoremCTwoVal$.

Similarly for the   term $\theoremFMinOneVal\gamma$
\begin{eqnarray*}
\theoremFMinOneVal\gamma & = & \theoremFMinOneVal\sqrt{\frac{\gamma}{\ln{(1/\gamma)}}}\sqrt{\gamma\ln{(1/\gamma)}} \\
& \le & \theoremFMinOneVal\sqrt{\frac{t}{\ln{(1/t)}}}\sqrt{\gamma\ln{(1/\gamma)}} = \theoremCThree \sqrt{\gamma\ln{(1/\gamma)}} \ ,
\end{eqnarray*}
where $\theoremCThree =\theoremFMinOneVal\sqrt{\frac{t}{\ln{(1/t)}}} \approx \theoremCThreeVal$.

Summing  all three term together we get that
$$
\theoremRGamma  + \frac{\theoremTwoFMinOneVal}{2\sqrt{e}}\cdot\frac{\sqrt{\gamma}}{\sqrt{\ln{(1/\gamma)}}} + \theoremFMinOneVal\gamma \le (\theoremCOne + \theoremCTwo + \theoremCThree)\sqrt{\gamma\ln{(1/\gamma)}} \le \theoremC \sqrt{\gamma\ln{(1/\gamma)}}\ .
$$
Substituting this bound back in Equation (\ref{eq:maineq}) completes the proof.
\qed\end{proof}

The following lemma is well known and can be proved by induction on $\size{S}$.

\begin{lemma}\label{lem:greedy_best}
For any set $S$, $\size{G(S,\gamma)} = m(S,\gamma)$.
\end{lemma}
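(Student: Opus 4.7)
The plan is to prove the lemma by a standard exchange argument rather than by induction on $\size{S}$, since the exchange argument gives a more direct comparison between the greedy solution and any optimal solution. Let $G(S,\gamma) = \{g_1 < g_2 < \cdots < g_r\}$ denote the output of the greedy algorithm, and let $M = \{m_1 < m_2 < \cdots < m_s\}$ be any maximum $\gamma$-independent subset of $S$, so that $s = m(S,\gamma)$. Since $G(S,\gamma)$ is itself $\gamma$-independent by construction, we have $r \leq s$ immediately, so the real content of the lemma is the opposite inequality $r \geq s$.

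The key claim I would establish is that $g_i \leq m_i$ for every $1 \leq i \leq \min(r,s)$, proven by induction on $i$. For the base case, $g_1$ equals the smallest element of $S$ (by definition of the greedy algorithm), so $g_1 \leq m_1$ trivially. For the inductive step, assume $g_{i-1} \leq m_{i-1}$. Because $M$ is $\gamma$-independent we have $m_i \geq m_{i-1} + \gamma \geq g_{i-1} + \gamma$, so $m_i$ is a valid candidate for the greedy algorithm to select at step $i$ (it lies in $S$ and is at distance at least $\gamma$ from $g_{i-1}$). Since the greedy algorithm picks the \emph{smallest} such candidate, $g_i \leq m_i$.

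Now suppose for contradiction that $r < s$. Then $m_{r+1}$ exists and satisfies $m_{r+1} \geq m_r + \gamma \geq g_r + \gamma$ by the claim applied at $i = r$. But then $m_{r+1}$ would be a point of $S$ lying at distance at least $\gamma$ from the last point $g_r$ chosen by the greedy algorithm, so the greedy algorithm would not have stopped at $g_r$, a contradiction. Hence $r \geq s$, and combining with $r \leq s$ gives $\size{G(S,\gamma)} = r = s = m(S,\gamma)$.

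The only subtle point is the base case, which depends on the exact definition of the greedy algorithm used in the paper: the text says it \emph{initializes} the independent set with the smallest point of $S$, so $g_1 = \min S \leq m_1$ holds unconditionally. No step of the argument should pose a real obstacle; the lemma is standard and the exchange argument is a couple of lines.
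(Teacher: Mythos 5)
Your proof is correct, and it is in fact more than the paper provides: the paper states only that the lemma ``is well known and can be proved by induction on $\size{S}$'' and omits any argument, whereas you give a complete greedy-stays-ahead exchange argument. The two routes differ in flavor. The induction on $\size{S}$ suggested by the paper would typically remove the smallest point $g_1=\min S$, argue that some maximum $\gamma$-independent set may be assumed to contain $g_1$ (replacing its smallest element by $g_1$ cannot violate $\gamma$-independence), and then apply the inductive hypothesis to $\set{x\in S \mid x\ge g_1+\gamma}$; your argument instead fixes an optimal set $M=\set{m_1<\cdots<m_s}$ and proves $g_i\le m_i$ by induction on the index $i$, concluding that greedy cannot stop before reaching $s$ elements. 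Your version avoids the set-surgery of the exchange-and-recurse proof and makes the key invariant explicit. One small point worth stating precisely: the greedy eligibility condition is distance at least $\gamma$ from \emph{all} previously chosen points, not just $g_{i-1}$; since the chosen points are increasing and $m_i\ge g_{i-1}+\gamma\ge g_j+\gamma$ for all $j\le i-1$, this is immediate, but the sentence should acknowledge it. With that cosmetic remark, the argument is complete and correct, including the trivial direction $r\le s$ and the stopping contradiction when $r<s$.
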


The following lemma shows that when  ``$n$ is small" as a function
of $\gamma$, Equation (\ref{eq:theorem}) holds; thus proving Theorem
\ref{THEOREM:HALF} for this case. In the proof of this lemma we will use
some arithmetical lemmas which are proved later (Lemmas \ref{len:1_min_1_n} through \ref{lem:1_min_gamma})

\begin{lemma}\label{lem:prfot_small_n}
For any $0 < \gamma \leq \theoremT$, if $n \leq
2\sqrt{e\ln(1/\gamma)\cdot1/\gamma}$, then:
$$\Exp{m(Z,\gamma)} \geq \frac{n}{1+\gamma n} \left(1-\theoremC \sqrt{\gamma\ln{(1/\gamma)}}\right) \ .$$
\end{lemma}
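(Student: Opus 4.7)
The strategy is an ``isolated-point'' count that is tight precisely when $n$ is small enough that few pairs of points can crowd each other. Sort $Z$ as $X_1<X_2<\cdots<X_n$, and for each index $i$ introduce the indicator $Y_i$ that no other point of $Z$ lies in the open interval $(X_i-\gamma,X_i)$; equivalently, $Y_1\equiv 1$, and for $i\ge 2$, $Y_i=1$ iff $X_i-X_{i-1}\ge\gamma$. The first step is to check that whenever $Y_i=1$ the greedy algorithm accepts $X_i$: for $i=1$ this is immediate, and for $i\ge 2$ the last point accepted by greedy before $X_i$ is processed lies in $\{X_1,\ldots,X_{i-1}\}$ and is therefore at most $X_{i-1}\le X_i-\gamma$, so the greedy acceptance test is satisfied. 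Combined with Lemma~\ref{lem:greedy_best} (greedy is optimal), this yields
\[
\Exp{m(Z,\gamma)}\;\ge\;\Exp{\sum_{i=1}^n Y_i}.
\]

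Next I would evaluate this expectation exactly. Using the well-known fact that every single spacing of $n$ i.i.d.\ uniform samples in $[0,1)$ marginally follows the $\mathrm{Beta}(1,n)$ distribution, $\prob{X_i-X_{i-1}\ge\gamma}=(1-\gamma)^n$ for each $i\in\{2,\ldots,n\}$. Summing,
\[
\Exp{\sum_{i=1}^n Y_i}\;=\;1+(n-1)(1-\gamma)^n.
\]

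The remaining task is purely analytic: verify that
\[
1+(n-1)(1-\gamma)^n \;\ge\;\frac{n}{1+n\gamma}\Bigl(1-\theoremC\sqrt{\gamma\ln(1/\gamma)}\Bigr)
\]
for all $\gamma\le t$ and $n\le 2\sqrt{e\ln(1/\gamma)/\gamma}$, and this is the main obstacle. The plan is to expand $(1-\gamma)^n$ via the second-order Bonferroni lower bound $(1-\gamma)^n\ge 1-n\gamma+\binom{n}{2}\gamma^2-\binom{n}{3}\gamma^3$, clear the denominator $1+n\gamma$, and control the residual terms using the hypothesis-driven estimate $n^2\gamma^2\le 4e\,\gamma\ln(1/\gamma)$, which follows at once from the upper bound on $n$. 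All the leftover error is then of order $\gamma\ln(1/\gamma)=o(\sqrt{\gamma\ln(1/\gamma)})$ for $\gamma\le t$ and is therefore absorbed by the $\theoremC\sqrt{\gamma\ln(1/\gamma)}$ slack; the detailed constant bookkeeping is exactly what the auxiliary arithmetic Lemmas~\ref{lem:1_min_gamma} and their siblings are designed to streamline. The conceptual content is minimal; the difficulty lies entirely in making the constants fit, which is why the cruder bound $(1-\gamma)^n\ge 1-n\gamma$ is not sufficient and the second-order Bonferroni correction is needed.
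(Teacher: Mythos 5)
Your probabilistic core is correct and is essentially the paper's own idea: the paper also lower-bounds $\Exp{m(Z,\gamma)}$ by the expected number of points with no other sample within $\gamma$ to their left, which greedy must accept (Lemmas \ref{lem:min_pi} and \ref{lem:min_exp} give $\Exp{m(Z,\gamma)}\ge n(1-\gamma)^{n-1}$, marginally cruder than your exact spacing count $1+(n-1)(1-\gamma)^n$ from the $\mathrm{Beta}(1,n)$ marginals, which is also correct). Where you genuinely diverge is the closing calculus. The paper proceeds via $(1-\gamma)^{1/\gamma}\ge(1-\gamma)/e$, $e^{-n\gamma}\ge 1-n\gamma$ and $(1-\gamma)^2\ge\frac{1-\gamma}{1+n\gamma}$ (Lemmas \ref{len:1_min_1_n}--\ref{lem:1_min_gamma}), so its error is \emph{linear} in $n\gamma$, roughly $2\sqrt{e}\sqrt{\gamma\ln(1/\gamma)}+\gamma$ at the extreme $n$; your route clears the denominator and keeps the quadratic Bonferroni term, so the leading error is $\tfrac12 n^2\gamma^2\le 2e\,\gamma\ln(1/\gamma)$, i.e.\ \emph{quadratic} in $\sqrt{\gamma\ln(1/\gamma)}$. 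Two caveats on your step 4, which you only sketch. First, the phrase ``of order $\gamma\ln(1/\gamma)=o(\sqrt{\gamma\ln(1/\gamma)})$, hence absorbed'' is asymptotic language, whereas the lemma must hold for all $\gamma$ up to $t$, where $\sqrt{\gamma\ln(1/\gamma)}$ equals $1/3$ and is not small; you need the explicit numeric check, which fortunately closes: $2e\,\gamma\ln(1/\gamma)\le\frac{2e}{3}\sqrt{\gamma\ln(1/\gamma)}\approx 1.81\sqrt{\gamma\ln(1/\gamma)}<3\sqrt{\gamma\ln(1/\gamma)}$ on $(0,t]$, and after multiplying out $(1+n\gamma)$ the remaining terms ($-\tfrac13 u^3+\tfrac16 u^4$ with $u=n\gamma\le 2\sqrt{e}\sqrt{\gamma\ln(1/\gamma)}\le 1.1$, plus $O(u^2/n)$ corrections) only help or are negligible, so the constant $3$ fits with room to spare. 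Your observation that the first-order bound $(1-\gamma)^n\ge 1-n\gamma$ is insufficient is accurate: it would require $4e\,\gamma\ln(1/\gamma)\le 3\sqrt{\gamma\ln(1/\gamma)}$, i.e.\ $\sqrt{\gamma\ln(1/\gamma)}\le 3/(4e)\approx 0.276$, which fails for $\gamma$ near $t$ at the maximal $n$. Second, and worth noting, your finish is if anything tighter than the paper's: the paper's final substitution needs $2\sqrt{e}+\sqrt{t/\ln(1/t)}\le 3$, which is numerically about $3.4$, so its linear-error chain overshoots the stated constant near the extreme $n$, while your quadratic-error accounting verifies the stated bound cleanly; just be aware that the ``constant bookkeeping'' you defer is the actual content of the step and must be written out.
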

\begin{proof}
First we prove the lemma for $n=1$ (Since we would like to use Lemma \ref{lem:1_min_gamma} that works only for $n \ge 2$).
For $n=1$, $x_1 = z_1$ is $\gamma$-independent so $\Exp{m(Z,\gamma)} = 1$ and the right hand side of Equation in the statement is at most $1$.

We now prove the lemma for $n \geq 2$.

For  $n \geq 2$, using Lemma \ref{lem:min_exp}, we derive that
\begin{eqnarray}
 \Exp{m(Z,\gamma)} &\geq& n(1-\gamma)^{n-1} \nonumber \\
 &\geq & n(1-\gamma)^n \nonumber\\
 &=& n\left((1-\gamma)^{1/\gamma}\right)^ {\gamma n} \label{eq:lem42}
\\&\geq &n \left((1-\gamma)/e\right)^{\gamma n} \label{eq:lem43}
\\&=&n (1 - \gamma) ^ {\gamma n} (1/e)^{\gamma n} \label{eq:lem44} \ .
\end{eqnarray}
The expression in
 (\ref{eq:lem42}) is no smaller than the expression in (\ref{eq:lem43}) since $(1-\gamma)^{1/\gamma-1} \ge 1/e$ for any $0 < \gamma < 1$, see Lemma \ref{len:1_min_1_n}.
We now give a lower bound for the terms in Equation (\ref{eq:lem44}). For $(1-\gamma)^{\gamma n}$ we have
\begin{eqnarray}
(1-\gamma)^{\gamma n} &\geq&  (1-\gamma)^{2\sqrt{e\gamma\ln{(1/\gamma)}}} \label{eq:lem45}  \\
&\geq& (1-\gamma)^2 \label{eq:lem46} \\
&\geq& \frac{1}{1+\gamma n}(1-\gamma) \label{eq:lem47} \ .
\end{eqnarray}
Inequality (\ref{eq:lem45}) follows since $n \leq 2\sqrt{e\ln{(1/\gamma)}\cdot1/\gamma}$. Inequality (\ref{eq:lem46}) follows since the function $ x\ln(1/x)$ is smaller than $1/e$ for every $0<x$, see Lemma \ref{lem:arithmetic_xlnx}. Inequality (\ref{eq:lem47}) follows since $(1-\gamma) \ge 1/(1+n\gamma)$ for $n\ge 2$ and $0<\gamma < 1/2$, see Lemma \ref{lem:1_min_gamma}.

For $(1/e)^{\gamma n}$, using the
 Taylor expansion of $e^{-x}$ around $0$ we get that for any $x$ there exists $\theta$ s.t. $e^{-x} = 1 - x + (e^{-\theta}) \frac{x^2}{2} \geq 1-x$, so
$$(1/e)^{\gamma n} \geq (1-\gamma n)\ .$$

Plugging these lower bounds back into Equation (\ref{eq:lem44}) we get
\begin{eqnarray}
 \Exp{m(Z,\gamma)} &\geq& n (1 - \gamma) ^ {\gamma n} (1/e)^{\gamma n} \nonumber  \\
 &\geq& n\frac{1}{1+\gamma n}(1 - \gamma)(1-\gamma n)  \nonumber \\
 &=& \frac{n}{1+\gamma n}\left( 1 - \gamma n - \gamma + \gamma^2 n\right) \nonumber \\
 &\geq& \frac{n}{1+\gamma n}\left( 1 - \gamma n - \gamma\right) \nonumber \\
 &\geq& \frac{n}{1+\gamma n}\left( 1 - \gamma \cdot2\sqrt{e\ln{(1/\gamma)}\cdot{1/\gamma}} - \gamma\right) \label{eq:lem05_03} \\
 &=& \frac{n}{1+\gamma n}\left( 1 - 2\sqrt{e}\cdot\sqrt{\gamma\ln{(1/\gamma)}} - \gamma\right) \nonumber \\
 &=& \frac{n}{1+\gamma n}\left( 1 - \sqrt{\gamma\ln{(1/\gamma)}}\left(2\sqrt{e} + \sqrt{\frac{\gamma}{\ln{(1/\gamma)}}}\right)\right) \nonumber \\
&\geq&
 \frac{n}{1+\gamma n}\left( 1 - \sqrt{\gamma\ln{(1/\gamma)}}\left(2\sqrt{e} + \sqrt{\frac{\theoremT}{\ln{(1/\theoremT)}}}\right)\right) \label{eq:lem05_04} \\
&\geq & \frac{n}{1+n\gamma}\left(1-\theoremC \sqrt{\gamma\ln{(1/\gamma)}}\right) \ . \label{eq:lem05_05}
\end{eqnarray}
 Inequality (\ref{eq:lem05_03}) follows since $n \leq 2\sqrt{e\ln{(1/\gamma)}\cdot1/\gamma}$.
Inequality (\ref{eq:lem05_04}) follows since the function
$\frac{1}{\ln{(1/\gamma)}}$ is monotonically increasing with $\gamma$. Inequality (\ref{eq:lem05_05}) follows by substituting the value of $\theoremT$.
\qed\end{proof}

\begin{lemma}\label{lem:min_pi}
$\prob{I_i = 1} \geq (1-\gamma)^{n-1}$.
\end{lemma}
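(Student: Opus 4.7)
The plan is to lower-bound $p_i$ via the probability of a clean sufficient event on the order statistics, and then evaluate that probability in closed form. Since the greedy algorithm adds $X_i$ to $G(Z,\gamma)$ whenever the most recent previously chosen point lies at distance at least $\gamma$, a convenient sufficient condition is $X_{i-1} \leq X_i - \gamma$ (with $X_0 := -\infty$), equivalently, no point of $Z \setminus \{X_i\}$ lies in $[X_i - \gamma, X_i)$. Under this event the last chosen point sits below $X_i - \gamma$, so greedy is forced to select $X_i$. For $i = 1$ the event holds identically and $p_1 = 1 \geq (1-\gamma)^{n-1}$ is immediate.

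For $i \geq 2$ I would compute $\prob{X_i - X_{i-1} \geq \gamma}$ using the joint density of two consecutive order statistics,
\[
f_{X_{i-1}, X_i}(y, x) = \frac{n!}{(i-2)!(n-i)!}\, y^{i-2}(1-x)^{n-i}, \qquad 0 < y < x < 1,
\]
integrated over the region $\{y \leq x - \gamma\}$. Performing the inner integral in $y$ produces a factor $(x-\gamma)^{i-1}/(i-1)$, and the substitution $u = (x-\gamma)/(1-\gamma)$ in the outer integral reduces the remaining expression to the Beta integral $B(i, n-i+1) = (i-1)!(n-i)!/n!$; the combinatorial prefactor cancels cleanly, leaving a pure power of $1-\gamma$.

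The arithmetic is mechanical and the Beta-function cancellation is the only substantive step. The main obstacle I anticipate is matching the stated exponent exactly: the straightforward evaluation of the sufficient-event probability above produces $(1-\gamma)^n$, which is one factor of $(1-\gamma)$ short of the claim. To recover $(1-\gamma)^{n-1}$ I would attempt either to sharpen the sufficient condition --- exploiting that a point in $[X_i - \gamma, X_i)$ which greedy itself discards does not actually prevent $X_i$ from being selected, so one can relax the event to allow such points --- or to handle the boundary region $X_i < \gamma$ separately (where for $i \geq 2$ the sufficient event is necessarily empty) and reassemble the contributions so that the missing factor is absorbed.
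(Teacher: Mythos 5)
Your sufficient event and its exact evaluation are sound: for $i\ge 2$ greedy certainly takes $X_i$ once $X_i-X_{i-1}\ge\gamma$, and your Beta-integral computation correctly gives $\prob{X_i-X_{i-1}\ge\gamma}=(1-\gamma)^n$ (this is just the standard fact that each spacing of $n$ uniform points exceeds $\gamma$ with probability $(1-\gamma)^n$). The difficulty you flag at the end, however, is not something you can repair, and you should not try to: for $i=2$ your sufficient condition is also \emph{necessary} (greedy always selects $X_1$, so $I_2=1$ iff $X_2-X_1\ge\gamma$), hence $\prob{I_2=1}=(1-\gamma)^n<(1-\gamma)^{n-1}$ for every $n\ge2$ and $\gamma\in(0,1)$. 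Neither sharpening the event (it is already exact at $i=2$) nor reassembling a boundary case $X_i<\gamma$ (which contributes nothing for $i\ge2$) can produce the stated exponent; the inequality as stated is simply off by one factor of $(1-\gamma)$.

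For comparison, the paper's own argument uses the same sufficient event but decomposes over which sample $y_k$ is the $i$'th smallest and writes $\prob{I_i=1}\ge\sum_{k}\prob{X_i=y_k}\cdot\prob{\forall j\ne k:\ y_j\notin[y_k-\gamma,y_k)}$, obtaining $n\cdot\frac1n(1-\gamma)^{n-1}$. The product step implicitly treats the event $\{X_i=y_k\}$ and the avoidance event as independent; they are not, and the correct disjoint decomposition gives $\sum_k\prob{\{X_i=y_k\}\wedge\forall j\ne k:\ y_j\notin[y_k-\gamma,y_k)}=(1-\gamma)^n$ for $i\ge2$, exactly matching your calculation. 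The discrepancy is harmless downstream, so the right fix is to prove the bound with exponent $n$ and note that it suffices: using $p_1=1$ and $p_i\ge(1-\gamma)^n$ one still gets $\Exp{m(Z,\gamma)}\ge 1+(n-1)(1-\gamma)^n\ge n(1-\gamma)^{n-1}$, so Lemma \ref{lem:min_exp} survives unchanged, and in Lemma \ref{lem:conditional_x_i_bound} the Bayes ratio becomes $\frac{(1/n^2)(1-\gamma)^n}{(1-\gamma)^{n}}=\frac{1}{n^2}$, which is all that is needed. So: keep your computation, state the lemma with $(1-\gamma)^n$, and record the counterexample $i=2$ showing the extra factor cannot be recovered.
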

\begin{proof}
Note that for all $1 \leq i,k \leq n$, the probability that the $k$th sample of $Z$  is the $i$th smallest (i.e.\ the probability that $X_i=y_k$) is exactly $1/n$. Also, note that for any $1\leq k,j \leq n$, $k\neq j$, the probability that $y_j$ falls into the real interval $[y_k-\gamma,y_k)$ is at most  $\gamma$. It therefore follows that
\begin{eqnarray*} \prob{I_i = 1} &\geq& \sum_{k=1}^{n}{\prob{X_i = y_k}\cdot \mbox{Prob}\bigr[\mbox{for all\ }j\neq k, y_j \not\in [y_k - \gamma, y_k)\bigr]} \\
&=&
\frac{1}{n}\sum_{k=1}^{n}(1-\gamma)^{n-1}
=
\frac{1}{n}\cdot n (1-\gamma)^{n-1} = (1-\gamma)^{n-1}  \ .
\end{eqnarray*}
\qed\end{proof}

\begin{lemma}\label{lem:min_exp}
$\Exp{m(Z,\gamma)} \geq n(1-\gamma)^{n-1}$
\end{lemma}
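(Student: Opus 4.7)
The plan is very short: combine linearity of expectation with the per-index probability bound already established in Lemma \ref{lem:min_pi}.

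First I would recall that by Lemma \ref{lem:greedy_best} the greedy algorithm produces a maximum $\gamma$-independent subset, so $m(Z,\gamma) = |G(Z,\gamma)| = \sum_{i=1}^n I_i$ where $I_i$ is the indicator that $X_i$ is selected by the greedy algorithm. Taking expectations and applying linearity gives
\[
\Exp{m(Z,\gamma)} = \sum_{i=1}^n \prob{I_i = 1} = \sum_{i=1}^n p_i.
\]

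Next I would simply invoke Lemma \ref{lem:min_pi}, which states that $p_i = \prob{I_i=1} \geq (1-\gamma)^{n-1}$ uniformly in $i$. Summing the $n$ identical lower bounds yields
\[
\Exp{m(Z,\gamma)} \;\geq\; \sum_{i=1}^n (1-\gamma)^{n-1} \;=\; n(1-\gamma)^{n-1},
\]
which is exactly the claim. There is no real obstacle here: the entire content of the lemma has been front-loaded into Lemma \ref{lem:greedy_best} (optimality of greedy) and Lemma \ref{lem:min_pi} (per-point selection probability), and the present statement is just their aggregation via linearity of expectation.
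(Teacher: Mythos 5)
Your proof is correct and is essentially identical to the paper's: both write $m(Z,\gamma)=\sum_i I_i$ via the optimality of greedy, apply linearity of expectation, and invoke Lemma \ref{lem:min_pi} to bound each $p_i$ by $(1-\gamma)^{n-1}$. Nothing further is needed.
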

\begin{proof}
We have that $\Exp{m(Z,\gamma)} = \Exp{\sum_{i=1}^{n}{I_i}} =
\sum_{i=1}^{n}\prob{I_i = 1} \geq n(1-\gamma)^{n-1}$, by Lemma
\ref{lem:min_pi}.
\qed\end{proof}

%%%%%%%%%%%%%%%%
\begin{lemma}\label{lem:half_expect}
For any $\gamma \leq \theoremT$, $\theoremT=\theoremTVal$, and any $n \geq
2\sqrt{e\ln{(1/\gamma)}\cdot 1/\gamma}$ we have that
$$
\sum_{i=1}^{\floor{n/\theoremFVal}}{p_i}
\geq \frac{n}{\theoremFVal} \cdot \frac{1}{1+\gamma n}\left(1 -
r(\gamma) - \frac{\theoremFVal}{n}\right)
$$
where
$$r(\gamma) = \theoremK\left(\sqrt{\frac{1}{2}\ln{\frac{1}{1-\gamma}} + \frac{\beta(\gamma)}{\theoremFVal}(\gamma\ln{(1/\gamma)}) + \frac{\beta(\gamma)\ln{(\beta(\gamma))}}{\theoremFVal}\gamma}\right) $$
and
$\theoremK \approx \theoremKVal$,
is defined precisely in the proof below and $\beta(\gamma) = 1+ \frac{2}{\ln{(1/\gamma)}-1}$ is define in Lemma \ref{lem:logngamman}.
\end{lemma}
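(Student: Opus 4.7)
The plan is to combine the maximality of the greedy algorithm with a concentration bound on the order statistics $X_i$ for $i \leq \lfloor n/4 \rfloor$.

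I would first establish a partial-sum identity. Since greedy yields a maximal $\gamma$-independent set, the intervals $\{[X_i, X_i+\gamma) : I_i=1\}$ are disjoint and cover every sample, so $\sum_{i=1}^{n} C_i I_i = n$. Writing the greedy picks as $1=i_1 < i_2 < \cdots < i_m$ with $C_{i_s} = i_{s+1}-i_s$ (using the convention $i_{m+1} = n+1$), restricting to $i \leq \lfloor n/4 \rfloor$ telescopes to $\sum_{s \leq s^\star} C_{i_s} = i_{s^\star+1}-1 \geq \lfloor n/4 \rfloor$, where $s^\star$ is the largest $s$ with $i_s \leq \lfloor n/4 \rfloor$. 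Taking expectations,
\[
\lfloor n/4 \rfloor \;\leq\; \sum_{i=1}^{\lfloor n/4 \rfloor} p_i \, \Exp{C_i \mid I_i = 1}.
\]

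Next, I would bound $\Exp{C_i \mid I_i=1}$ uniformly for $i \leq \lfloor n/4 \rfloor$. Since $I_i$ is a function of $X_1,\ldots,X_i$ while $C_i$ depends on $X_i$ and the samples above it, the two quantities are conditionally independent given $X_i$. A direct computation yields $\Exp{C_i \mid X_i = x} = 1 + (n-i)\gamma/(1-x)$ whenever $x+\gamma \leq 1$. Lemma~\ref{lem:conditional_x_i_bound} (applied conditionally on $I_i=1$) asserts $X_i \leq (i-1)/n + \sqrt{\tfrac{1}{2}\ln\tfrac{1}{1-\gamma} + \tfrac{\ln n}{n}}$ except on a tail event of probability at most $1/n^2$, where I use the trivial bound $C_i \leq n$. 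Since $1-(i-1)/n \geq 3/4$ for $i \leq \lfloor n/4 \rfloor$, writing $1-X_i \geq (1-i/n) - \sqrt{\cdots}$ and factoring $1-i/n$ out gives $\tfrac{n-i}{1-X_i} \leq \tfrac{n}{1 - (4/3)\sqrt{\cdots}}$, so
\[
\Exp{C_i \mid I_i = 1} \;\leq\; \beta \;:=\; 1 + \tfrac{1}{n} + \frac{n\gamma}{1 - \tfrac{4}{3}\sqrt{\tfrac{1}{2}\ln\tfrac{1}{1-\gamma} + \tfrac{\ln n}{n}}}.
\]

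Combining the two steps gives $\sum_{i=1}^{\lfloor n/4 \rfloor} p_i \geq \lfloor n/4 \rfloor/\beta \geq (n/4 - 1)/\beta$. The remaining work is algebraic: writing $\beta = (1+\gamma n)(1 + \delta)$ where $\delta$ absorbs $\tfrac{1/n + n\gamma\,\eta/(1-\eta)}{1+\gamma n}$ with $\eta = \tfrac{4}{3}\sqrt{\cdots}$, the inequality $1/(1+\delta) \geq 1 - \delta$ exposes the desired factor $1/(1+\gamma n)$. I would then invoke Lemma~\ref{lem:logngamman} to replace $\ln n/n$ inside the square root by the $\gamma$-only expression $\tfrac{\beta(\gamma)}{4}\gamma\ln(1/\gamma) + \tfrac{\beta(\gamma)\ln\beta(\gamma)}{4}\gamma$, which is exactly where the hypothesis $n \geq 2\sqrt{e\ln(1/\gamma)/\gamma}$ is used; the constant $k \approx 2.49$ then absorbs the $4/3$ factor together with the slack introduced by the $1/(1+\delta)$ step, while the factor $(n/4-1)/(n/4) \geq 1 - 4/n$ produces the explicit $-4/n$ term in the statement.

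The main obstacle is the final algebraic bookkeeping: the $1/n$ correction and the $n\gamma\,\eta/(1-\eta)$ term in $\beta$ must, after division by $1+\gamma n$, jointly fit inside $r(\gamma) + 4/n$ with the stated constant $k$. The concentration and conditional-independence arguments themselves are direct applications of the auxiliary lemmas already cited in the excerpt.
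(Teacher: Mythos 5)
Your route is essentially the paper's: the partial-sum inequality $\lfloor n/4\rfloor \le \sum_{i\le \lfloor n/4\rfloor} p_i\,\mathrm{E}[C_i\mid I_i=1]$ (Lemma \ref{lem:Ci_sum}), the uniform bound $\mathrm{E}[C_i\mid I_i=1]\le 1+\tfrac1n+\tfrac{\gamma n}{1-\frac43\sqrt{\frac12\ln\frac1{1-\gamma}+\frac{\ln n}{n}}}$ for $i\le\lfloor n/4\rfloor$ obtained from conditional independence given $X_i$, the computation $\mathrm{E}[C_i\mid X_i=x]\le 1+\gamma(n-i)/(1-x)$, and the concentration of $X_i$ with the trivial bound $C_i\le n$ on the $1/n^2$ tail event (Lemmata \ref{lem:independent}, \ref{lem:ci_cond_xi}, \ref{lem:conditional_x_i_bound}, assembled exactly as in Lemma \ref{lem:cond_Ci_bound}), followed by Lemma \ref{lem:logngamman} to replace $\ln n/n$ by a $\gamma$-only expression. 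All of that matches.

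The one place your bookkeeping falls short of the stated bound is precisely the step you flag as unresolved. Folding the $+\tfrac1n$ into $\beta=(1+\gamma n)(1+\delta)$ gives $\delta\le r(\gamma)+\tfrac{1}{n(1+\gamma n)}$, since the constant $\theoremK=\theoremFOverFVal\big(1+\tfrac{h}{1-h}\big)\approx\theoremKVal$ is used exactly in the bound $\big(1+\tfrac{h}{1-h}\big)\tfrac{\gamma n}{1+\gamma n}\,g(n,\gamma)\le r(\gamma)$ and offers no rounding slack; after $\lfloor n/4\rfloor/\beta\ge\big(\tfrac n4-1\big)\tfrac{1-\delta}{1+\gamma n}$ you land at $\tfrac n4\cdot\tfrac1{1+\gamma n}\big(1-r(\gamma)-\tfrac4n-\tfrac1{n(1+\gamma n)}\big)$, i.e.\ a $-\tfrac5n$ rather than the stated $-\tfrac4n$. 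The paper avoids this by keeping the $\tfrac1n$ term outside the product: write $\lfloor n/4\rfloor\le\tfrac1n\sum_{i\le\lfloor n/4\rfloor}p_i+\big(1+\tfrac{\gamma n}{1-g}\big)\sum_{i\le\lfloor n/4\rfloor}p_i$, bound $\tfrac1n\sum p_i\le\tfrac14$ (because $\sum_{i\le\lfloor n/4\rfloor}p_i\le n/4$), and only then divide by $(1+\gamma n)(1+r(\gamma))$; with $\lfloor n/4\rfloor-\tfrac14\ge\tfrac n4-1$ and $\tfrac1{1+r(\gamma)}\ge1-r(\gamma)$ this yields exactly $\tfrac n4\cdot\tfrac1{1+\gamma n}\big(1-r(\gamma)-\tfrac4n\big)$. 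With that small rearrangement your argument coincides with the paper's proof.
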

\begin{proof}
From Lemma \ref{lem:Ci_sum} it follows that  $\floor{\frac{n}{\theoremFVal}}
\leq \sum_{i=1}^{\floor{\frac{n}{\theoremFVal}}}{p_i}\Exp{C_i \mid I_i=1}$. Using
 Lemma \ref{lem:cond_Ci_bound} to upper bound  $\Exp{C_i \mid I_i}$ we get that:
\begin{eqnarray}
\floor{\frac{n}{\theoremFVal}} &\leq&
\sum_{i=1}^{\floor{n / \theoremFVal}}{p_i}\left(\frac{1}{n} ~+~
  1+\frac{\gamma n}{1 - \theoremFOverFVal\sqrt{ \frac{1}{2}\ln{\frac{1}{1-\gamma}}+\frac{\ln{n}}{n}}} \right) \nonumber \\
&\leq& \frac{1}{n}\sum_{1}^{\floor{n / \theoremFVal}}{p_i} ~+~
  \left(1+\frac{\gamma n}{1 - \theoremFOverFVal\sqrt{ \frac{1}{2}\ln{\frac{1}{1-\gamma}}+\frac{\ln{n}}{n}}} \right)\sum_{1}^{\floor{n / \theoremFVal}}{p_i} \nonumber\\
&\leq& \frac{1}{\theoremFVal} ~+~
  \left(1+\frac{\gamma n}{1 - g(n,\gamma)} \right)\sum_{1}^{\floor{n / \theoremFVal}}{p_i} \label{eq:p_i_right}
\end{eqnarray}
where
$$g(n,\gamma) =\theoremFOverFVal \sqrt{\frac{1}{2}\ln{\frac{1}{1-\gamma}}+\frac{\ln{n}}{n}}$$

Now we give an upper bound for the factor $1+\frac{\gamma n}{1 - g(n,\gamma)}$ in Equation (\ref{eq:p_i_right}).

The function $g(n,\gamma)$ is increasing with $\gamma$ for $\gamma \in [0,t)$ and decreasing  with $n$ for $n > e$ (use Lemma \ref{lem:arithmetic_xlnx} with $x = 1/n$). Therefore for $n \geq
2\sqrt{e\ln{(1/\gamma)}\cdot 1/\gamma}$ and for $\gamma < t$ we have that $g\left(n,\gamma\right) \le g\left(2\sqrt{\frac{e\ln{(1/\theoremT)}}{\theoremT}}, \theoremT\right) \equiv h \approx \theoremHVal$.
Since for any $x\neq 1$ we have that $\frac{1}{1-x} = 1 + x + \frac{x^2}{1-x}$ and since $g(n, \gamma) \neq 1$ we get that:
\begin{eqnarray}
1+\frac{\gamma n}{1 - g(n,\gamma)} &=& {1 + \gamma n\left(1 + g(n,\gamma) + g(n,\gamma)^2 \frac{1}{1 - g(n, \gamma)}\right)} \nonumber \\
&=& (1 + \gamma n) + \gamma n \cdot g(n,\gamma)\left(1 + \frac{g(n,\gamma) }{1 - g(n, \gamma)}\right) \nonumber \\
&\leq& (1 + \gamma n) + \gamma n \cdot g(n,\gamma)\left(1 + \frac{h}{1 - h}\right) \nonumber \\
&=& (1 + \gamma n)\left(1 + \frac{\gamma n}{1 + \gamma n} g(n,\gamma)\left(1 + \frac{h}{1 - h}\right)\right) \label{eq:first_mul_8} \ .
\end{eqnarray}

Next, we give an upper bound on $\left(1 + \frac{h}{1 - h}\right)\frac{\gamma n}{1+\gamma n}  g(n,\gamma)$.
\begin{eqnarray}
\lefteqn{\left(1 + \frac{h}{1 - h}\right)\frac{\gamma n}{1+\gamma n} g(n,\gamma)} \nonumber \\
&&= \left(1 + \frac{h}{1 - h}\right)\frac{\gamma n}{1+\gamma n} \theoremFOverFVal\sqrt{\frac{1}{2}\ln{\frac{1}{1-\gamma}} + \frac{\ln{n}}{n}} \nonumber \\
&&= \theoremK \sqrt{\frac{1}{2}\ln{\frac{1}{1-\gamma}}\left(\frac{\gamma n}{1+\gamma n}\right)^{2} + \frac{\ln{n}}{n}\left(\frac{\gamma n}{1+\gamma n}\right)^{2}} \nonumber \\
&&\leq \theoremK\sqrt{\frac{1}{2}\ln{\frac{1}{1-\gamma}} + \frac{\ln{n}}{n}\left(\frac{\gamma n}{1+\gamma n}\right)^{2}} \nonumber \\
&&= \theoremK\sqrt{\frac{1}{2}\ln{\frac{1}{1-\gamma}} + \frac{n \ln n}{(n+1/\gamma)^2}} \nonumber \\
&&\leq \theoremK\sqrt{\frac{1}{2}\ln{\frac{1}{1-\gamma}} +
\frac{\beta(\gamma)}{\theoremFVal}(\gamma\ln{(1/\gamma)}) +
\frac{\beta(\gamma)\ln{(\beta(\gamma))}}{\theoremFVal}\gamma} \label{eq:long_lemma_3}\\
&&= r(\gamma) \label{eq:long_lemma_rgamma}
\end{eqnarray}
where $k \equiv \theoremFOverFVal\left(1 + \frac{h}{1 - h}\right)$ and $r(\gamma)$ is as defined in the statement of the lemma above. Inequality (\ref{eq:long_lemma_3}) follows from Lemma \ref{lem:logngamman}.

Substituting the bound from Equation (\ref{eq:long_lemma_rgamma}) into Equation (\ref{eq:first_mul_8}) and the bound from Equation (\ref{eq:first_mul_8}) into Equation (\ref{eq:p_i_right}) we get
$$
\floor{\frac{n}{\theoremFVal}} \leq \frac{1}{\theoremFVal} + (1+\gamma n)(1 + r(\gamma))\sum_{1}^{\floor{n / \theoremFVal}}{p_i}
$$

Isolating $\sum{p_i}$ we obtain
\begin{eqnarray}
\sum_{1}^{\floor{n / \theoremFVal}}{p_i}
&\geq& \left(\floor{\frac{n}{\theoremFVal}} - \frac{1}{\theoremFVal}\right) \frac{1}{(1+\gamma n)(1 + r(\gamma))} \nonumber\\
&\geq& \left(\floor{\frac{n}{\theoremFVal}} - \frac{1}{\theoremFVal}\right) \frac{1}{(1+\gamma n)}(1 - r(\gamma)) \label{eq:one_minus_x} \\
&\geq& \left(\frac{n}{\theoremFVal} - \theoremFOverOverFVal - \frac{1}{\theoremFVal}\right)\frac{1}{1 + \gamma n}(1 - r(\gamma)) \nonumber \\
&=& \left(\frac{n}{\theoremFVal}-1\right)\frac{1}{1+\gamma n}(1 - r(\gamma)) \nonumber\\
&\geq& \frac{n}{\theoremFVal}(1 - r(\gamma)) - \frac{1}{1 + \gamma n} \nonumber\\
&=& \frac{n}{\theoremFVal}\cdot\frac{1}{1+\gamma n}\left(1 - r(\gamma) - \frac{\theoremFVal}{n}\right) \nonumber
\end{eqnarray}
where Inequality (\ref{eq:one_minus_x}) follows since $1/(1+x) \leq 1-x$ for $x \geq -1 $.
\qed\end{proof}

\begin{lemma}\label{lem:Ci_sum}
For all $1\leq k \leq n$, $\sum_{i=1}^{k}{p_i}\Exp{C_i \mid I_i=1}
\geq k$.
\end{lemma}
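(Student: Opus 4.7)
\medskip

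The plan is to establish the pointwise (almost sure) inequality $\sum_{i=1}^k I_i C_i \geq k$ and then take expectations; the lemma then follows since $\mathbb{E}[I_i C_i]=\Pr[I_i=1]\cdot\mathbb{E}[C_i\mid I_i=1]=p_i\mathbb{E}[C_i\mid I_i=1]$ (using that $I_i$ is a $0/1$ variable).

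The core combinatorial observation is that every one of the $k$ smallest points $X_1,\ldots,X_k$ is ``charged'' to a unique point chosen by the greedy algorithm with index $\leq k$. More precisely, I claim that for each $i\in\{1,\ldots,k\}$ there is a unique index $j(i)\in\{1,\ldots,i\}$ such that $I_{j(i)}=1$ and $X_i\in[X_{j(i)},X_{j(i)}+\gamma)$. For existence: if $I_i=1$ then take $j(i)=i$; otherwise $i\geq 2$ (since $I_1=1$ always by the initialization step of greedy), and the fact that greedy did not select $X_i$ means the most recently selected point, call it $X_j$ with $j<i$, satisfies $X_j\leq X_i<X_j+\gamma$, giving $j(i)=j$. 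For uniqueness: if $j<j'$ are two distinct indices with $I_j=I_{j'}=1$, then by the definition of greedy $X_{j'}\geq X_j+\gamma$, so the half-open intervals $[X_j,X_j+\gamma)$ and $[X_{j'},X_{j'}+\gamma)$ are disjoint, so no $X_i$ can lie in both.

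Given this map $i\mapsto j(i)$, fix any $j\in\{1,\ldots,k\}$ with $I_j=1$ and consider the preimage $\{i\leq k:j(i)=j\}$. Every element of this preimage satisfies $X_i\in[X_j,X_j+\gamma)$, so it is counted by $C_j$ (which, by definition, counts \emph{all} $X_\ell$ in that interval, including contributions from $\ell>k$ that only make $C_j$ larger). Hence $I_j C_j \geq |\{i\leq k:j(i)=j\}|$ for every $j\leq k$. Summing over $j=1,\ldots,k$ and using that $\bigsqcup_{j\leq k,\,I_j=1}\{i:j(i)=j\}=\{1,\ldots,k\}$ (a partition by uniqueness and existence), we obtain the almost-sure bound
\begin{equation*}
\sum_{j=1}^{k} I_j C_j \;\geq\; \sum_{j=1}^{k}|\{i\leq k:j(i)=j\}| \;=\; k.
\end{equation*}

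Taking expectations yields $\sum_{j=1}^k \mathbb{E}[I_j C_j]\geq k$, and rewriting each term as $p_j\cdot\mathbb{E}[C_j\mid I_j=1]$ gives exactly the claim. The only subtlety is the initialization convention $I_1=1$ of the greedy algorithm, which is needed to cover the case $i\geq 2$ with $I_i=0$; this is guaranteed by the description of greedy preceding Lemma~\ref{lem:greedy_best}. No concentration or distributional argument is needed — the inequality holds deterministically for any configuration of $Z$.
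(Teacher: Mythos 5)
Your proof is correct and follows essentially the same route as the paper: both establish the deterministic inequality $\sum_{i=1}^{k} C_i I_i \geq k$ by observing that the intervals $[X_j, X_j+\gamma)$ of the greedy-selected points among the $k$ leftmost cover $X_1,\ldots,X_k$, and then take expectations. Your explicit charging map with the existence/uniqueness check is just a more detailed write-up of the paper's covering argument.
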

\begin{proof}
Recall that $X_j$ is a random variable equal to the $j$th point from the left. For a given instantiation of the $X_j$ let $Q_k$ be the subset of the $k$ leftmost points that were chosen by the greedy algorithm. Note that $i\in Q_k$ iff the instantiation of $I_i$ is one.

The union of the intervals $[X_j,X_j+\gamma)$, $j\in Q_k$, must include the instantiations of $X_1, \ldots, X_k$.
This follows since otherwise there would be some point that was not selected by the greedy choice and that could be selected, contradicting the definition of the greedy algorithm. Hence:
$\sum_{i=1}^{k}{C_i I_i} \geq k$. Taking Expectations we get that
\begin{eqnarray*}
k &\leq& {\sum_{i=1}^{k}\Exp{C_i \cdot I_i}} \\
&=& \sum_{i=1}^{k}{\prob{I_i = 0}\cdot\Exp{C_i\cdot 0 \mid  I_i = 0} + \prob{I_i = 1}\cdot\Exp{C_i \cdot 1\mid  I_i = 1}} \\
&=& \sum_{i=1}^{k} {p_i \Exp{C_i \mid I_i = 1}}.
\end{eqnarray*}
\qed\end{proof}

\begin{lemma}\label{lem:cond_Ci_bound} Given any $0<\gamma<\theoremTVal$, integer $n >  2\sqrt{e \ln{(1/\gamma) \cdot 1/\gamma}}$,  and  integer $0 \leq i \leq n/\theoremFVal$ we have that
$$\Exp{C_i \mid I_i = 1} \leq  1+\frac{\gamma n}{1 - \theoremFOverFVal\sqrt{\frac{1}{2}\ln{\frac{1}{1-\gamma}} + \frac{\ln{n}}{n}}}+ 1/n.$$

\end{lemma}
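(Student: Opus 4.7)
The plan is to decouple $C_i$ from the event $\{I_i=1\}$ by first conditioning on the value of $X_i$. Conditional on $X_i = x_i$, Lemma~\ref{lem:independent} tells us that the positions of the larger order statistics $X_{i+1},\ldots,X_n$ are distributed as the ordered values of $n-i$ independent uniform samples in $[x_i,1)$, independent of the event $\{I_i=1\}$ (since the latter is determined by $X_1,\ldots,X_i$ alone). Combining this independence with Lemma~\ref{lem:ci_cond_xi} gives
\[
\Exp{C_i \mid X_i = x_i,\ I_i=1} \;\le\; 1 + \frac{(n-i)\gamma}{1-x_i},
\]
where the $+1$ accounts for $X_i$ itself and the second term is the expected number of the $n-i$ remaining points that land in $[x_i, x_i+\gamma)$.

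Next, I would split on whether $X_i$ sits near its mean. Let $\mathcal{E}$ denote the event
$X_i \le (i-1)/n + \sqrt{\tfrac{1}{2}\ln(1/(1-\gamma)) + \ln(n)/n}$.
Lemma~\ref{lem:conditional_x_i_bound} furnishes a concentration bound showing that $\prob{\mathcal{E}^c \mid I_i=1}$ is small enough that the trivial estimate $C_i \le n$ on $\mathcal{E}^c$ contributes at most the $1/n$ term in the target inequality. On $\mathcal{E}$, we have $1 - X_i \ge (n-i+1)/n - \sqrt{\cdot}$, so the second term above satisfies
\[
\frac{(n-i)\gamma}{1-X_i}
\;\le\;
\frac{n\gamma\cdot \frac{n-i}{n-i+1}}{1 - \frac{n}{n-i+1}\sqrt{\cdot}}
\;\le\;
\frac{n\gamma}{1 - \tfrac{4}{3}\sqrt{\cdot}},
\]
using $(n-i)/(n-i+1) \le 1$ and $n/(n-i+1) \le 4/3$. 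This last estimate is precisely where the hypothesis $i \le n/4$ enters, and where the constant $\tfrac{4}{3}=\theoremFOverFVal$ in the bound originates. Averaging over $\mathcal{E}$ and adding the $\mathcal{E}^c$-contribution yields the claimed inequality.

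\textbf{Main obstacle.} The algebra above is routine; the conceptual points are (i) that conditioning $X_i$ on $\{I_i=1\}$ does not disturb the independence between $X_i$ and the future order statistics $X_{i+1},\ldots,X_n$ (this is why Lemma~\ref{lem:independent} needs to be set up carefully), and (ii) that the concentration of $X_i$ around $(i-1)/n$ must survive the same conditioning. The latter is the real content of Lemma~\ref{lem:conditional_x_i_bound}: intuitively, $\{I_i=1\}$ can only nudge $X_i$ slightly upward (greedy is more likely to pick a later-arriving $i$-th point than an earlier-arriving one), but this upward bias is small enough not to destroy the Chernoff-type tail used to define $\mathcal{E}$. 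Once those two ingredients are in place, the proof of Lemma~\ref{lem:cond_Ci_bound} is essentially bookkeeping.
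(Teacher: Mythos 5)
Your proposal is correct and follows essentially the same route as the paper's proof: the same split on whether $X_i$ exceeds $a_i=\frac{i-1}{n}+\sqrt{\frac{1}{2}\ln\frac{1}{1-\gamma}+\frac{\ln n}{n}}$, with Lemma~\ref{lem:conditional_x_i_bound} and the trivial bound $C_i\le n$ giving the $1/n$ term, and Lemmata~\ref{lem:independent} and~\ref{lem:ci_cond_xi} plus $i\le n/4$ (hence the factor $\frac{4}{3}$) handling the main term. Your algebra keeping the $+1$ in $n-i+1$ versus the paper's dropping it is an immaterial variation.
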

\begin{proof}
Define \begin{equation}a_i = \frac{i-1}{n} + \sqrt{\frac{1}{2}\ln{\frac{1}{1-\gamma}} + \frac{\ln{n}}{n}}.\label{eq:def_ai}
\end{equation} We have that
\begin{eqnarray} \Exp{C_i\mid I_i=1} &=& \quad \prob{(X_i < a_i)\mid I_i=1}\cdot\Exp{C_i \mid (X_i < a_i)\wedge I_i=1}\nonumber \\
&&\qquad + \prob{X_i \geq a_i\mid I_i=1}\cdot\Exp{C_i \mid (X_i \geq a_i)\wedge I_i=1}.\label{eq:scary_sum}\end{eqnarray}
For the second term, from Lemma \ref{lem:conditional_x_i_bound}, we
derive that $\prob{X_i \geq a_i \mid I_i = 1} \leq \frac{1}{n^2}$. In
addition, since the number of points to the right of (or at) the $i$th point from the left is $n-i+1$, we have that $$\Exp{C_i \mid  (X_i
\geq a_i)  \wedge  I_i = 1} \leq n-i+1 \leq n.$$ Hence, for the
second term
\begin{equation}\label{eq:first_term11}
\prob{X_i \geq a_i\mid I_i=1}\cdot\Exp{C_i \mid (X_i \geq a_i) \wedge I_i=1 } \leq \frac{1}{n}.
\end{equation}

For the first term we apply Lemmata \ref{lem:independent} (in Equation (\ref{eq:integral_1})) and \ref{lem:ci_cond_xi} (in Inequality (\ref{eq:integral_2})) to derive
\begin{eqnarray}
\lefteqn{\Exp{C_i \mid  (X_i < a_i) \wedge I_i = 1 }}  \nonumber \\
&=& \int_{0}^{a_i}{\prob{X_i = x \mid X_i < a_i \wedge I_i=1}\Exp{C_i \mid  X_i = x \wedge I_i=1} dx} \nonumber \\
&=& \int_{0}^{a_i}{\prob{X_i = x \mid X_i < a_i \wedge I_i=1}\Exp{C_i \mid  X_i = x} dx} \label{eq:integral_1} \\
&\leq& \int_{0}^{a_i}{\prob{X_i = x \mid X_i < a_i \wedge I_i=1}\left(1+\frac{\gamma}{1-t}(n-i)\right) dx} \label{eq:integral_2} \\
&\leq& \int_{0}^{a_i}{\prob{X_i = x \mid X_i < a_i \wedge I_i=1}\left(1+\frac{\gamma}{1-a_i}(n-i)\right) dx} \nonumber \\
&\leq& \left(1+\frac{\gamma}{1-a_i}(n-i)\right)\int_{0}^{a_i}{\prob{X_i = x \mid X_i < a_i \wedge I_i=1} dx} \nonumber \\
&=& 1+\frac{\gamma}{1-a_i}(n-i) \label{eq:c_i_cond_x_i_2}
\end{eqnarray}
%%
%%
%%
%%\int_{0}^{a}{\prob{X_i = t \mid X_i < a_i \wedge I_i=1}} \\
%%
%% \Exp{C_i \mid (X_i < a_i) \wedge I_i=1} \leq
%%1+\frac{\gamma}{1-a_i}(n-i). %\label{eq:c_i_cond_x_i_1}
%%\end{eqnarray}
%%
%%Hence:
%%\begin{equation}
%%\prob{X_i < a_i \mid I_i = 1}\cdot\Exp{C_i \mid  (X_i <
%%a_i) \wedge I_i = 1} \leq 1+\frac{\gamma}{1-a_i}(n-i).\label{eq:c_i_cond_x_i_2}
%%\end{equation}

By substituting (\ref{eq:first_term11}) and (\ref{eq:c_i_cond_x_i_2}) into Equation (\ref{eq:scary_sum}) we get that:
\begin{equation}
\Exp{C_i \mid I_i = 1} \leq 1+\frac{\gamma(n-i)}{1-a_i} ~+~  \frac{1}{n} .
\label{eq:c__i_cond12}
\end{equation}

Substituting the value of $a_i$ from Equation (\ref{eq:def_ai}) in Equation (\ref{eq:c__i_cond12}) we get
\begin{eqnarray}
\frac{\gamma(n-i)}{1-a_i}
&=& \frac{\gamma (n-i)}{1 - {\frac{i-1}{n}-\sqrt{\frac{1}{2}\ln{\frac{1}{1-\gamma}} + \frac{\ln{n}}{n}}}} \nonumber\\
&=& \frac{\gamma n(n-i)}{(n - i) + 1 -n\sqrt{\frac{1}{2}\ln{\frac{1}{1-\gamma}} + \frac{\ln{n}}{n}}} \nonumber\\
&\leq& \frac{\gamma n(n-i)}{(n - i) -n\sqrt{\frac{1}{2}\ln{\frac{1}{1-\gamma}} + \frac{\ln{n}}{n}}} \nonumber\\
&=& \frac{\gamma n}{1 - \frac{n\sqrt{\frac{1}{2}\ln{\frac{1}{1-\gamma}} + \frac{\ln{n}}{n}}}{n-i}} \nonumber\\
&=& \frac{\gamma n}{1 - \frac{n\sqrt{\frac{1}{2}\ln{\frac{1}{1-\gamma}} + \frac{\ln{n}}{n}}}{n-i}}. \label{eq:down}
\end{eqnarray}

Since  $i \leq n/\theoremFVal$ we get that:
\begin{eqnarray}
\frac{n\sqrt{\frac{1}{2}\ln{\frac{1}{1-\gamma}} + \frac{\ln{n}}{n}}}{n-i}
&\leq& \frac{n\sqrt{\frac{1}{2}\ln{\frac{1}{1-\gamma}} + \frac{\ln{n}}{n}}}{\theoremFOverOverFVal n } \nonumber\\
&=& \theoremFOverFVal\sqrt{\frac{1}{2}\ln{\frac{1}{1-\gamma}} + \frac{\ln{n}}{n}}. \label{eq:down_bound}
\end{eqnarray}

By the assumption that $\gamma<\theoremTVal$ and $n >  2\sqrt{e \ln{(1/\gamma) \cdot 1/\gamma}}$ it follows that Expression (\ref{eq:down_bound}) is strictly less than one.  Thus, we can substitute Expression (\ref{eq:down_bound}) into Equation (\ref{eq:down}) and derive
\begin{equation}
\label{eq:second_term11} 1+\frac{\gamma(n-i)}{1-a_i} \leq 1 + \frac{\gamma n}{1 - \theoremFOverFVal\sqrt{\frac{1}{2}\ln{\frac{1}{1-\gamma}} + \frac{\ln{n}}{n}}}.
\end{equation}

By substituting the upper bound in Equation (\ref{eq:second_term11}) into Inequality (\ref{eq:c__i_cond12}) we derive the statement of the Lemma.
\qed\end{proof}

\begin{lemma}\label{lem:conditional_x_i_bound}
For any $1 \leq i \leq n$,
$$Prob\left[X_i \geq \frac{i-1}{n} + \sqrt{\frac{1}{2}\ln{\frac{1}{1-\gamma}} + \frac{\ln{n}}{n}} ~ \middle\vert I_i = 1\right] \leq \frac{1}{n^2}$$
\end{lemma}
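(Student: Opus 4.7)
\medskip

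\noindent\textbf{Proof plan for Lemma~\ref{lem:conditional_x_i_bound}.}
Let $a_i = \frac{i-1}{n} + c$ with $c = \sqrt{\frac{1}{2}\ln\frac{1}{1-\gamma} + \frac{\ln n}{n}}$. The plan is to bound the conditional probability by the ratio of the (unconditional) upper tail of $X_i$ to a lower bound on $\Pr[I_i=1]$, and then to control the unconditional tail by a Chernoff/Hoeffding estimate on a binomial count. Concretely, I will write
\[
\Pr\bigl[X_i \ge a_i \mid I_i = 1\bigr]
= \frac{\Pr[X_i \ge a_i,\ I_i=1]}{\Pr[I_i=1]}
\le \frac{\Pr[X_i \ge a_i]}{\Pr[I_i=1]},
\]
and then apply Lemma~\ref{lem:min_pi} to obtain the denominator bound $\Pr[I_i=1] \ge (1-\gamma)^{n-1}$. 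Thus it suffices to show that $\Pr[X_i \ge a_i] \le (1-\gamma)^{n-1}/n^2$.

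For the numerator I would use the standard order-statistic identity: $X_i \ge a_i$ if and only if fewer than $i$ of the $n$ i.i.d. uniform points of $Z$ fall in $[0,a_i)$. Let $Y \sim \mathrm{Bin}(n,a_i)$ count those points; then $\Exp{Y} = na_i = (i-1) + nc$, so
\[
\Pr[X_i \ge a_i] \;=\; \Pr[Y \le i-1] \;=\; \Pr\bigl[Y \le \Exp{Y} - nc\bigr].
\]
Apply Hoeffding's inequality for sums of $[0,1]$-bounded variables to get
\[
\Pr\bigl[Y \le \Exp{Y} - nc\bigr] \;\le\; \exp(-2nc^2).
\]
Substituting the value of $c^2$ yields $2nc^2 = n\ln\frac{1}{1-\gamma} + 2\ln n$, hence $\exp(-2nc^2) = (1-\gamma)^n / n^2 \le (1-\gamma)^{n-1}/n^2$.

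Combining these, the conditional probability is at most $\frac{(1-\gamma)^{n-1}/n^2}{(1-\gamma)^{n-1}} = \frac{1}{n^2}$, as desired. There is no real obstacle here: the choice of $a_i$ has been reverse-engineered so that Hoeffding yields exactly the factor $(1-\gamma)^n/n^2$ needed to absorb the denominator from Lemma~\ref{lem:min_pi}; the only point that requires a (trivial) remark is the loss of one factor of $(1-\gamma)$ when comparing $(1-\gamma)^n$ to the lower bound $(1-\gamma)^{n-1}$ on $\Pr[I_i=1]$.
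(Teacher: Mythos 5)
Your proposal is correct and follows essentially the same route as the paper: bound the conditional probability by $\Pr[X_i \ge a_i]/\Pr[I_i=1]$, use Lemma~\ref{lem:min_pi} for the denominator, and control $\Pr[X_i \ge a_i]$ via the order-statistic/binomial identity and Hoeffding (the paper merely isolates this last step as Lemma~\ref{lem:x_i_bound}, obtaining the same bound $(1-\gamma)^n/n^2$).
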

\begin{proof}
 Define $a_i = \frac{i-1}{n} + \sqrt{\frac{1}{2}\ln{\frac{1}{1-\gamma}} + \frac{\ln{n}}{n}}$. From Bayes rule and Lemmata \ref{lem:min_pi} and \ref{lem:x_i_bound} we get that
\begin{eqnarray}
\prob{X_i \geq a\bigr\rvert I_i = 1} &=& \frac{\prob{X_i \geq a_i \wedge I_i = 1}}{\prob{I_i = 1}} \nonumber \\
&\leq& \frac{\prob{X_i \geq a_i}}{\prob{I_i = 1}} \nonumber \\
&\leq& \frac{\frac{1}{n^2}(1-\gamma)^n}{(1-\gamma)^{n-1}} \label{eq:x_i_bound1} \\
&\leq& \frac{1}{n^2} \ .\nonumber
\end{eqnarray}
We use Lemma \ref{lem:x_i_bound} to bound the numerator in Inequality (\ref{eq:x_i_bound1}) and Lemma \ref{lem:min_pi} to bound the denominator in Inequality (\ref{eq:x_i_bound1}).
\qed\end{proof}

\begin{lemma}\label{lem:x_i_bound}
For $1\leq i \leq n$, $$\prob{X_i \geq \frac{i-1}{n} + \sqrt{\frac{1}{2}\ln{\frac{1}{1-\gamma}} + \frac{\ln{n}}{n}}~} \leq \frac{1}{n^2}(1-\gamma)^n$$
\end{lemma}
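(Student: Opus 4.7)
The plan is to express the left-tail event $\{X_i \geq a_i\}$ (with $a_i = (i-1)/n + s$ and $s = \sqrt{\tfrac{1}{2}\ln\tfrac{1}{1-\gamma} + \tfrac{\ln n}{n}}$) as a binomial tail event and then apply Hoeffding's inequality. Specifically, since $X_i$ is the $i$-th smallest of $n$ i.i.d.\ uniform samples on $[0,1)$, we have
\begin{equation*}
\{X_i \geq a_i\} \;=\; \{\,|\{j : z_j < a_i\}| \leq i-1\,\}.
\end{equation*}
The indicators $\mathbf{1}[z_j < a_i]$ are i.i.d.\ Bernoulli$(a_i)$ (assuming $a_i \leq 1$; otherwise the event is empty and the bound holds trivially), so the count is a binomial $B(n, a_i)$ with mean $na_i = (i-1) + ns$.

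I would then invoke Hoeffding's inequality in the form
\begin{equation*}
\prob{B(n, a_i) \leq na_i - \lambda} \;\leq\; e^{-2\lambda^2/n},
\end{equation*}
with the deviation $\lambda = ns$, so that $na_i - \lambda = i-1$. This yields
\begin{equation*}
\prob{X_i \geq a_i} \;\leq\; e^{-2ns^2}.
\end{equation*}

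The final step is a direct calculation: by the definition of $s$,
\begin{equation*}
2ns^2 \;=\; 2n\left(\tfrac{1}{2}\ln\tfrac{1}{1-\gamma} + \tfrac{\ln n}{n}\right) \;=\; n\ln\tfrac{1}{1-\gamma} + 2\ln n,
\end{equation*}
hence $e^{-2ns^2} = (1-\gamma)^n \cdot n^{-2}$, which is exactly the claimed bound. I don't expect any genuine obstacle here: the only subtlety is the boundary case $a_i > 1$, which is handled by noting that $X_i \in [0,1)$ makes the probability zero. The essence of the lemma is simply that $X_i$ concentrates around its mean $i/(n+1)$ at the Gaussian scale $1/\sqrt{n}$, and the precise choice of $s$ is engineered so that the Hoeffding exponent produces exactly the factor $(1-\gamma)^n/n^2$ needed downstream in Lemma~\ref{lem:conditional_x_i_bound}.
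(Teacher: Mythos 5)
Your proposal is correct and follows essentially the same route as the paper: rewrite $\prob{X_i \geq a_i}$ as the binomial lower-tail probability $\prob{Bin(n,a_i)\le i-1}$, apply Hoeffding with deviation $ns$, and observe that $2ns^2 = n\ln\frac{1}{1-\gamma}+2\ln n$ gives exactly $(1-\gamma)^n/n^2$. Your explicit treatment of the degenerate case $a_i>1$ is a small extra care the paper omits, but otherwise the arguments coincide.
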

\begin{proof}
The probability that $X_i \ge p$ for some $p\in [0,1)$ is
exactly the probability that for some $k \le i-1$ exactly $k$ of the random points lie in $[0,p]$. That is
\begin{equation}
\prob{X_i \geq p} = \sum_{k=0}^{i-1}{{n\choose k}}p^i(1-p)^{n-k} \label{eq:binomial}
\end{equation}
Notice that the right hand side of Equation (\ref{eq:binomial}) is exactly the probability that a Binomial random variable with $n$ trials and success probability $p$ ($Bin(n,p)$) is at most $i-1$.
So we can apply Hoeffding's inequality
\footnote{Recall that Hoeffding's inequality for a Binomial random variable is $\prob{Bin(n, p) \leq (p-\epsilon)n} \le  e^{-2\epsilon^2 n}$. We use this with $\epsilon = p - \frac{i-1}{n}$.}
to get
\begin{eqnarray*}
\prob{X_i \geq p}
&=& \prob{Bin(n, p) \leq i-1} \\
&=& \prob{Bin(n, p) \leq \left(p - \left(p - \frac{i-1}{n}\right)\right)n} \\
&\leq & e^{-2\left(p - \frac{i-1}{n}\right)^2 n}
\end{eqnarray*}
By choosing $p = \frac{i-1}{n} + \sqrt{\frac{1}{2}\ln{\frac{1}{1-\gamma}} + \frac{\ln{n}}{n}}$ we get that
\begin{eqnarray*}
\prob{X_i \geq p} &\leq& e^{-2\left(p - \frac{i-1}{n}\right)^2 n} \\
&=& e^{-n\ln{\frac{1}{1-\gamma}} - 2\ln{n}} \\
%%&=& n^{-2}(\frac{1}{1-\gamma})^{-n} \\
&=& \frac{(1-\gamma)^n}{n^2}
\end{eqnarray*}
finishing the proof of the lemma.
\qed\end{proof}

\begin{lemma}\label{lem:independent}
For any $0 \leq i \leq n$, conditioned on the event   $X_i = x$,
the random variables $C_i$ and $I_i$ are independent. Formally,
$$\prob{C_i \leq k \mid (X_i =x) \wedge I_i} = \prob{C_i \leq k \mid (X_i =x)}.$$
\end{lemma}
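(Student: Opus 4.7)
The plan is to exploit the classical property of the order statistics of an i.i.d.\ uniform sample: conditional on the value of $X_i$, the points lying to its left and the points lying to its right split into two mutually independent blocks. Once this is established, the lemma follows because $C_i$ is a function of the right block and $I_i$ is a function of the left block.

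First I would recall the standard fact that given $X_i = x$, the vector $(X_1,\ldots,X_{i-1})$ is distributed as the order statistics of $i-1$ i.i.d.\ uniforms on $[0,x)$, the vector $(X_{i+1},\ldots,X_n)$ is distributed as the order statistics of $n-i$ i.i.d.\ uniforms on $(x,1)$, and the two vectors are conditionally independent. This is just the observation that, conditional on which point is ranked $i$'th and equals $x$, the remaining $n-1$ points are an unordered i.i.d.\ uniform sample, and partitioning them by whether they fall below or above $x$ gives independent uniform samples on the two subintervals.

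Next I would verify that $C_i$ and $I_i$ depend on disjoint blocks once we condition on $X_i = x$. By definition,
\begin{equation*}
C_i \;=\; 1 \;+\; \bigl|\{\,j > i : X_j < X_i + \gamma\,\}\bigr|,
\end{equation*}
so conditional on $X_i = x$, the random variable $C_i$ is a measurable function only of $(X_{i+1},\ldots,X_n)$. On the other hand, the greedy algorithm processes points in increasing order and decides whether to include $X_i$ based solely on $X_i$ and the subset of previously accepted points, which is itself determined by $(X_1,\ldots,X_{i-1})$; hence conditional on $X_i = x$, the indicator $I_i$ is a measurable function only of $(X_1,\ldots,X_{i-1})$.

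Combining the two observations, conditional on $X_i = x$ the random variable $C_i$ is a function of one block and $I_i$ is a function of an independent block, so they are conditionally independent. In particular,
\begin{equation*}
\prob{C_i \le k \mid X_i = x,\ I_i = 1} \;=\; \prob{C_i \le k \mid X_i = x},
\end{equation*}
which is the stated identity. There is no real obstacle here: the only point that requires a line of justification is the measurability claim for $I_i$, which is immediate from the fact that greedy is a left-to-right algorithm and therefore makes its decision about $X_i$ without inspecting any $X_j$ with $j>i$.
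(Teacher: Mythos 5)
Your proof is correct and follows essentially the same route as the paper: conditioned on $X_i=x$, the indicator $I_i$ is determined by the points to the left of $x$ while $C_i$ is a function only of the points to the right, and these two blocks are conditionally independent by the standard order-statistics decomposition (the paper phrases this by conditioning on the set $A$ of the first $i$ arrivals and averaging, but the underlying argument is identical).
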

\begin{proof}
Given that $X_i=x$, the set $A$ of the $i$ items arriving at times $\le x$, and the arrival times of the items in $A$ (that is $X_1,\ldots,X_{i-1}$), the action of the greedy algorithm on
the $i$th item arriving at $x$ is determined (that is $I_i$ is determined).
 On the other hand,
$C_i$ depends only on the arrival times of the points not in $A$ (that is $Z\setminus A$).
Since this holds for any set $A$, and for any arrival times of these points, it also holds  without conditioning on $A$.
\qed\end{proof}

\begin{lemma}\label{lem:ci_cond_xi}
For any $0 \leq i \leq n$,
$$E[C_i \mid X_i =x] \leq 1 + \frac{\gamma}{1-x}(n-i)$$
\end{lemma}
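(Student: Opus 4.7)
The plan is to decompose $C_i$ into the contribution of $X_i$ itself plus the contributions of the $n-i$ points strictly to the right of $X_i$. Conditioning on $X_i=x$, the points $X_1,\ldots,X_{i-1}$ are all $< x$ and hence cannot lie in $[x,x+\gamma)$, while $X_i = x$ itself always lies in $[x,x+\gamma)$ and contributes exactly $1$ to $C_i$. Thus the problem reduces to bounding the expected number of points among $X_{i+1},\ldots,X_n$ that fall in $[x,x+\gamma)$.

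The key probabilistic fact I would invoke is the standard observation about order statistics of i.i.d. uniform samples on $[0,1)$: conditioned on the $i$-th smallest sample being equal to $x$, the remaining $n-i$ samples that are larger than $x$ are distributed as $n-i$ independent samples drawn uniformly from $[x,1)$ (their ordering gives the order statistics $X_{i+1},\ldots,X_n$). Concretely, each of them independently lies in $[x,x+\gamma)$ with probability $\min(\gamma,1-x)/(1-x) \le \gamma/(1-x)$.

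By linearity of expectation, the expected number of points among $X_{i+1},\ldots,X_n$ that fall in $[x,x+\gamma)$ is at most $(n-i)\cdot \gamma/(1-x)$. Adding the unit contribution from $X_i$ gives
\[
\Exp{C_i \mid X_i = x} \leq 1 + \frac{\gamma}{1-x}(n-i),
\]
as required. There is no real obstacle here: the only thing to be careful about is to note that when $x+\gamma > 1$ the bound $\gamma/(1-x)$ is a loose overestimate (the true conditional probability is $1$, but the inequality still holds since we are using $\leq$), and that the conditioning on $X_i=x$ does not affect the conditional distribution of the points to the right beyond restricting them to $[x,1)$.
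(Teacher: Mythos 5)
Your proof is correct and follows essentially the same route as the paper's: condition on $X_i=x$, use the fact that the $n-i$ points to the right are i.i.d.\ uniform on $[x,1)$, bound each one's probability of landing in $[x,x+\gamma)$ by $\gamma/(1-x)$, and apply linearity of expectation, with the same care for the case $x+\gamma>1$. No gaps.
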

\begin{proof}
Conditioning on $X_i=x$, there are exactly $n-i$ points each distributed uniformly in $[x,1)$.
Let $\set{Z_1, \dots Z_{n-i} }$ be $n-i$ independent random variables each distributed uniformly in $[x,1)$. Since $C_i$ contains the $i$th point and any of the following points the falls in $[x, x+ \gamma)$ we have that
\begin{eqnarray*}
\Exp{C_i \mid X_i = x} &=& 1 + \Exp{\size{\set{Z_i < \gamma}}} \\
&=& 1 + \sum_{i = 1}^{n-i}{\prob{Z_i < \gamma}} \ .
\end{eqnarray*}
If $x \le 1-\gamma$ then $\prob{Z_i < \gamma} = \frac{\gamma}{1-x}$, otherwise $\prob{Z_i < \gamma} = 1$, therefore
\begin{eqnarray*}
\Exp{C_ i \mid  X_i = x} &=& 1 + \sum_{i = 1}^{n-i}\min\left({1, \frac{\gamma}{1-x}}\right) \\
&\leq& 1 + \sum_{i = 1}^{n-i}{\frac{\gamma}{1-x}} \\
&=& 1 + \frac{\gamma(n-i)}{1-x}
\end{eqnarray*}
\qed\end{proof}

\begin{lemma}\label{lem:iF_eq}
For all $i,k \geq 1$ such that $i+k \le n$,
$$\Exp{m(\set{X_{1}, \ldots, X_{k}}, \gamma)} = \Exp{m(\set{X_{i + 1}, \ldots, X_{i+k}}, \gamma)}.$$
\end{lemma}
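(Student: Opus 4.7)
The approach is to exploit the well-known exchangeability of the spacings of uniform order statistics together with the translation invariance of the function $m(\cdot,\gamma)$.

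Let $U_1,\ldots,U_n$ be the underlying i.i.d.\ uniform samples on $[0,1)$ whose order statistics are $X_1 < X_2 < \cdots < X_n$. Define the $n+1$ spacings
\begin{equation*}
D_0 = X_1, \qquad D_j = X_{j+1} - X_j \text{ for } 1 \le j \le n-1, \qquad D_n = 1 - X_n.
\end{equation*}
A classical fact is that the vector $(D_0,D_1,\ldots,D_n)$ is exchangeable (indeed, it is distributed as $(E_0,\ldots,E_n)/\sum_j E_j$ where the $E_j$ are i.i.d.\ exponential). The first step of the plan is to invoke this fact; I would state it as a one-line lemma or simply cite it.

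The second step is to observe that $m(S,\gamma)$ depends only on pairwise differences in $S$: for any real $c$, $m(S,\gamma)=m(S-c,\gamma)$, where $S-c=\{s-c : s\in S\}$. Applying this with $c=X_1$ to the first set and $c=X_{i+1}$ to the second gives
\begin{align*}
m(\set{X_1,\ldots,X_k},\gamma) &= m\bigl(\set{0,\ D_1,\ D_1+D_2,\ \ldots,\ D_1+\cdots+D_{k-1}},\gamma\bigr),\\
m(\set{X_{i+1},\ldots,X_{i+k}},\gamma) &= m\bigl(\set{0,\ D_{i+1},\ D_{i+1}+D_{i+2},\ \ldots,\ D_{i+1}+\cdots+D_{i+k-1}},\gamma\bigr).
\end{align*}
Thus both quantities are the same deterministic function $\Phi$ applied, in one case, to $(D_1,\ldots,D_{k-1})$ and, in the other case, to $(D_{i+1},\ldots,D_{i+k-1})$.

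The third step is to conclude by exchangeability: since $(D_0,\ldots,D_n)$ is exchangeable and both $(D_1,\ldots,D_{k-1})$ and $(D_{i+1},\ldots,D_{i+k-1})$ are ordered sub-tuples of length $k-1$ (with indices in $\{0,1,\ldots,n\}$, which by the hypothesis $i+k\le n$ are valid), they have identical joint distributions. Therefore $\Phi(D_1,\ldots,D_{k-1})$ and $\Phi(D_{i+1},\ldots,D_{i+k-1})$ are equal in distribution, and in particular have the same expectation, which is exactly the statement of the lemma. The main (minor) obstacle is only the bookkeeping needed to make precise the reduction to the function of spacings; the probabilistic content is entirely contained in the exchangeability of uniform spacings, which is standard.
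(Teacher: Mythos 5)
Your proposal is correct, but it takes a genuinely different route from the paper's. The paper proves the lemma by a reflection-plus-conditioning argument: it first records that $m(\cdot,\gamma)$ is invariant under maps of the form $y\mapsto x-y$, then treats $i+k=n$ by noting that $(X_1,\ldots,X_k)$ and $(1-X_n,\ldots,1-X_{n-k+1})$ are identically distributed, and treats $i+k<n$ by conditioning on $X_{i+k+1}=x$, under which $(X_1,\ldots,X_k)$ and $(x-X_{i+k},\ldots,x-X_{i+1})$ are identically distributed; averaging over $x$ finishes the proof. You instead reduce both expectations to the same measurable function of consecutive spacings and invoke exchangeability of the spacing vector $(D_0,\ldots,D_n)$. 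Your reduction is sound: translation invariance of $m$ is immediate since $\gamma$-independence depends only on pairwise differences, the index bookkeeping is fine ($i+k-1\le n-1$ under the hypothesis $i+k\le n$), the degenerate case $k=1$ is trivial, and exchangeability of uniform spacings is indeed classical (it follows from the symmetric joint density $n!$ of $(D_0,\ldots,D_{n-1})$ on the simplex, or from the normalized-exponential representation you mention), though in a self-contained write-up you should cite or prove that fact. What your approach buys: no case split and no conditioning, and a single symmetry statement covers every block of consecutive order statistics at once. What the paper's approach buys: it relies only on the elementary reflection symmetry of i.i.d.\ uniforms on an interval, so it imports no auxiliary distributional fact about spacings.
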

\begin{proof}
The number of $\gamma$-independent points are invariant under translation and rotation, {\sl i.e.},
for any real $x$ and for every $\ell$, $1 \leq \ell\leq n$, and for any  set of points $\set{x_1,\ldots, x_\ell}$,
 \begin{equation} m(\set{x_1,\ldots, x_\ell},\gamma) =m(\set{x-x_1,\ldots, x-x_\ell},\gamma). \label{eq:minvariant} \end{equation}

Assume that $i+k = n$. The vectors $(X_1,\ldots, X_k)$ and  $(1-X_n,\ldots, 1-X_{n-k+1})$ have the same distribution, and the lemma follows from (\ref{eq:minvariant}).

Otherwise, ($i+k < n$), we condition on $X_{i+k+1} = x$. It now follows that the vectors $(X_1,\ldots, X_k)$ and  $(x-X_{i+k},\ldots, x-X_{i+1})$ have the same distribution. Therefore, it follows from (\ref{eq:minvariant}) that $$\Exp{m(\set{X_{1}, \ldots, X_{k}}, \gamma)\mid X_{i+k+1}=x} = \Exp{m(\set{X_{i + 1}, \ldots, X_{i+k}}, \gamma)\mid X_{i+k+1}=x}.$$
Since the above holds for every $x$, the lemma follows.
\qed\end{proof}

We now give several technical lemmata required to conclude the proofs above.
\begin{lemma}\label{len:1_min_1_n}
For any $0 < \alpha < 1$
$$(1-\alpha)^{1/\alpha} \geq\frac{1}{e} \cdot (1-\alpha)$$
\end{lemma}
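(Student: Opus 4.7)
\medskip

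\noindent\textbf{Proof proposal.}
The plan is to take logarithms of both sides and reduce the claim to an elementary one-variable calculus inequality. Since $1-\alpha>0$ we can divide by $1-\alpha$ and obtain the equivalent statement $(1-\alpha)^{1/\alpha - 1} \geq 1/e$. Taking the natural logarithm (both sides are positive) this becomes
\begin{equation*}
\left(\frac{1}{\alpha}-1\right)\ln(1-\alpha) \;\geq\; -1,
\qquad \text{i.e.}\qquad
\frac{(1-\alpha)\ln(1-\alpha)}{\alpha} \;\geq\; -1,
\end{equation*}
which, after multiplying through by the positive quantity $\alpha$ and rearranging, is equivalent to
\begin{equation*}
g(\alpha) \;:=\; \alpha + (1-\alpha)\ln(1-\alpha) \;\geq\; 0
\qquad\text{for all } \alpha\in(0,1).
\end{equation*}

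The plan is then to prove this last inequality by a direct monotonicity argument. First I would note that $g(0)=0$. Next I would differentiate, using the product rule on $(1-\alpha)\ln(1-\alpha)$:
\begin{equation*}
g'(\alpha) \;=\; 1 \;+\; \bigl(-1\bigr)\ln(1-\alpha) \;+\; (1-\alpha)\cdot\frac{-1}{1-\alpha}
\;=\; -\ln(1-\alpha).
\end{equation*}
For $\alpha\in[0,1)$ we have $1-\alpha\in(0,1]$, so $\ln(1-\alpha)\leq 0$, and hence $g'(\alpha)\geq 0$. Therefore $g$ is non-decreasing on $[0,1)$, and combined with $g(0)=0$ this yields $g(\alpha)\geq 0$ on $(0,1)$, which is the desired reduction. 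Unwinding the equivalences gives the lemma.

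There is no real obstacle here; the only step that requires a little care is keeping track of signs when multiplying/dividing inequalities by $\alpha$ and by $1-\alpha$ (both positive on $(0,1)$) and when taking the logarithm of a quantity in $(0,1)$, which is negative. Once the inequality is rewritten as $g(\alpha)\geq 0$, the derivative computation is immediate and the conclusion follows from the fundamental theorem of calculus applied to $g'\geq 0$ on $[0,\alpha]$.
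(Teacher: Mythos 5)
Your proof is correct and amounts to essentially the same argument as the paper's: both reduce the lemma to the elementary inequality $\ln(1-\alpha)\geq -\alpha/(1-\alpha)$ (your $g(\alpha)=\alpha+(1-\alpha)\ln(1-\alpha)\geq 0$ is just this multiplied by $1-\alpha$). The only difference is how that scalar inequality is verified --- the paper compares $\int_{1-\alpha}^{1}x^{-1}\,dx$ with $\alpha/(1-\alpha)$, while you check $g(0)=0$ and $g'(\alpha)=-\ln(1-\alpha)\geq 0$; both are valid.
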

\begin{proof}
For any $x$ such that  $1-\alpha \leq x \leq 1$ we have that $1 \leq 1/x \leq 1/(1-\alpha)$. So by integrating from $1-\alpha$ to $1$ we get that

\begin{align}
\; &&\int_{1-\alpha}^{1} \! \frac{1}{x}\, \mathrm{d}x &\leq  \int_{1-\alpha}^{1} \! 1/(1-\alpha) \, \mathrm{d}x \nonumber  \\
\Rightarrow && \ln(1) - \ln (1-\alpha) &\leq \alpha/(1-\alpha)  \nonumber \\
\Rightarrow  && \ln (1-\alpha) &\geq  -\frac{1}{1/\alpha-1}  \nonumber \\
\Rightarrow && e^{\ln (1-\alpha)} &\geq  e^{-\frac{1}{1/\alpha-1}}  \nonumber \\
\Rightarrow && (1-\alpha) &\geq  e^{-\frac{1}{1/\alpha-1}} \label{eq:exp03} \\
\Rightarrow &&(1-\alpha)^{1/\alpha-1} &\geq \frac{1}{e} \label{eq:exp04} \\
\Rightarrow &&(1-\alpha)^{1/\alpha} &\geq \frac{1}{e}\cdot (1-\alpha) \label{eq:exp05} \ .
\end{align}
Inequality (\ref{eq:exp04}) follows from (\ref{eq:exp03}) by taking the $(1/\alpha-1)$ power of both sides and (\ref{eq:exp05}) follows from (\ref{eq:exp04}) by multiplying both sides by $(1-\alpha)$.
\qed\end{proof}

\begin{lemma}\label{lem:arithmetic_xlnx}
For $x \geq 1/e$, $$\sqrt{e \cdot x\ln (1/x)} $$
is monotonically decreasing,
and for any $x > 0$, $$\sqrt{e \cdot x\ln(1/x)} \leq 1$$
\end{lemma}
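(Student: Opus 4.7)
The plan is to analyze the single-variable function $g(x) = x \ln(1/x) = -x \ln x$ via elementary calculus; both assertions then follow from locating its unique maximum on $(0,1]$, since $\sqrt{e \cdot(\cdot)}$ is a monotone (strictly increasing) function of its argument.

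First I would differentiate: $g'(x) = -\ln x - 1$, which vanishes exactly at $x = 1/e$, is positive for $0 < x < 1/e$, and is negative for $x > 1/e$. Hence $g$ is strictly increasing on $(0,1/e]$ and strictly decreasing on $[1/e, \infty)$, attaining its global maximum at $x = 1/e$ with value
\[
g(1/e) \;=\; \frac{1}{e}\ln e \;=\; \frac{1}{e}.
\]

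For the first claim, since $g$ is decreasing on $[1/e,\infty)$ and $g(x) \geq 0$ on $[1/e,1]$, the composition $\sqrt{e \cdot g(x)}$ is well-defined and decreasing there (the square root is monotone increasing on $[0,\infty)$). For the second claim, for any $0 < x \leq 1$ we have $0 \leq g(x) \leq 1/e$ by the maximization above, so
\[
\sqrt{e \cdot x\ln(1/x)} \;=\; \sqrt{e \cdot g(x)} \;\leq\; \sqrt{e \cdot \tfrac{1}{e}} \;=\; 1.
\]
(For $x > 1$ the claim is vacuous in the sense that $g(x) < 0$ and $\sqrt{e\cdot g(x)}$ is not a positive real; the inequality is used in the paper only for small $x$ where $g(x) \geq 0$.)

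There is no real obstacle here: this is a routine single-variable optimization together with the observation that $t \mapsto \sqrt{et}$ preserves monotonicity and the bound $t \leq 1/e$.
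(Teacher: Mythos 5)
Your proposal is correct and matches the paper's own argument: both differentiate $x\ln(1/x)$, locate the unique maximum at $x=1/e$ with value $1/e$, and deduce the monotonicity and the bound $\sqrt{e\cdot x\ln(1/x)}\leq 1$. The only addition is your (reasonable) remark about the domain for $x>1$, which does not change the argument.
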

\begin{proof}
Let $f(x)=x \ln (1/x)$. Then $f'(x) = -\ln{x} - 1$ which is positive
for any $0 < x < 1/e$ and negative for any $x > 1/e$. Therefore the maximum of $f$ is obtained at $x = 1/e$. It follows that for all $x>0$
$$
\sqrt{e \cdot x\ln(1/x)} \leq \sqrt{e \frac{1}{e} \ln (e)} = 1 \ .
$$
\qed\end{proof}

\begin{lemma}\label{lem:1_min_gamma}
For any $n \geq 2$ and any $0<\gamma \leq 0.5$
$$(1-\gamma) \geq \frac{1}{1+\gamma n} \ .$$
\end{lemma}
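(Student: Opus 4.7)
The plan is to clear denominators and reduce the claim to an elementary inequality. Multiplying both sides of $(1-\gamma) \geq \frac{1}{1+\gamma n}$ by the (positive) quantity $1+\gamma n$, the inequality is equivalent to
\begin{equation*}
(1-\gamma)(1+\gamma n) \geq 1.
\end{equation*}
Expanding the left-hand side gives $1 + \gamma n - \gamma - \gamma^2 n$, so after canceling $1$ the statement becomes $\gamma n - \gamma - \gamma^2 n \geq 0$, i.e., $\gamma\bigl(n(1-\gamma) - 1\bigr) \geq 0$.

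Since $\gamma > 0$, it suffices to establish $n(1-\gamma) \geq 1$, equivalently $n \geq \frac{1}{1-\gamma}$. For $0 < \gamma \leq 1/2$ we have $\frac{1}{1-\gamma} \leq 2$, and the hypothesis $n \geq 2$ then yields the required bound. There is no real obstacle here; the only thing to be careful about is that $1+\gamma n > 0$ so that multiplying through preserves the direction of the inequality, which is immediate from $\gamma, n > 0$.
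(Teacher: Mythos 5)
Your proof is correct and follows essentially the same route as the paper: clear the denominator, expand $(1-\gamma)(1+\gamma n)$, and use $n\geq 2$ together with $\gamma\leq 1/2$ (the paper just performs the reduction to $n=2$ first via monotonicity of the right-hand side in $n$ and then does the same expansion).
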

\begin{proof}
It suffices to prove the above inequality for $n=2$, for $n>2$ the right hand side can only decrease whereas the left hand side does not depend on $n$.

Now, $(1-\gamma)(1+2\gamma) = 1 +\gamma -2\gamma^2$, and $\gamma-2\gamma^2\geq 0$ for all $0<\gamma\leq 0.5$ so the lemma holds.
\qed\end{proof}

\begin{lemma}\label{lem:logngamman}
For any  $0< \gamma < 1/e$  we have that
$$\frac{n\ln{n}}{(n+1/\gamma)^2} \leq \left(\frac{\beta(\gamma)}{\theoremFVal}\right)\cdot\gamma\ln{(1/\gamma)} + \left(\frac{\beta(\gamma)\ln{(\beta(\gamma))}}{\theoremFVal}\right)\cdot\gamma,$$
where $\beta(\gamma)$ is defined to be $1 + \frac{2}{\ln{(1/\gamma)} - 1}$.
\end{lemma}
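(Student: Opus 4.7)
The plan is to bound the continuous supremum of the left-hand side over $n > 0$ by the right-hand side, treating $\gamma$ as fixed. Set $M = 1/\gamma > e$ and substitute $n = \beta M$; after regrouping, the RHS equals $\beta(\gamma)\ln(\beta(\gamma) M)/(4M)$, and multiplying through by $M$ reduces the claim to
\begin{equation*}
g(\beta) \;:=\; \frac{\beta\ln(\beta M)}{(\beta+1)^2} \;\le\; \frac{\beta(\gamma)\ln(\beta(\gamma) M)}{4} \qquad \text{for all } \beta > 0.
\end{equation*}
For $\beta \le 1/M$ the LHS is nonpositive while the RHS is strictly positive (the hypothesis $\gamma < 1/e$ gives $\beta(\gamma) > 1$ and $\ln(1/\gamma) > 1$), so only the range $\beta > 1/M$ is interesting. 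Since $g$ is continuous on $(0,\infty)$, vanishes at $\beta = 1/M$, and tends to $0$ as $\beta \to \infty$, its supremum is attained at an interior critical point $\beta^*$.

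Differentiating, the first-order condition $g'(\beta^*) = 0$ rearranges to
\begin{equation*}
\ln(\beta^* M)\,(\beta^* - 1) \;=\; \beta^* + 1, \qquad \text{i.e.,} \qquad \ln(\beta^* M) \;=\; 1 + \frac{2}{\beta^* - 1},
\end{equation*}
which forces $\beta^* > 1$. Substituting this identity back into $g$ telescopes to the clean closed form
\begin{equation*}
g(\beta^*) \;=\; \frac{\beta^*}{(\beta^*-1)(\beta^*+1)} \;=\; \frac{\beta^*}{(\beta^*)^2 - 1}.
\end{equation*}

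It remains to relate $\beta^*$ to $\beta(\gamma) = 1 + 2/(\ln M - 1)$. Since $\ln\beta^* > 0$, the critical-point equation gives $\ln M = 1 + 2/(\beta^*-1) - \ln\beta^* < 1 + 2/(\beta^*-1)$, so $\beta^* - 1 < 2/(\ln M - 1) = \beta(\gamma) - 1$, i.e., $\beta^* < \beta(\gamma)$. Using $\ln(\beta(\gamma) M) \ge \ln(\beta^* M) > 2/(\beta^*-1)$, we obtain
\begin{equation*}
\frac{\beta(\gamma)\ln(\beta(\gamma) M)}{4} \;>\; \frac{\beta(\gamma)}{2(\beta^*-1)} \;\ge\; \frac{\beta^*}{(\beta^*-1)(\beta^*+1)} \;=\; g(\beta^*),
\end{equation*}
where the penultimate inequality reduces (after clearing denominators) to $\beta(\gamma)(\beta^*+1) \ge 2\beta^*$; this follows from $\beta^* \ge 1$ (hence $\beta^* + 1 \ge 2$) combined with $\beta(\gamma) > \beta^*$, giving $\beta(\gamma)(\beta^*+1) \ge 2\beta(\gamma) > 2\beta^*$. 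The main technical step is the compact evaluation of $g(\beta^*)$ in closed form at the critical point; once that identity is in hand, everything else is elementary algebra powered by the sharp inequality $\beta^* < \beta(\gamma)$ and the fact that $\beta(\gamma)$ was chosen precisely to be the first-order approximation of the maximizer $\beta^*$ in the regime $\ln M \to \infty$.
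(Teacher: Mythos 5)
Your proof is correct and follows essentially the same route as the paper: both treat $n$ as a continuous variable, analyze the derivative of the same ratio (your stationarity numerator $(1-\beta)\ln(\beta M)+\beta+1$ is exactly the paper's sign function $k(n)$ in the normalized variable $\beta=\gamma n$), and show the maximizer satisfies $1<\beta^*<\beta(\gamma)$. The only difference is cosmetic: the paper bounds the value at the bracketed maximizer by monotonicity of numerator and denominator, while you substitute the first-order condition to evaluate $g(\beta^*)$ in closed form before comparing.
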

\begin{proof}

Fix $0<\gamma<1/e$, define $h(n) = \frac{n\ln{n}}{(n+1/\gamma)^2}$ and
let $n_0 = \argmax_{n\geq 1} h(n)$. Note that $\beta(\gamma)>1$ for $0<\gamma<1/e$, we show below that
\begin{equation} 1/\gamma \leq n_0 \leq \beta(\gamma)/\gamma \ . \label{eq:boundsn0}\end{equation}
Assume that Equation (\ref{eq:boundsn0}) holds. As both $n\ln{n}$ and ${(n+1/\gamma)}^{2}$ are monotonically increasing in $n$, it follows that for all $n\geq 1$
\begin{eqnarray}
  h(n) &\leq& h(n_0) \nonumber \\
&=& \frac{n_0 \ln(n_0)}{(n_0+1/\gamma)^2} \nonumber \\
&\leq& \frac{(\beta(\gamma)/\gamma) \ln(\beta(\gamma)/\gamma)}{(1/\gamma + 1/\gamma)^2} \label{eq:hn02}\\
&=& \left(\frac{\beta(\gamma)\gamma}{4}\right)\ln(\beta(\gamma)/\gamma) \nonumber \\
&=& \frac{\beta(\gamma)}{4}\cdot\gamma\ln(1/\gamma)+\frac{\beta(\gamma)\ln(\beta(\gamma))}{4}\cdot\gamma, \label{eq:hn03}
\end{eqnarray}

where Inequality (\ref{eq:hn02}) follows from the two bounds in Equation (\ref{eq:boundsn0}), Equation (\ref{eq:hn03}) is the statement of the Lemma.

It remains to prove the inequalities in (\ref{eq:boundsn0}). We show below that for all $1 \leq n < 1/\gamma$: $h'(n)>0$, and that for all $n > \beta(\gamma)/\gamma$: $h'(n)<0$, this proves Equation  (\ref{eq:boundsn0}).

 The derivative of $h$ (with respect to $n$) is
\begin{eqnarray*}
h'(n) &=& \frac{(\ln{n} + 1) (n+1/\gamma)^2 - 2(n+1/\gamma) n\ln{n}}{(n+1/\gamma)^4} \\
&=& \frac{(\ln{n} + 1) (n+1/\gamma) - 2n\ln{n}}{(n+1/\gamma)^3} \\
&=& \frac{(n + 1/\gamma) + (1/\gamma-n)\ln{n}}{(n+1/\gamma)^3}.
\end{eqnarray*}
 as the denominator of $h'(n)$ is positive, the nominator determines the sign. Ergo,
it is enough to look at the sign of
$$\frac{1}{\gamma} \cdot \bigr(\gamma n+1   + (1-\gamma n)\ln{n}\bigr),$$
and, again, as $1/\gamma >0$, this is equal to the sign of
$$k(n)=\gamma n+1   + (1-\gamma n)\ln{n}.$$

For $1 \leq n < 1/\gamma$, it must be that $k(n)>0$, and hence $h'(n) > 0$.

As $\beta(\gamma)>1$ and $n \geq \beta(\gamma)/\gamma$ we have that $n > 1/\gamma$ and $n\gamma \geq \beta(\gamma) > 1$. Now, we have that
\begin{eqnarray*}
k(n) &=& \gamma n+1   + (1-\gamma n)\ln(n)\\
&=&(\gamma n -1)\left(1 + \frac{2}{\gamma n - 1} - \ln{n}\right) \\
&<& (\gamma n -1)\left(1 + \frac{2}{\beta(\gamma) - 1} - \ln{(1/\gamma)}\right) \\
&=& \frac{\gamma n - 1}{\beta(\gamma) - 1}\bigr(2 - (\beta(\gamma) - 1)(\ln{(1/\gamma)}-1)\bigr) \\
&=& \frac{\gamma n - 1}{\beta(\gamma) - 1}\left(2 - \frac{2}{\ln{(1/\gamma)} - 1}\cdot (\ln{(1/\gamma) - 1})\right) \\
&=& 0.
\end{eqnarray*}

thus concluding the proof of (\ref{eq:boundsn0}) and the lemma.

\qed\end{proof}

\section{Expected size of the Maximum capacity \lowercase{$d$} $\gamma$-Independent Set (\lowercase{$d$} identical machines)} \label{sec:d_width_uppper}
\newcommand{\MacroDLowerVal}{\frac{e\sqrt{0.5}}{2\pi}}
\newcommand{\MacroMidLowerBound}{\frac{e}{2\pi}}
\newcommand{\MacroMidUpperBound}{\frac{\sqrt{2\pi}}{e^2}}
\newcommand{\BinProb}[3]{{{#1} \choose {#3}}{{#2}^{#3}(1-{#2})^{{#1}-{#3}}}}
\newcommand{\BinProbExplit}[4]{{{#1} \choose {#3}}{{#2}^{#3}(1-{#2})^{#4}}}

A capacity $d$ $\gamma$-independent set is a set of ``feasible rentals", given that $d$ items can be rented, and each item is rented for a period of length $\gamma$. Equivalently to the definition in Section \ref{sec:def}, a capacity $d$ $\gamma$-independent set is a set of points $S \subset [0,1)$,  such that given any subset of intervals $I$, $|I|>d$,  $I\subset \{[t, t+\gamma) \mid t\in S\}$, the intersection of all intervals in $I$ is empty.  The unit interval graph is defined by intervals of length $\gamma$ whose left endpoints are the points of $Z$.

Let $m_d(Z,\gamma)$ denote the size of
a maximum capacity $d$ $\gamma$-independent subset  of a set $Z$.
In this section we study the expectation of $m_d(Z,\gamma)$ when the points of $Z$ are chosen uniformly at random in $[0,1)$.
Specifically we prove:

\begin{theorem}\label{theorem:DWidthIndependetSet}
 Let  $Z$ be a set of $n$ points chosen uniformly at random in $[0,1)$, and let
 $\gamma = 1/k$, for some integer $k\geq 2$. Then we have
\begin{equation}\label{eq:dMachinesLower}
\Exp{m_d(Z,\gamma)}{Z} \geq
\min{(n,\, d\cdot1/\gamma)}\cdot\left(1-\Theta\left(\sqrt{\frac{\ln{d}}{d}}\right)\right).
\end{equation}
\end{theorem}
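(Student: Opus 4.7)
The plan is to construct a capacity-$d$ $\gamma$-independent subset of $Z$ via randomized rounding, with the rounding probability tuned by a concentration bound on the maximum window load. I would first reduce to the case $n\gamma \le d$ by subsampling: when $n > d/\gamma$, a uniformly random subset $Z' \subseteq Z$ of size $\lceil d/\gamma\rceil$ consists of i.i.d.\ uniform points on $[0,1)$, and any capacity-$d$ $\gamma$-independent subset of $Z'$ is one for $Z$. So in both regimes it suffices to prove the bound assuming $n\gamma \le d$.

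Next I would control $M = \max_{t \in [0,1)} |Z \cap [t, t+\gamma)|$ from above. Since the content of a sliding length-$\gamma$ window changes only at the $n$ points of $Z$, there are at most $n$ canonical window positions attaining $M$; at each one, the load is $\mathrm{Bin}(n,\gamma)$ with mean $\mu = n\gamma \le d$. A Bernstein/Chernoff tail bound combined with a union bound gives $\prob{M > d + c\sqrt{d\ln d}} \le 1/\mathrm{poly}(d)$ for a suitable constant $c$ (any $\ln(1/\gamma)$ slack from the union bound is absorbed into $\Theta(\sqrt{\ln d/d})$ under the standard assumption that $1/\gamma$ is polynomial in $d$). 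Conditional on $M$, I would then select each point of $Z$ independently with probability $p = \min\{1,\,(1-\epsilon)d/M\}$ where $\epsilon = \Theta(\sqrt{\ln d/d})$. For each canonical window the expected number of selected points is at most $(1-\epsilon)d$, so by Chernoff plus a union bound over the $O(n)$ windows, every window contains at most $d$ selected points with high probability, producing a valid capacity-$d$ $\gamma$-independent subset. Its expected size is $np \ge \min\{n,\,n(1-\epsilon)d/M\} \ge \min\{n,\,(d/\gamma)(1-\Theta(\sqrt{\ln d/d}))\}$, matching the claim.

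The main obstacle is coordinating the two Chernoff arguments: the first gives an upper bound on $M$, and the second ensures no window over-selects. Both must survive a union bound over $\Theta(n)$ windows while losing only a $\Theta(\sqrt{\ln d/d})$ multiplicative factor, so $\epsilon$ and $c$ must be tuned to match. The tiny failure probability of the rounding step (where some window has more than $d$ selected points) can be handled by stripping the overflow in each over-full window, losing at most $O(\sqrt{d\ln d})$ points in total, which is absorbed by the $\Theta(\cdot)$. An alternative cleaner route is to phrase the construction as an LP with constraints $\sum_{x_i \in W} z_i \le d$ and exploit the near-integrality of such unit interval graph LPs, but the randomized rounding argument already makes the bookkeeping explicit.
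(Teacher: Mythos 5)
Your construction has a genuine gap: both of your high-probability steps (bounding $M=\max_t|Z\cap[t,t+\gamma)|$ and verifying that no window over-selects after rounding) rest on a union bound over $\Theta(n)$ canonical windows, which after your subsampling is $\Theta(d/\gamma)$ windows. With deviation $\Theta(\sqrt{d\ln d})$ the per-window failure probability is only $e^{-\Theta(\ln d)}=d^{-\Theta(1)}$, so the union bound costs a factor $d/\gamma$ and the events are \emph{not} high-probability unless $\ln(1/\gamma)=O(\ln d)$. You acknowledge this by invoking a ``standard assumption that $1/\gamma$ is polynomial in $d$,'' but the theorem makes no such assumption: $\gamma=1/k$ for an arbitrary integer $k\ge 2$, and $d$ may be tiny compared to $1/\gamma$. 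Without that assumption your error term degrades to $\Theta\bigl(\sqrt{\ln(d/\gamma)/d}\bigr)$, which is strictly weaker than the claimed $\Theta\bigl(\sqrt{\ln d/d}\bigr)$. The fallback you offer --- ``strip the overflow in each over-full window, losing at most $O(\sqrt{d\ln d})$ points in total'' --- is not justified either: on the (now not-so-rare) bad event the overflow is not bounded by $O(\sqrt{d\ln d})$, windows overlap so per-window excesses do not simply add up to a small total, and turning this into an expectation bound would require exactly the per-point tail analysis you are trying to avoid (conditioning on $M$ also correlates the rounding probability with the point positions, so the expectation computation needs care).

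The way to close the gap is to drop the union bounds entirely and argue per point in expectation, which is what the paper does: subsample $Z$ to $\min\bigl(n,(d-\sqrt{3d\ln d})/\gamma+1\bigr)$ points --- i.e.\ thin down to expected window load $d-\sqrt{3d\ln d}$, leaving a margin below $d$ --- and keep every point $y$ whose backward window $[y-\gamma,y)$ contains fewer than $d$ sampled points. The kept set is automatically capacity-$d$ $\gamma$-independent, and a \emph{single} Chernoff bound per point shows each point is kept with probability at least $1-1/d$, so linearity of expectation gives $\min(n,d/\gamma)\bigl(1-\Theta(\sqrt{\ln d/d})\bigr)$ with no dependence on $n$ or $\gamma$ in the error term. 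Your randomized-rounding idea can be salvaged along the same lines (choose the thinning probability deterministically as $(1-\epsilon)d/(n\gamma)$ rather than via the random maximum $M$, and bound the expected number of discarded points pointwise), but then it essentially collapses to the paper's subsample-and-filter argument.
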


\begin{proof}
 Let $Y$ be a random subset of $Z$ of size $\min{(n, T)}$ where\\ $T= (d - \sqrt{3d\ln{d}})\cdot1/\gamma + 1$. Since
 for any subset $Y \subseteq Z$, $m_d(Y, \gamma) \leq m_d(Z, \gamma)$ a lower bound on $m_d(Y, \gamma)$ is also a lower bound on $m_d(Z, \gamma)$. As $Z$ is a random set of points and $Y$ is a random subset of $Z$, choosing $Z$ first and then choosing $Y$ gives the same distribution on $Y$ as does simply choosing $Y$ at random.

Let $Y' = \set{x\in Y \mid \left|[x-\gamma, x) \cap Y\right| < d}$ be the subset of $Y$ containing only points $x$ that do not have $d$ points in an interval of length $\gamma$ ending at $x$. So  $Y'$ is a capacity $d$ $\gamma$-independent set (as the greedy algorithm applied to $Y'$ will pick all the points).  Therefore for any fixed $Z$,
\begin{equation}\label{eq:YPrimeInd}
\Exp{m_d(Y', \gamma)}{Y\subseteq Z} = \Exp{\size{Y'}}{Y\subseteq Z}.
\end{equation}
Since $Y' \subseteq Y \subseteq Z$ we have
\begin{equation}\label{eq:parsedChain}
\Exp{\Exp{m_d(Y', \gamma)}{Y\subseteq Z}}{Z} \leq \Exp{\Exp{m_d(Y, \gamma)}{Y\subseteq Z}}{Z} \leq \Exp{m_d(Z, \gamma)}{Z}
\end{equation}
From (\ref{eq:YPrimeInd}) and (\ref{eq:parsedChain}) follows that it  suffices to show a lower bound on $\Exp{\Exp{\size{Y'}}{Y\subseteq Z}}{Z}$.
\begin{eqnarray}
\nonumber\lefteqn{\Exp{\Exp{\size{Y'}}{Y\subseteq Z}}{Z}} \\
&=& \sum_{y\in Y}{1\cdot\prob{\left|[y-\gamma, y) \cap Y\right| < d}}\nonumber
\\\nonumber &\geq& \size{Y}\cdot{\prob{Bin(\size{Y}-1,\gamma) \leq d-1}}
\\\nonumber &=& \size{Y}\cdot\left(1 - {\prob{Bin(\min{(n,T)-1},\gamma) \geq d}}\right)
\\\nonumber &\geq& \size{Y}\cdot\left(1 - {\prob{Bin(T-1),\gamma) \geq d}}\right)
\\\nonumber &=& \size{Y}\cdot\left(1 - {\prob{Bin(T-1,\gamma) \geq \left(1 + \frac{d-(T-1)\gamma}{(T-1)\gamma}\right)(T-1)\gamma}}\right)\
\\&=& \size{Y}\cdot\left(1 - {\prob{Bin(T-1,\gamma) \geq \left(1 + \frac{\sqrt{3d\ln{d}}}{(T-1)\gamma}\right)(T-1)\gamma}}\right)\ . \label{eq:lowerBeforeChernoff}
\end{eqnarray}
The derivation following the 4th line above follows since $d=(1+(d-(T-1)\gamma)/(T-1)\gamma)(T-1)\gamma$.
Applying the multiplicative form of the Chernoff bound  ($\prob{X\geq (1+\epsilon)\mu}\leq e^{-\frac{\epsilon^2\mu}{3}}$ when $X$ is sum of $n$ IID random variables and $\mu = \Exp{X}$
%\cite{XX})%
 on  $Bin(T-1,\gamma)$  with $\epsilon = \frac{\sqrt{3d\ln{d}}}{(T-1)\gamma}$ we obtain

%\begin{equation}
%\begin{array}{lll}
\begin{eqnarray*}
\Exp{\size{Y'}}{Y\subseteq Z}
    &\geq& \size{Y}\cdot\left(1 - e^{-\frac{1}{3}\cdot\left(\frac{\sqrt{3d\ln{d}}}{(T-1)\gamma}\right)^2\cdot(T-1)\gamma}\right)
    \\&=& \size{Y}\cdot\left(1 - e^{-\frac{d\ln{d}}{(T-1)\gamma}}\right)
    \\&=& \size{Y}\cdot\left(1 - e^{-\frac{d\ln{d}}{d - \sqrt{3d\ln{d}}}}\right)
    \\&\geq& \size{Y}\cdot\left(1 - e^{-\frac{d\ln{d}}{d}}\right)
    \\&=& \size{Y}\cdot\left(1 - 1/d\right) \ .
\end{eqnarray*}
%\end{array}
%\end{equation}
Substituting $\size{Y} = \min{\Bigr(n, (d - \sqrt{3d\ln{d}})\cdot1/\gamma + 1\Bigr)}$ we get
\begin{eqnarray*}
\Exp{\size{Y'}}{Y\subseteq Z}
&\geq& \min{\Bigr(n, \left(d - \sqrt{3d\ln{d}}\right)\cdot1/\gamma\Bigr)}\cdot\left(1 - 1/d\right)
\\&\geq& \min{\Bigr(n, d/\gamma\Bigr)}\cdot \left(1 - \frac{\sqrt{3\ln{d}}}{\sqrt{d}}\right)\cdot(1 - 1/d)
\\&\geq& \min{\Bigr(n, d/\gamma\Bigr)}\cdot \left(1 - \frac{\sqrt{3\ln{d}}}{\sqrt{d}}- 1/d\right)
\\&=& \min{\Bigr(n, d/\gamma\Bigr)}\cdot \left(1 - \Theta\left(\sqrt{\frac{\ln{d}}{d}}\right)\right) \ .
\end{eqnarray*}
\qed\end{proof}

\begin{theorem}\label{theorem:DWidthIndependetSetUpper}
Let  $Z$ be a set of $n=d\cdot 1/\gamma$ points chosen uniformly at random in $[0,1)$ and let $\gamma = 1/k$ (for some integer $k\geq 2$). Then we have
\begin{equation}\label{eq:dMachinesUpper}
\Exp{m_d(Z,\gamma)}{Z}  \leq  n\cdot\left(1-\Theta\left(\frac{1}{\sqrt{d}}\right)\right).
\end{equation}
\end{theorem}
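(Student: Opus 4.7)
The plan is to exploit the trivial "per-slot" capacity bound. Partition $[0,1)$ into the $k = 1/\gamma$ half-open intervals $J_i = [i\gamma, (i+1)\gamma)$, $i = 0,\dots,k-1$. I first claim that if $S \subseteq [0,1)$ is capacity-$d$ $\gamma$-independent, then $|S \cap J_i| \le d$ for each $i$. For each $x \in S \cap J_i$ the half-open interval $[x, x+\gamma)$ contains every time $t \in [\max(S \cap J_i), (i+1)\gamma)$, because $x \le \max(S \cap J_i) < (i+1)\gamma \le x + \gamma$. So every point in $S \cap J_i$ covers a common time $t^\ast$ slightly below $(i+1)\gamma$, and by the capacity-$d$ constraint at most $d$ points can do so.

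Let $N_i = |Z \cap J_i|$, so $N_i \sim \mathrm{Bin}(n, \gamma)$ with mean $n\gamma = d$. From Step 1,
\begin{equation*}
m_d(Z,\gamma) \;\le\; \sum_{i=0}^{k-1} \min(N_i, d),
\end{equation*}
hence by linearity and identical distribution,
\begin{equation*}
\Exp{m_d(Z,\gamma)}{Z} \;\le\; k\cdot\Exp{\min(N_0,d)}{} \;=\; k\bigl(d - \Exp{(d - N_0)^+}{}\bigr).
\end{equation*}

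The main step is to show $\Exp{(d - N_0)^+}{} \ge c\sqrt{d}$ for a universal $c > 0$. Since $\Exp{N_0 - d}{} = 0$, we have $\Exp{(d - N_0)^+}{} = \tfrac{1}{2}\Exp{|N_0 - d|}{}$. The variance of $N_0$ is $d(1-\gamma) \ge d/2$ (since $\gamma \le 1/2$ as $k \ge 2$). A standard anti-concentration estimate for the Binomial (one may apply the local central limit theorem, or directly lower bound the terms $\prob{N_0 = d - j}$ for $j$ around $\sqrt{d}$ using Stirling, analogously to the Stirling bound already invoked in Equation (\ref{eq:after_binon3})) yields $\Exp{|N_0 - d|}{} \ge c_0 \sqrt{d(1-\gamma)} \ge (c_0/\sqrt{2})\sqrt{d}$ for some universal $c_0 > 0$. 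This is the only real obstacle; everything else is bookkeeping.

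Combining these with $n = kd$ gives
\begin{equation*}
\Exp{m_d(Z,\gamma)}{Z} \;\le\; k\Bigl(d - c\sqrt{d}\Bigr) \;=\; kd\Bigl(1 - \tfrac{c}{\sqrt{d}}\Bigr) \;=\; n\Bigl(1 - \Theta\bigl(\tfrac{1}{\sqrt{d}}\bigr)\Bigr),
\end{equation*}
which is the claimed bound. Intuitively, we are losing a full $\Theta(\sqrt{d})$ points per slot because the counts $N_i$ fluctuate around their mean $d$ by order $\sqrt{d}$, and every time a slot undershoots $d$ the slack is wasted. The assumption $n = d/\gamma$ (so the mean slot occupancy is exactly the capacity) is what makes this slack tight; for smaller $n$ only the trivial bound $m_d(Z,\gamma) \le n$ is available (as Table \ref{tab:results1} records).
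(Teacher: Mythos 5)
Your proposal is correct and follows essentially the paper's own route: partition $[0,1)$ into the $1/\gamma$ slots of length $\gamma$, bound the contribution of each slot by $\min(N_i,d)$ with $N_i\sim Bin(n,\gamma)$ of mean exactly $d$, and show the expected per-slot shortfall below $d$ is $\Omega(\sqrt{d})$. The only difference is in packaging: where you write $\Exp{(d-N_0)^+}=\tfrac{1}{2}\Exp{|N_0-d|}$ and invoke a standard binomial anti-concentration (mean absolute deviation) bound, the paper proves the same estimate explicitly in Lemma \ref{lem:binWithoutMid} via the exact identity $\Exp{\min(Bin(n,p),\mu)}=\mu\bigl(1-(1-p)\prob{Bin(n-1,p)=\mu}\bigr)$ followed by a Stirling computation, which is precisely the calculation you sketch.
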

\begin{proof}
For any $0 \leq i < 1/\gamma$ define $Z_i \subseteq Z$ to be the set $\set{z \in Z \mid z\in[i\gamma ~, ~(i+1)\gamma)}$, all the points arriving at the $i$-th slice of size $\gamma$.

Obviously $\bigcup_{1 \leq i < 1/\gamma}{Z_i} = Z$.
%, for $i\neq j$, $Z_i \cap Z_j = \emptyset$.
Therefore
\begin{equation}
m_d(Z,\gamma) \leq \sum_{i=0}^{1/\gamma}{m_d(Z_i,\gamma)}
\label{eq:lowerSum}\end{equation}
To give un upper bound on the size of $m_d(Z_i,\gamma)$ notice that if $\size{Z_i} \leq d$ then $Z_i$ is a capacity $d$ $\gamma$-independent set, thus $m_d(Z_i,\gamma) = \size{Z_i}$.
Otherwise, any subset $B$ of $Z_i$ is capacity $d$ $\gamma$-independent set iff $\size{B} \leq d$.
Thus $m_d(Z_i,\gamma) = d$. Combining these observations we get
\begin{equation}
\Exp{m_d(Z_i,\gamma)}{Z} = \Exp{min(\size{Z_i}, d)}{Z}\ .
\label{eq:mdMin}\end{equation}
$\size{Z_i}$ count the number of points that fall into the line segment $[i\gamma,(i+1)\gamma)$ when we throw $n=d/\gamma$ points uniformly at random into the line segment $[0,1)$. Ergo,  $|Z_i|$ is distributed as $Bin\left(d\cdot1/\gamma,\, \gamma\right)$, therefore
$$\Exp{m_d(Z_i,\gamma)}{Z} = \Exp{\min(Bin\left(d\cdot1/\gamma,\, \gamma\right), d)}.$$
From Lemma (\ref{lem:binWithoutMid}) (see below) and the equation above we conclude that $\Exp{m_d(Z_i,\gamma)}{Z} \leq d\bigr(1 - \MacroMidUpperBound\cdot \frac{\sqrt{1-\gamma}}{\sqrt{d}}\bigr)$. Substituting this bound into (\ref{eq:mdMin}) and then into (\ref{eq:lowerSum}) we obtain
\begin{eqnarray*}
 \Exp{m_d(Z,\gamma)}{Z} \leq \sum_{i=0}^{1/\gamma-1}{\Exp{m_d(Z_i,\gamma)}{Z}}
 &\leq& 1/\gamma \cdot d\left(1 - \MacroMidUpperBound\cdot\frac{\sqrt{1-\gamma}}{\sqrt{d}}\right)
% \\ &=& n\left(1 - \MacroMidUpperBound\cdot\frac{\sqrt{1-\gamma}}{\sqrt{d}}\right)
 \\ &\leq& n\left(1 - \MacroMidUpperBound\cdot\frac{1}{\sqrt{d}}\right)
 \\ &=& n(1-\Theta(\sqrt{1/d})),
\end{eqnarray*}
finishing the proof of the theorem.
\qed\end{proof}

The following lemma deals with the following experiment: Toss $n$ coins with probability $p$ of heads, return the number of heads if the number of heads $\leq \mu=np$ (the expectation), otherwise return $\mu$. What is the expected value of this experiment relative to $\mu$?
The probability that the outcome exceeds  $\mu + t \sigma$ decreases exponentially with $t$, where $\sigma$ is the standard deviation ($\sqrt{\mu}$ in our case). This suggests that the expected difference is $O(1)$ standard deviations, which is the claim of the next Lemma:

\begin{lemma}\label{lem:binWithoutMid}
For any $n \in \mathrm{N}$, $0<p\leq 1$, and $\mu=np$, we have that
$$\mu\left(1-\frac{e}{2\pi}\frac{\sqrt{1-p}}{\sqrt{\mu}}\right)
\leq \Exp{\min{\left(Bin(n,p), \mu\right)}}
\leq \mu\left(1-\frac{\sqrt{2\pi}}{e^2}\frac{\sqrt{1-p}}{\sqrt{\mu}}\right)\, .  $$
\end{lemma}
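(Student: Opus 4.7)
The plan is to reduce the stated two-sided bound on $\Exp{\min(Bin(n,p),\mu)}$ to a two-sided bound on the mean absolute deviation $\Exp{\size{X-\mu}}$ of $X \sim Bin(n,p)$, and then to establish the latter via the classical de Moivre identity combined with Stirling's formula. The reduction itself is immediate: from $\min(X,\mu) = X - (X-\mu)^+$, taking expectations and using $\Exp{X}=\mu$ gives $\Exp{\min(X,\mu)} = \mu - \Exp{(X-\mu)^+}$; and since $\Exp{X-\mu}=0$, the positive and negative parts of $X-\mu$ have equal expectation, so $\Exp{(X-\mu)^+} = \tfrac12 \Exp{\size{X-\mu}}$. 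Thus the lemma is equivalent to the two-sided estimate
$$\frac{2\sqrt{2\pi}}{e^2}\sqrt{\mu(1-p)} \;\leq\; \Exp{\size{X-\mu}} \;\leq\; \frac{e}{\pi}\sqrt{\mu(1-p)}.$$

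The second step is the classical de Moivre identity: with $m=\lceil np \rceil$ and $Y \sim Bin(n-1,p)$,
$$\Exp{\size{X-\mu}} \;=\; 2np(1-p)\cdot \prob{Y=m-1}.$$
One derives this by expanding $\Exp{(X-\mu)^+} = \sum_{k\ge m}(k-\mu)\prob{X=k}$, using $k\binom{n}{k}=n\binom{n-1}{k-1}$ to rewrite $\sum_{k\ge m} k\,\prob{X=k} = np\cdot\prob{Y\ge m-1}$, and combining with the relation $\prob{X\ge m} = p\,\prob{Y\ge m-1} + (1-p)\,\prob{Y\ge m}$, which collapses the telescoping remainder to exactly $np(1-p)\prob{Y=m-1}$.

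It then remains to estimate the point probability $Q:=\prob{Y=m-1}=\binom{n-1}{m-1}p^{m-1}(1-p)^{n-m}$. Because $m=\lceil np\rceil$, the evaluation point $m-1$ lies within $1$ of the mean $(n-1)p$ of $Y$, hence near the mode of the binomial. Applying the two-sided Stirling bound $\sqrt{2\pi k}(k/e)^k \le k! \le e\sqrt{k}(k/e)^k$ to each of the three factorials in $\binom{n-1}{m-1}$, and observing that the $(\cdot/e)^{(\cdot)}$ factors combine with $p^{m-1}(1-p)^{n-m}$ so that the exponential parts essentially cancel at the mean, yields $\frac{\sqrt{2\pi}}{e^2\,\sigma} \le Q \le \frac{e}{2\pi\,\sigma}$, where $\sigma := \sqrt{\mu(1-p)}$. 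Substituting this into $\Exp{\size{X-\mu}} = 2\sigma^2 Q$ (using $np(1-p)=\sigma^2$) yields exactly the desired constants $\tfrac{2\sqrt{2\pi}}{e^2}$ and $\tfrac{e}{\pi}$.

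The main technical obstacle is tracking how the Stirling constants $\sqrt{2\pi}$ and $e$ propagate through $\binom{n-1}{m-1}p^{m-1}(1-p)^{n-m}$ without introducing extra multiplicative slack, and in particular controlling the $O(1)$-size discrepancy between $m-1$ and $(n-1)p$ by a short local expansion of the binomial KL-divergence (or a direct log-concavity comparison between adjacent integers), so that no additional $e^{O(1)}$ factor leaks into the final constants. Boundary cases---$p=1$, where both sides of the inequality collapse to $\mu$ and $\sigma=0$, and very small $n$---should be verified separately but are routine.
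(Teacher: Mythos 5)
Your reduction ($\min(X,\mu)=X-(X-\mu)^+$, hence $\Exp{\min(X,\mu)}=\mu-\tfrac12\Exp{\size{X-\mu}}$) and the De Moivre identity $\Exp{\size{X-\mu}}=2np(1-p)\prob{Y=m-1}$ are correct, and in substance this is the paper's proof in different packaging: the paper evaluates $\Exp{\min(Bin(n,p),\mu)}$ directly with the same absorption identity $k\binom{n}{k}=n\binom{n-1}{k-1}$, uses a small coin-tossing coupling to collapse the result to $\mu\bigl(1-(1-p)\prob{Bin(n-1,p)=\mu}\bigr)$, and finishes with exactly the two-sided Stirling estimate of a central binomial point probability that you describe. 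The two routes differ only in whether one passes through the mean-absolute-deviation formula or computes $\Exp{\min}$ directly.

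The genuine gap is your final step for non-integer $np$. The claimed bound $\frac{\sqrt{2\pi}}{e^2\sigma}\le\prob{Y=m-1}$ with $m=\ceil{np}$ cannot be rescued by any local KL expansion, because it is false: for $n=1$, $p=0.1$ one has $\prob{Y=0}=1$ while $\frac{\sqrt{2\pi}}{e^2\sigma}\approx 1.13$; correspondingly the lemma's upper bound itself fails there, since $\Exp{\min(Bin(1,0.1),\,0.1)}=0.01$ whereas $\mu\bigl(1-\frac{\sqrt{2\pi}}{e^2}\sqrt{(1-p)/\mu}\bigr)\approx-0.002$. The statement is only true --- and the paper's own proof, which manipulates $(np)!$ and sums up to $\mu$, only applies --- when $\mu=np$ is an integer, which is the only case actually used (the application has $n=d/\gamma$, $p=\gamma$, $\mu=d$). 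Under that assumption your argument closes with no constant leakage: $m-1=np-1$ and $np(1-p)\prob{Bin(n-1,p)=np-1}=np(1-p)\prob{Bin(n,p)=np}$, a point probability exactly at the mean, where the exponential factors in Stirling cancel identically and the constants $\sqrt{2\pi}/e^2$ and $e/(2\pi)$ come out as stated (the boundary case $p=1$, i.e.\ $n-np=0$, is trivial). So you should state and use the integrality assumption rather than try to absorb the offset $\size{(m-1)-(n-1)p}$, which in general costs a factor of order $e^{-\Theta(1/\sigma^2)}$ that is not negligible when $\sigma$ is small.
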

\begin{proof}
\begin{eqnarray}
\lefteqn{\Exp{\min{\left(Bin(n,p), \mu\right)}}} \nonumber\\
&=& \sum_{k=0}^{n}{\min{(\mu, k)\cdot\BinProb{n}{p}{k}}}\nonumber\\
&=& \sum_{k=0}^{\mu}{k\BinProb{n}{p}{k}}
    ~+~ \mu\cdot\prob{Bin(n,p)\geq \mu +1}\nonumber\\
&=& \mu\sum_{k=1}^{\mu}{\frac{k}{\mu}\cdot\frac{n}{k}\BinProbExplit{n-1}{p}{k-1}{n-k}}p
    ~+~ \mu\cdot\prob{Bin(n,p)\geq \mu +1}\nonumber\\
%&=& \mu\sum_{k=0}^{\mu}{\frac{k}{\mu}\cdot\frac{n}{k}\BinProbExplit{n-1}{p}{k-1}{n-k}}p
%    ~+~ \mu\cdot\prob{Bin(n,p)\geq \mu +1}\nonumber\\
%&=& \mu\sum_{k=1}^{\mu}{\frac{np}{\mu}\cdot\BinProbExplit{n-1}{p}{k-1}{n-k}}
%    ~+~ \mu\cdot\prob{Bin(n,p)\geq \mu +1}\nonumber\\
&=& \mu\sum_{k=1}^{\mu-1}{\frac{np}{\mu}\cdot\BinProb{n-1}{p}{k}}
    ~+~ \mu\cdot\prob{Bin(n,p)\geq \mu +1}\nonumber\\
&=& \mu\cdot\prob{Bin(n-1,p)\leq \mu -1}
    ~+~ \mu\cdot\prob{Bin(n,p)\geq \mu +1}\label{eq:midBinTarget}
\end{eqnarray}

We describe a single experiment with two disjoint events $A$, and $B$, such that $\prob{A}=\prob{Bin(n-1,p)\leq \mu -1}$, and $\prob{B}=\prob{Bin(n,p)\geq \mu +1}$. This will be useful as then the probability of ($A$ or $B$) is simply the sum of these probabilities. The experiment
is to toss $n$ coins, where the probability of heads is $p$, \begin{itemize} \item Event $A$ occurs if amongst the first $n-1$ results there were no more than $\mu - 1$ heads, $\prob{A}=\prob{Bin(n-1,p)\leq \mu -1}$. \item Event $B$ occurs if in total, over $n$ coin tosses, there were at least $\mu + 1$ heads, $\prob{B}=\prob{Bin(n,p)\geq \mu +1}$. \item It is easy to see that if $A$ holds then $B$ cannot occur. Likewise, if $B$ holds then $A$ cannot occur. \end{itemize}

Thus,
\begin{eqnarray}\label{eq:midBinRightSide}
\lefteqn{\prob{A \vee B} = \prob{A} + \prob{B}} \nonumber \\ &=& \prob{Bin(n-1,p)\leq \mu -1} + \prob{Bin(n,p)\geq \mu +1}
\end{eqnarray}
On the other hand, the complimentary event to $A\vee B$, $\overline{A \vee B}$, occurs if during the first $n-1$ tosses there were exactly $\mu$ heads and the last coin toss gave tails. Therefore
\begin{eqnarray}
\prob{A \vee B} &=& 1 - \prob{~\overline{A \vee B}~} \nonumber
\\ &=& 1 - \prob{Bin(1, p) = 0} \cdot\prob{Bin(n-1, p) = \mu}\nonumber
\\ &=& 1- (1-p)\cdot\prob{Bin(n-1, p) = \mu}\label{eq:midBinLeftSide}
\end{eqnarray}
Substituting (\ref{eq:midBinLeftSide}) into (\ref{eq:midBinRightSide}) and the result into (\ref{eq:midBinTarget}) we obtain
\begin{equation}\label{eq:midBinLastSub}
\Exp{\min{\left(Bin(n,p), \mu\right)}} = \mu\big(1 - (1-p)\cdot\prob{Bin(n-1, p) = \mu}\big).
\end{equation}
To finish the proof we bound $(1-p)\cdot\prob{Bin(n-1, p) = \mu}$, recall that $\mu=np$:
\begin{eqnarray}
\lefteqn{(1-p)\cdot\prob{Bin(n-1, p) = \mu}} \nonumber
\\&=& (1-p)\BinProb{n-1}{p}{np} \nonumber
\\&=& \frac{(n-1)!}{(np)!\cdot(n-1-np)!}p^{np}(1-p)^{n-np}\nonumber
\\&=& \frac{n-np}{n}\cdot\frac{n!}{(np)!\cdot(n-np)!}\cdot p^{np}(1-p)^{n-np}\nonumber
\\&\geq& \frac{n-np}{n}\cdot\frac{\sqrt{2\pi}\cdot n^n\sqrt{n}}{e^n}~\nonumber
    \\ && \qquad\cdot\,\frac{e^{np}}{e\cdot(np)^{np} \cdot \sqrt{np}}
        \,\cdot\,\frac{e^{n-np}}{e\cdot (n-np)^{n-np} \cdot\sqrt{n-np}}\cdot p^{np}(1-p)^{n-np} \label{eq:stirlingLeft}
\\ &=& \frac{\sqrt{2\pi}}{e^2}
    \cdot\frac{n-np}{n}\cdot \frac{\sqrt{n}}{\sqrt{np} \cdot \sqrt{n-np}}
    \cdot\frac{n^n}{n^{np}\cdot p^{np}\cdot n^{n-np} \cdot(1-p)^{n-np}} \cdot p^{np}(1-p)^{n-np}\nonumber
\\ &=& \MacroMidUpperBound
    \cdot\frac{n-np}{n}\cdot \frac{\sqrt{n}}{\sqrt{np} \cdot \sqrt{n-np}}
    \nonumber
%\\ &=& \MacroMidUpperBound
%    \cdot \sqrt{\frac{1-p}{np}}\nonumber
\\ &=& \MacroMidUpperBound
    \cdot \frac{\sqrt{1-p}}{\sqrt{\mu}}\label{eq:binMidFinalLower}
\end{eqnarray}
where Inequality (\ref{eq:stirlingLeft}) follows from Stirling's lower bound on $n!$ in the numerator and Stirling's upper bound on $(np)!$ and $(n-np)!$ in the denumerator.

A similar proof, using Stirling's upper bound on $n!$ and Stirling's lower bound on $(np)!$ and $(n-np)!$ yields
\begin{equation}\label{eq:binMidFinalUpper}
(1-p)\cdot\prob{Bin(n-1, p) = \mu} \leq \MacroMidLowerBound\frac{\sqrt{1-p}}{\sqrt{\mu}}
\end{equation}
Assigning bound (\ref{eq:binMidFinalUpper}) and (\ref{eq:binMidFinalLower}) into equation (\ref{eq:midBinLastSub}) gives the statement of the lemma.
\qed\end{proof}

\end{document}